\documentclass[10pt]{article}

\usepackage[utf8]{inputenc}

\usepackage[T1]{fontenc} 
\usepackage{mathtools,amsmath,amssymb,amsfonts,amsthm,siunitx,bm,bbm}
\usepackage{moreverb,rotating,graphics,esint}
\usepackage{mathrsfs}      
\usepackage{enumitem} 
\usepackage{titlesec}
\usepackage{subfig}

\usepackage[a4paper, margin=2.5cm]{geometry}

\usepackage{dsfont}

\numberwithin{equation}{section}

\allowdisplaybreaks

\usepackage[colorlinks=true,linkcolor=blue,citecolor=blue,urlcolor=blue,breaklinks]{hyperref}

\usepackage{graphicx} 
\usepackage{mathtools}
\usepackage{mathabx}
\usepackage{bbm}
\usepackage{bigints}
\usepackage[dvipsnames]{xcolor}
\usepackage[toc,page]{appendix}
\usepackage{comment} 

\usepackage{multirow}

\usepackage{tikz}
\usetikzlibrary{decorations.pathreplacing}

\numberwithin{equation}{section}

\newtheorem{theorem}{Theorem}[section]
\newtheorem{proposition}[theorem]{Proposition}
\newtheorem{remark}[theorem]{Remark}
\newtheorem{lemma}[theorem]{Lemma}
\newtheorem{corollary}[theorem]{Corollary}

\newtheorem{assumption}[theorem]{Assumption}

\usepackage{bigints}
\DeclareMathOperator*{\fiint}{\ensuremath{\iint\text{\kern-1.36em{\raisebox{5.87pt}{\rotatebox{-93}{$\setminus$}}}}}}

\def\N{{\mathbb N}}
\def\Z{{\mathbb Z}}
\def\R{{\mathbb R}}
\def\C{{\mathbb C}}

\def\eps{\varepsilon}

\def\P{{\mathbb P}}
\def\1{{\mathds{1}}}

\def\cH{{\mathcal H}}

\def\cN{{\mathcal N}}

\def\cS{{\mathcal S}}
\def\cT{{\mathcal T}}

\def\cB{{\mathcal B}}

\def\cU{{\mathcal U}}

\def\cO{{\mathcal O}}

\def\ba{{\mathbf a}}

\def\bk{{\mathbf k}}

\def\bx{{\mathbf x}}

\def\bA{{\mathbf A}}

\def\bG{{\mathbf G}}
\def\bK{{\mathbf K}}

\def\bX{{\mathbf X}}

\def\j{{j_0}} 

\def\Op{\mathrm{Op}_\eps}

\newcommand{\Supp}{{{\rm Supp}}}

   \DeclarePairedDelimiterX\Set[1]\{\}{%
      
      #1
}

\usepackage[style=numeric-comp,maxbibnames=9,natbib=true]{biblatex}
\AtEveryBibitem{%
  \clearfield{issn} 
  \clearfield{doi} 

  \ifentrytype{online}{}{
    \clearfield{url}
  }
}

\usepackage{hyperref}
\title{Approximate eigenfunctions for some aperiodic crystals}
\author{
  Long Meng\footnote{\textsc{%
      Long Meng,
      Center for Interdisciplinary Applied Mathematics \& Institute of Fundamental and Transdisciplinary Research,
      Zhejiang University,
      China
    }%
    (\texttt{\href{mailto:longmeng@zju.edu.cn}{longmeng@zju.edu.cn}})}
}
\date{}

\begin{document}

\maketitle

\begin{abstract}
    In this paper, we consider a broad class of continuum two-scale Hamiltonians
\begin{align*}
    H_\eps:=T(-i\nabla_x+\bA(x,\eps x))+V(x,\eps x),\qquad x\in \R^d
\end{align*}
where $T$ represents either a Dirac operator or a Schr\"odinger operator, and $x\mapsto \bA(x,X)$ and $x\mapsto V(x,X)$ are $\mathbb L$-periodic with respect to some lattice $\mathbb L\subset\R^d$. No periodicity assumption is imposed on the second variable $X$.

Let 
\begin{align*}
   \R^d\times \R^d\ni   (k,X) \mapsto  h(k,X):=T(-i\nabla_x+k+\bA(x,X))+V(x,X)
\end{align*}
be a family of operators acting on $L^2(\mathbb{R}^d/\mathbb{L})$ with periodic boundary conditions. We assume only local spectral information near one point $(k_0,X_0)$: an
isolated $J$-dimensional Bloch bands at energy $e_0$, possibly with internal crossings, and a Hermitian matrix-valued homogeneous function of degree $m\in\{1,2\}$. From these data we derive an effective Hamiltonian $\mathfrak{h}$. Then every sufficiently localized eigenpair $(\vec v,\mu)$ of $\mathfrak{h}$ gives an approximate eigenfunction $\Theta_\eps$ such that for $\eps$ small enough,
\[
\|\Theta_\eps\|_{L^2(\R^d)}=|\Omega|^{-1/2}+O(\eps^{1/2}),\qquad
 \|(H_\eps-e_0-\eps^{m/2}\mu)\Theta_\eps\|_{L^2(\R^d)}
 =O(\eps^{m/2+1/4}).
\]
In particular, for $m=2$, the effective Hamiltonian contains an explicit matrix correction
generated by coupling to the complementary spectrum of $h(k,X)$ away from $e_0$. We verify the abstract
hypotheses in several distinct settings and obtain, at the two-scale aperiodic
level, harmonic-oscillator, Landau--Schr\"odinger, and Landau--Dirac effective
Hamiltonians.  We also prove the existence of almost-flat Bloch bands for commensurate supercell approximation for some quasiperiodic models.

\end{abstract}
\section{Introduction}

Aperiodic solids, including quasicrystals, strained crystals, and incommensurate multilayers, may retain long-range order while lacking the translation symmetry of an ordinary crystal.  The loss of a common lattice is not merely a geometric inconvenience.  In a periodic crystal, translation symmetry decomposes the
Hamiltonian into independent Bloch fibers. This Bloch decomposition converts an infinite-space spectral problem into a family of eigenvalue problems posed on a single periodic unit cell, making the electronic band structure accessible to both rigorous analysis and efficient numerical computation. The absence of a common translation lattice prevents the global application of Bloch theory and therefore eliminates its usual simplification of the spectral problem. Consequently, the electronic states cannot be reconstructed by solving a fixed-size eigenvalue problem; both their
analytical description and their numerical computation become substantially more difficult.

In this paper, we consider the following family of continuum
two-scale self-adjoint operators on
$L^2(\mathbb R^d;\mathbb C^n)$ for some $n\in \N^+$:
\begin{equation}\label{eq:intro-H}
    H_\eps
    :=
    T\bigl(-i\nabla_x+\bA(x,\eps x)\bigr)
    +V(x,\eps x),
    \qquad
    x\in\mathbb R^d,
    \qquad
    0<\eps\ll1.
\end{equation}
Here,
\begin{itemize}

\item
The external potentials satisfy
\[
    V(x,X)\in C^\infty(\mathbb R^d\times\mathbb R^d;\mathbb R),
    \qquad
    \bA(x,X):=(A_1(x,X),\ldots,A_d(x,X))^{\rm T}
    \in C^\infty(\mathbb R^d\times\mathbb R^d;\mathbb R^d).
\]
They are $\mathbb L$-periodic in the first variable; namely,
\[
    \bA(x+R,X)=\bA(x,X),
    \qquad
    V(x+R,X)=V(x,X)
\]
for all $R\in\mathbb L$ and
$(x,X)\in\mathbb R^d\times\mathbb R^d$, where
$\mathbb L\subset\mathbb R^d$ is the lattice defined in
\eqref{def:lattice} below. Here $x$ is the atomic-scale variable and
$X$ is the mesoscopic-scale variable. No periodicity assumption is
imposed on $X$.

\item
The kinetic-energy operator is either of magnetic Schr\"odinger type,
\[
    T\bigl(-i\nabla_x+\bA(x,\eps x)\bigr)
    :=(-i\nabla_x+\bA(x,\eps x))^2=
    \sum_{j=1}^d
    \bigl(-i\partial_{x_j}+A_j(x,\eps x)\bigr)^2,
\]
or of magnetic Dirac type. In the Dirac case, we restrict to
$d=2$ and $n=2$, and set
\[
    T\bigl(-i\nabla_x+\bA(x,\eps x)\bigr)
    :=
    \pmb{\sigma}\cdot
    \bigl(-i\nabla_x+\bA(x,\eps x)\bigr)
    +M\sigma_3,
    \qquad M\in\mathbb R,
\]
where $\pmb{\sigma}:=(\sigma_1,\sigma_2)$ and
\[
    \sigma_1:=
    \begin{pmatrix}
        0&1\\
        1&0
    \end{pmatrix},
    \qquad
    \sigma_2:=
    \begin{pmatrix}
        0&-i\\
        i&0
    \end{pmatrix},
    \qquad
    \sigma_3:=
    \begin{pmatrix}
        1&0\\
        0&-1
    \end{pmatrix}.
\]
The cases $M=0$ and $M\neq0$ correspond to the massless and massive
Dirac operators, respectively.
\end{itemize}

This formulation covers a periodic medium subject to a slowly varying electric or magnetic field, as well as doped or defective crystals whose impurities or defects vary on a mesoscopic scale. It also
includes continuum models in which the slow modulation is induced by strain, relative displacement, or a second lattice.  In the last case the slow coefficients can themselves be periodic, but with a lattice incommensurate with $\mathbb L$, see e.g.,
\cite{bistritzer2011moire,cances2023simple,cances2023semiclassical,
sinner2023strain,cances2025numerical,garrigue2026effectiveoperatorsgraphenemonolayer}.

Although $H_\eps$ is not globally periodic,  near the mesoscopic position $x=\eps^{-1} X$ this operator can be approximately described by the periodic Hamiltonian
\begin{align*}
    T(-i\nabla_x+\bA(x,X))+V(x,X)
\end{align*}
which can be decomposed by the Bloch transform (see \eqref{eq:Bloch_1}) into the family of Bloch fiber Hamiltonians
\begin{equation}\label{eq:intro-h}
 h(k,X):=T(-i\nabla_x+k+\bA(x,X))+V(x,X)
\end{equation}
acts on $L^2_{\rm per}:=L^2_{\rm per}(\Omega;\C^n)$, defined by \eqref{eq:L2-per}, with periodic boundary conditions. Indeed,  the Bloch transform represents $H_\eps$ as the Weyl quantization of the operator-valued symbol $h(k,X)$ (see \eqref{op-h}). Unlike the usual Bloch decomposition of a periodic Hamiltonian, however, this representation couples the variables $k$ and $X$ and therefore does not simplify $H_\eps$.

In this paper, our aim is to address the following question: when $H_\eps$ has an approximate eigenfunction $\Theta_\eps\in L^2(\R^d;\C^n)$ in the following sense
\begin{align}\label{eq:1.2}
    \frac{\|(H_\eps-e_0-\mu_\eps)\Theta_\eps\|
    _{L^2(\R^d;\C^n)}}{\|\Theta_\eps\|_{L^2(\R^d;\C^n)}}\approx 0,\qquad \eps\ll 1.
\end{align}
More precisely, we try to give a new framework to study this type of problem, and we try to establish the connection between $H_\eps$ and some effective Hamiltonian frequently used in physics. In line with the methodological nature of this work,  we only focus on some simple cases (e.g., homogeneous symbols of degree $m=1,2$ in Assumption \ref{ass:bandstructure-nondegenerate}) relevant to some well known effective Hamiltonians used in physics such as Landau-Dirac/Schr\"odinger effective Hamiltonian quantum harmonic oscillator effective Hamiltonian. Then we also prove the almost flat-band theory for some supercell approximations of quasi-periodic crystals.

Finally, we shall point out that as noted in Remark \ref{rem:extension}  the framework is flexible and can be extended to more general effective symbols and models, and we will use this framework to study more complicated problems later.

\subsection{Local spectral hypothesis and the main results}

We now describe the main results. Assume that $e_0$ belongs to an
isolated $J$-dimensional eigenvalue of $h(k_0,X_0)$. The local
spectral hypothesis, Assumption
\ref{ass:bandstructure-nondegenerate}, requires an orthonormal basis $\vec{w}$ of $ {\rm Ker}(h(k_0,X_0)-e_0)$ and a $J\times J$ Hermitian homogeneous matrix-valued function $f_m^{\rm eff}(k,X)$ of degree $m\in \N^+$ with $J=\dim({\rm Ker}(h(k_0,X_0)-e_0))$.

The central statement is a general novel operator-valued reduction, Theorem
\ref{th:quadratic}.  If $\vec v_\eps$ is a localized function satisfying Assumption \ref{ass:localization} and $\vec u$
is a $\mathbb L$-periodic function, then
\begin{equation}\label{eq:intro-reduction}
 \big\|(H_\eps-e_0)\Phi_\eps(\vec u\otimes\vec v_\eps)
 -\Phi_\eps\big(\mathfrak h_\eps^{\rm eff}
   (\vec u\otimes\vec v_\eps)\big)\big\|_{L^2(\R^d;\C^n)}
 =O(\eps^{m/2+1/4})
\end{equation}
where $\Phi_\eps(\bullet)$,  defined by \eqref{eq:Phi-eps}, is a linear mapping; the operator $\mathfrak h_\eps^{\rm eff}$, defined by \eqref{h-eff-total}, is entirely constructed from the local information of $h(k,X)$ around $(k_0,X_0)$. 

Theorems \ref{th:m} turn this reduction into localized approximate
eigenfunctions.  If $(\vec v_*,\mu_*)$ is a sufficiently localized eigenpair of
the quantized effective symbol $\mathfrak{h}$ (i.e. Assumption \ref{ass:m}), then the constructed state, centered near
$x=\eps^{-1}X_0$ and localized on the scale $\eps^{-1/2}$, satisfies
\begin{equation}\label{eq:intro-quasimode}
 \|\Theta_\eps\|_{L^2}=|\Omega|^{-1/2}+O(\eps^{1/2}),\qquad
 \big\|(H_\eps-e_0-\eps^{m/2}\mu_*)\Theta_\eps\big\|_{L^2}
 =O(\eps^{m/2+1/4}).
\end{equation}
where
\begin{align*}
    \Theta_\eps(x):=\sum_{j=0}^m \eps^{\frac{j}{2}} \Phi_\eps(U_\eps^{(j)}(\vec{w}\otimes \vec{v}_*))
\end{align*}
and $U_\eps^{(0)}$, $U_\eps^{(1)}$ and $U_\eps^{(2)}$ are defined by \eqref{eq:U0-eps}, \eqref{eq:U1-eps} and \eqref{eq:U2-epsilon} respectively. Depending on the choice of $m\in \{1,2\}$, the effective Hamiltonian is
\begin{align*}
    \mathfrak{h}=\begin{cases}
        \mathcal{F}^{-1}{\rm Op}_1(f^{\rm eff}_1)\mathcal{F},& m=1\\
        \mathcal{F}^{-1}{\rm Op}_1(f^{\rm eff}_2)\mathcal{F}+\widetilde{\mathcal{M}},& m=2
    \end{cases}
\end{align*}
where the meaning of notation $ \mathcal{F}^{-1}{\rm Op}_1(f)\mathcal{F}$ is explained in \eqref{eq:F-Opf-F}; and the $J\times J$ matrix $\widetilde{\mathcal M}$, defined in \eqref{eq:M-tilde}, represents the coupling of the Bloch bands near $e_0$ and the ones away from $e_0$. For $J=1$, $\widetilde{\mathcal M}$ is a scalar energy shift; for a massive Landau--Dirac model it agrees with the corresponding Zeeman effect, see Section \ref{sec:shift-of-energy} below. To the best of our knowledge, this correction has not previously been studied in this explicit form for the present two-scale problem.

\subsection{Applications and almost-flat bands}

Even though we only focus on some simple cases in this paper, our result can already be used to explain several different phenomena in physics:
\begin{itemize}
    \item In Section \ref{sec:quantum-harmonic-oscillator}, we further study the relationship between some aperiodic crystals and quantum harmonic oscillators effective Hamiltonian $\mathfrak{h}$, see Theorem \ref{th:harmonic-oscillator}. In this case, Assumption \ref{ass:bandstructure-nondegenerate} is easily verified with $J=1$ and $m=2$. 
\item Our next example is relevant to the Landau-Schr\"odinger effective Hamiltonian used in quantum hall effect. Within a very weak magnetic field, Theorem \ref{th:quantum-hall} states that one can obtain quantized Landau level generated by Landau-Schr\"odinger operator. This is relevant to recent experimental discovery of emergent quantum hall effects below $50 {\rm mT}$ in a two-dimensional topological insulator \cite{week-magnetic-field}. 
\item The last example is the emergence of the Landau-Dirac effective operator in unconventional quantum hall effect for honeycomb materials. In this case, Assumption \ref{ass:bandstructure-nondegenerate} is satisfied with $m=1$ and $J=2$, due to the conic band-structure in graphene \cite{Fefferman-Weinstein-Honeycomb-2012}. Theorem \ref{th:quantum-spin-hall} then justifies the Landau-Dirac operator used in quantum hall effect within the weak magnetic field case.
\end{itemize}
We also emphasize that Theorem \ref{th:m} is not restricted to the quantum harmonic oscillator or Landau-Schr\"odinger/Dirac operator. Whenever the local Bloch cluster admits an effective Hamiltonians $\mathfrak{h}$ with a sufficiently localized eigenfunction, the same procedure produces a physical-space quasimode of the original two-scale Hamiltonian. This yields further effective Hamiltonians, which we do not develop here in order to keep the applications concise.

In addition to above applications, we also use Theorem \ref{th:m} to study the almost flat-band properties in incommensurate crystals: We assume the potential $\bA$ and $V$ are also $\mathbb L$-periodic w.r.t. $X$ and $\eps=p/q\in\mathbb Q$. In physics, this periodic Hamiltonian is used to approximate quasi-periodic crystals, see e.g., \cite{carr2020duality,naik2018ultraflatbands,kariyado2019flat}. In this case, $H_\eps$ is $q\mathbb L$-periodic, and $H_\eps$ can be decomposed into a family of operators $h_q(k)$ defined by \eqref{eq:h-q} by Bloch transform \eqref{eq:Bloch-transform-q} on $q\mathbb L$. Then in Section \ref{sec:almost-flat-band}, Theorem \ref{th:almost-flat} gives
    \begin{align}\label{eq:1.4}
        \sup_{k\in\Omega_q^*}
 {\rm dist}\big(e_0+\eps^{m/2}\mu_*,\sigma(h_q(k))\big)
 =O(\eps^{m/2+1/4})
    \end{align}
where $\Omega^*_q$, defined by \eqref{eq:unit-cell-q}, is the first Brillouin zone for the lattice $q\mathbb L$.

Recently, (almost) flat-band property has been studied for Hamiltonians of the form
\begin{align*}
  T(-i\nabla+\bA(\eps x))+V(\eps x)
\end{align*}
or equivalent Hamiltonians after a scaling $\eps x\mapsto x$, see \cite{becker2024semiclassical,Mathmagicangles}. Compared with existing results, our result is somewhat less sharp.  This is due to Assumption \ref{ass:bandstructure-nondegenerate} and the complexity of our two-scale operator $H_\eps$. In Assumption \ref{ass:bandstructure-nondegenerate}, we restrict ourselves to the leading order asymptotic behavior of the eigenvalues and eigenfunctions of $h(k,X)$ near $e_0$ and $(k_0,X_0)$. To obtain a sharper estimate, one would need, if possible, a higher-order asymptotic expansion of the eigenvalues and eigenfunctions. In this paper, we focus on establishing a general framework for the Hamiltonian $H_\eps$, the study of the flat-band property and other spectrum properties of specific Hamiltonian will be addressed later in future work. 

The oscillation of the density of state in \cite{cances2025numerical} is relevant to \eqref{eq:1.4} and the case $m=2$ with $\mathfrak{h}$ being a quantum harmonic oscillator operator given in Section \ref{sec:quantum-harmonic-oscillator}. Taking the energy shift into account, one can get a better approximation of the density of state:
\begin{figure}[htbp]
  \centering
   \subfloat[]{\includegraphics[scale=0.08]{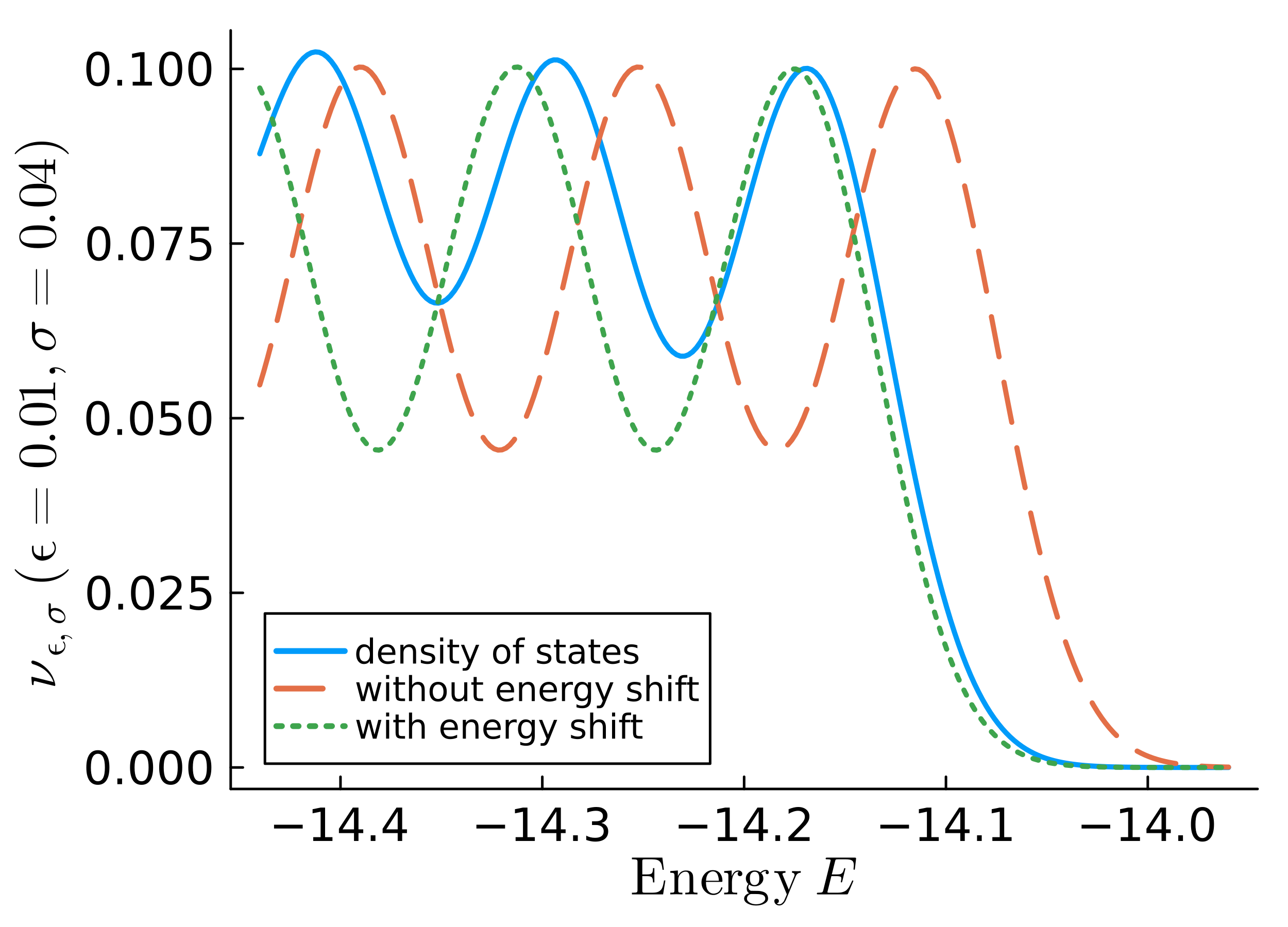}}
    \subfloat[]{\includegraphics[scale=0.08]{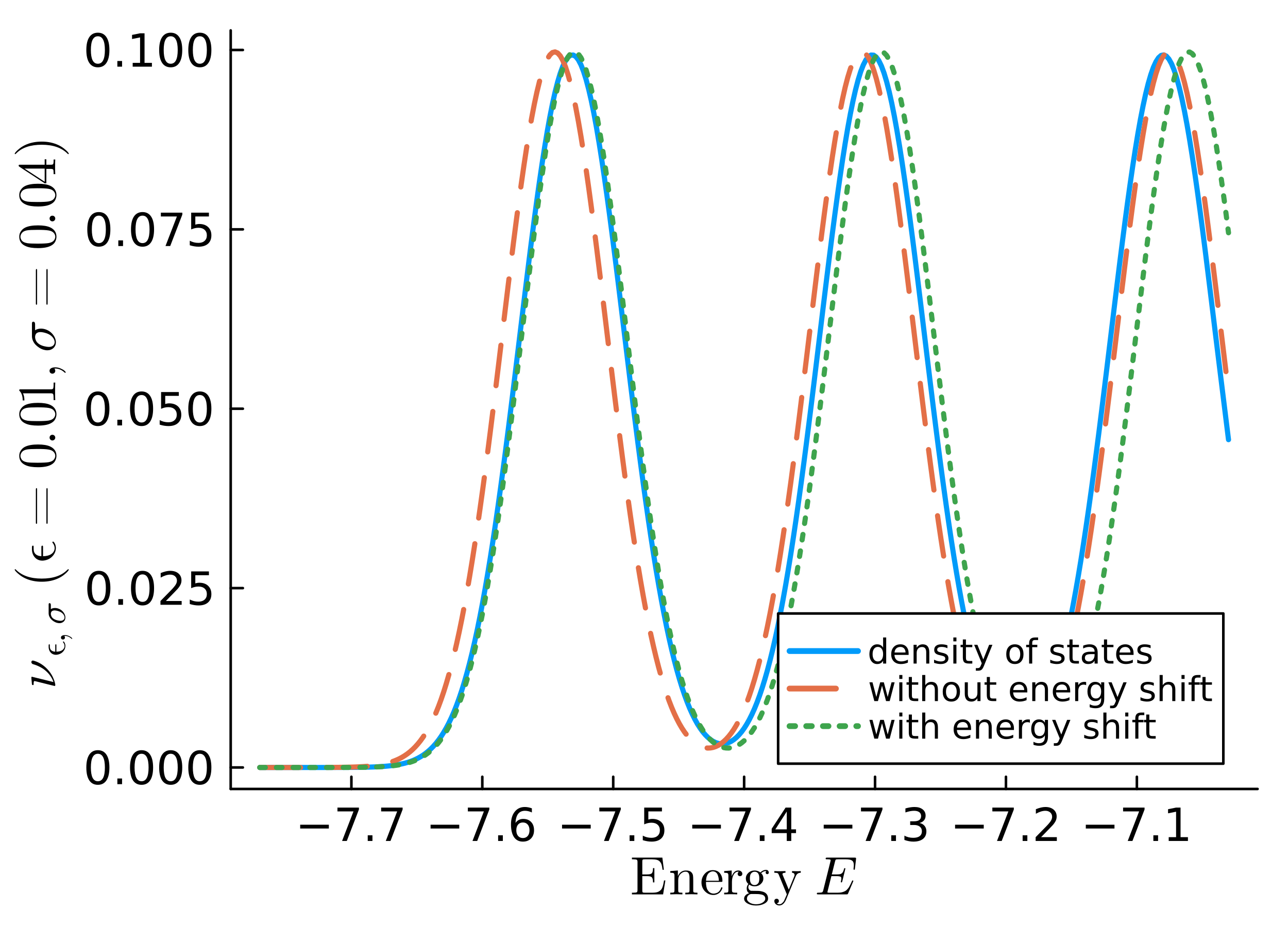}}
  \caption{{Approximations of oscillation of density of state.} The solid blue line represents the exact density of states of a Hamiltonian $H_\eps$ (defined in \cite[Section 3]{cances2025numerical}.) with $\eps=0.01$ by using the method \cite{wang2025convergence}, while the dashed red line corresponds to the quantum harmonic oscillator approximation without energy shift used in \cite{cances2025numerical}. Taking the energy shift $\widetilde{\mathcal{M}}$ into account yields the dashed green line \cite{xue}, which offers a more accurate result for oscillations close to the edge $e_0\approx -14.046$ (left) or $e_0\approx -7.662$ (right).}\label{fig:dos}
\end{figure}

\subsection{Related literature}
The Hamiltonian \eqref{eq:intro-H} lies at the intersection of semiclassical analysis, homogenization, and the spectral theory of aperiodic operators.

In mathematics, the operator $H_\eps$ has been extensively studied via semiclassical methods.
Effective Hamiltonians for slowly perturbed periodic operators, adiabatic theory, and semiclassical calculus for operator-valued symbols have produced Peierls--Onsager reductions, effective band dynamics, and effective-mass equations \cite{gerard1991mathematical,dimassi1993developpements,
panati2003effective,teufel2003adiabatic,stiepan2013semiclassical,
cornean2026fresh,gerard1997homogenization,chabu2022effective,cances2023semiclassical}. Although the proof of this paper uses the common language of operator-valued semiclassical calculus, it does not proceed through a Grushin parametrix or other arguments as above references. Instead, it develops a direct local reconstruction argument based on finite local Bloch spectrum. Our theorem extracts a different and previously unstated consequence from the two-scale operator: a local construction that turns spectral and projector data of $h(k,X)$ near one point $(k_0,X_0)$ into an $L^2(\R^d)$-localized approximate eigenfunctions of the full aperiodic operator $H_\eps$, even when no global band reduction is available.

The connection with homogenization becomes apparent after a dilation $x\mapsto \eps^{-1}x$. This form contains rapidly oscillating coefficients and can therefore be
compared with homogenization problems, especially the periodic one
\cite{Gregoire-homogenization,Gregoire-Andrey,Gregoire-Carlos,
Lin-Shen-criticalsets,Kenig-Lin-Shen-HomogenizationNeumann,
Kenig-Lin-Shen-periodichomogenization,zhang-homogenization,
cances-garrigue-david-homogenization}. Classical homogenization seeks a global macroscopic equation by averaging cell problems. Here, instead, the state is concentrated near $x=\eps^{-1}X_0$ on the intermediate scale $\eps^{-1/2}$, and the effective Hamiltonian is obtained by quantizing the local spectral geometry of $(h(k,X))$. The $\sqrt{\eps}$-scaling is a consequence of the homogeneity hypothesis in Assumption \ref{ass:bandstructure-nondegenerate} (see Lemma~\ref{lem:scaling}), and would change for other cases. 

When $\eps=0$, the problem reduces to ordinary Bloch theory; if $\bA(x,X)=\bA(X)$ and $V(x,X)=V(X)$, it reduces to a standard semiclassical problem
\cite{Fefferman-Weinstein-Honeycomb-2012,
Fefferman-Weinstein-Wavepackets-2014,fefferman2018honeycomb,
shapiro2022tight,Barry-1983-semiclassical,simon1984semiclassical,helfferanalyseI,helfferanalyseII,helfferanalyseIII,helffer1984puits,dencker2003pseudospectra}. The two-scale Hamiltonian $H_\eps$ combines these regimes, especially \cite{Barry-1983-semiclassical,Fefferman-Weinstein-Wavepackets-2014}, but cannot be treated by either theory separately. Bloch transform represents $H_\eps$ as the Weyl quantization of the infinite-dimensional operator-valued symbol $(h(k,X))$ in which the Bloch momentum $k$ and the slow position $X$ are coupled. Consequently, one must simultaneously control the fast oscillations of the Bloch eigenfunctions, the localization of the slowly varying part, the variation of the spectral projector in both $k$ and $X$, and transitions between the selected Bloch bands and their infinite-dimensional complement.

The almost Mathieu operator and related quasiperiodic lattice models admit deep global results on Cantor spectrum, spectral type, and localization
\cite{simon1982almost,jitomirskaya1999metal,avila2009ten,
bourgain2002anderson,klein2007mott,goldstein2001holder,
jitomirskaya2018universal}.The operator $H_\eps$ presents different analytical difficulties: it is an unbounded multidimensional continuum operator, whose fast periodic structure generates infinitely many Bloch bands that are coupled through the slow variable $X$.  A comparably explicit global classification of its spectrum is therefore unlikely to be available at this level of generality. Instead, the present work develops a paradigm that uses only local Bloch spectral data to construct an effective Hamiltonian governing the relevant localized states.

Finally, the direct computation of $H_\eps$ is notoriously difficult in physics for the same structural complexity. Projection, plane-wave, and momentum-space methods nevertheless provide the density of states and approximate states for important classes of incommensurate systems
\cite{jiang2025projection,wang2025convergence,cances2025numerical}. Near a prescribed energy, our reduction requires only local Bloch data from a microscopic unit cell and the solution of a $J$-component effective Hamiltonian. It therefore remains applicable without requiring global symmetry or periodicity of the slow coefficients, a regime that lies outside the scope of existing numerical methods. Figure~\ref{fig:dos} illustrates its consistency with existing numerical computations; a systematic comparison of the approximate eigenfunctions with the projection method of \cite{jiang2025projection} is in preparation.

\medskip

{\bf Organization of the paper.} This paper is organized as follows. In section \ref{sec:2}, we first introduce the Bloch transformation and some notation used throughout the paper. In Section \ref{sec:3}, we introduce our main assumption \ref{ass:bandstructure-nondegenerate}, and states our main results. Then in Section \ref{sec:effective-hamiltonian} we introduce and study the main effective Hamiltonian $\mathfrak{h}_{\eps}^{\rm eff}$; then we use it and Theorem \ref{th:quadratic} to prove our main result, i.e., Theorem \ref{th:m}; the proof of Theorem \ref{th:quadratic} is given in Section \ref{sec:proofofquadratic}. In Sections \ref{sec:quantum-harmonic-oscillator}-\ref{sec:quantum-spin-hall}, we apply Theorem \ref{th:m} to aperiodic crystals relevant to quantum harmonic oscillator effective Hamiltonian, Landau-Schr\"odinger effective Hamiltonian, and Landau-Dirac effective Hamiltonian. Finally we study the almost flat-band properties of some periodic crystals in Section \ref{sec:almost-flat-band}.

\section{Bloch transform and Notation}\label{sec:2}

In this section, we introduce the Bloch transform and some notation that will be used throughout the paper.
\subsection{Bloch transform}
We first recall the lattice and dual-lattice notation, then compare the direct integral decomposition of the periodic operator $H_0$ with the operator-valued Weyl representation of $H_\eps$.

\medskip

\noindent{\bf Lattice and dual lattice.} A lattice $\mathbb L\subset\R^d$ is the set of integer linear combinations of $d$ linearly independent vectors $\ba_1,\ldots,\ba_d\in\R^d$:
\begin{align}\label{def:lattice}
    \mathbb L=\sum_{j=1}^d \ba_j \Z=\{R\in \R^d; \; R=\sum_{j=1}^d \ba_j \gamma_j,\; \gamma_j\in \Z\}.
\end{align}
The dual lattice $\mathbb L^*\subset \R^d$ of $\mathbb L$ is defined by
\begin{align}
    \mathbb L^*:=\{G\in \R^d;\;\forall R\in \mathbb{L},\, e^{iG\cdot R}=1\}=\{G\in \R^d;\;\forall R\in \mathbb{L},\, G\cdot R\in 2\pi \Z\}.
\end{align}
The corresponding Wigner-Seitz cell $\Omega$ and the first Brillouin zone $\Omega^*$ can be identified with the tori
\begin{align}\label{eq:Omega}
\Omega := \mathbb{R}^d/\mathbb{L},\qquad \mbox{with}\quad \overline{\Omega}=\{x\in \R^d; |x|\leq {\rm dist}(x,\mathbb L\setminus \{0\})\}, 
\end{align}
and
\begin{align}\label{eq:Omega*}
   \Omega^* := \mathbb{R}^d/\mathbb{L}^*,\qquad \mbox{with}\quad \overline{\Omega^*}=\{k\in \R^d; |k|\leq {\rm dist}(k,\mathbb L^*\setminus \{0\})\}
\end{align}
respectively. Here and below, ``${\rm dist}$'' denotes the distance between a point $x\in \R^d$ and a set $A\subset \R^d$:
\begin{align*}
    {\rm dist}(x,A):=\inf_{y\in A}|x-y|.
\end{align*}

\medskip

\noindent{\bf Bloch transform.} For the periodic operator $H_0$, the Bloch
transform (also called the Zak transform) is the unitary map
\begin{equation}\label{eq:Bloch_1}
\cU : L^2(\R^d;\mathbb{C}^n) \to \cH:=L^2_{\rm qp}(\Omega^*;L^2_{\rm per})
\end{equation}
such that
\begin{align}
&\forall u \in C^\infty_{\rm c}(\R^d;\C^n), \quad (\cU u)_k(x) = \sum_{R \in \mathbb{L}} u(x+R) e^{-ik \cdot (x+R)},  \\
& L^2_{\rm per}(\Omega;\C^n):=\{f \in L^2_{\rm loc}(\mathbb{R}^d;\C^n) \; |  \; \forall R\in\mathbb{L}, 
\; f(x-R)= f(x) \mbox{ for a.a. } x \in \R^d\}, \label{eq:L2-per}\\
&L^2_{\rm qp}(\Omega^*;L^2_{\rm per}):= \{ u_\bullet \in L^2_{\rm loc}(\R^d;L^2_{\rm per}) \; | \; \forall G\in\mathbb{L}^*,  \; u_{k-G} = \tau_G u_k \mbox{ for a.a. } k \in \R^d \},
\end{align}
where $\tau_G$ is the unitary operator on $L^2_{\rm per}$ defined by $(\tau_Gf)(x)= e^{iG\cdot x}f(x)$.

The space $L^2_{\rm per}(\Omega;\C^n)$ is
endowed with the inner product
\begin{align*}
    \langle u,v \rangle_{L^2_{\rm per}} :=\int_{\Omega} u^*(x)v(x) \, dx 
\end{align*}
and the space $\cH$ with the inner product
$$
\langle u_\bullet ,v_\bullet \rangle_\cH :=\fint_{\Omega^*} \langle u_k, v_k \rangle_{L^2_{\rm per}}  \, dk.
$$
We also set, for all $s\in \N$,
$$
H^s_{\rm per}:=\{ u \in L^2_{\rm per} \; | \; \partial_x^\alpha  u \in L^2_{\rm per}, \, \forall \alpha\in \N^d \mbox{ s.t. } |\alpha|\leq s \}
$$
endowed with its natural inner product. 

Since $H_0$ is $\mathbb L$-periodic, it can be decomposed by the Bloch transform $\cU$: 
\begin{align}\label{op-h0}
    H_0= \cU^{-1} \left( \fint_{\Omega^*}^\oplus  h(k,0) \, dk \right) \cU,
\end{align}
where $h(k,0)$ is the operator on $L^2_{\rm per}$ given by
$$
h(k,0):= T(-i\nabla_x+k+\bA(x,0))+V(x,0).
$$

\medskip

\noindent{\bf Weyl quantization on Bloch transform.} When
$\eps\neq0$, the operator $H_\eps$ need not be $\mathbb L$-periodic and hence does not admit a direct integral decomposition. It can nevertheless be represented after the Bloch transform as an operator-valued pseudodifferential operator; see \cite{panati2003effective,cances2023semiclassical}:
\begin{align}\label{op-h}
    H_{\eps} = \cU^{-1} h(k,i\eps\nabla_k) \cU
\end{align}
where
\begin{itemize}
\item $h$ is the operator-valued symbol on $\R^d\times \R^d$ such that for all $(k,X) \in \R^d \times \R^d$,  $h(k,X)$ is the self-adjoint operator on $L^2_{\rm per}$ with domain $H^{r}_{\rm per}$ for some $r\in\{1,2\}$: for all $u \in H^r_{\rm per}$ 
\begin{equation}\label{eq:def_hdeps}
[h(k,X) u](x) := T \left( -i \nabla_x+k +\bA(x,X)\right) u(x)+ V(x,X) u(x);
\end{equation}
\item The pseudo-differential operator $h(k,i\eps\nabla_k )$ is defined by Weyl quantization:
\begin{align}
    h(k,i\eps\nabla_k ):={\rm Op}_\epsilon(h)(k,i\eps\nabla_k );
\end{align}
\item  For a suitable operator-valued symbol $a(k,X)$, ${\rm Op}_\epsilon(a)$ is defined by the following Weyl quantization rule:
\begin{equation}\label{eq:Weyl_quantization}
[{\rm Op}_\epsilon(a) \phi]_k (r) = \frac{1}{(2\pi\epsilon)^d} \int_{\R^d \times \R^d} \left[a\left( \frac{k+k'}2,X \right) \phi_{k'} \right](r) \; e^{-i \frac{(k-k')\cdot X}\epsilon} \, dk' \, dX.
\end{equation}
\end{itemize}
Unlike the periodic decomposition \eqref{op-h0}, the Weyl operator \eqref{op-h} is not diagonal in $k$: the Bloch electrons interact with each other through a pseudo-differential operator in the aperiodic crystals $H_\eps$, and the localized approximate eigenfunctions of $H_\eps$ are thus a consequence of the collective behavior of Bloch electrons.

\subsection{Notation}
For brevity, throughout the paper we write $L^2_{\rm per}:=L^2_{\rm per}(\Omega;\C^n)$ and $L^2(\R^d):=L^2(\R^d;\C)$.

\medskip

\noindent{\bf Vector-valued functions.} For $J\in\N^+$, and for
$u_j\in L^2_{\rm per}$ and $v_j\in L^2(\R^d)$, $j=1,\cdots, J$, we write
\begin{align*}
    \vec{u}:=(u_1,\ldots,u_J)^T\in L^2_{\rm per}(\R^d;\C^J\otimes \C^n),
\end{align*}
and
\begin{align*}
    \vec{v}:=(v_1,\ldots,v_J)^T \in L^2(\R^d;\C^J).
\end{align*}
Their product is
\begin{align}\label{eq:vector-product}
    \vec{u}^T(x) \vec{v}(y)=\sum_{j=1}^J u_j(x)v_j(y)\in  \C^n.
\end{align}
We use the following componentwise conventions:
\begin{itemize}
\item we set $\vec{u}\in L^2_{\rm per}$ if $u_j\in L^2_{\rm per}$ for any $1\leq j\leq J$, and set $\vec{v}\in L^2(\R^d)$ if $v_j\in L^2(\R^d;\C)$ for any $1\leq j\leq J$;
\item for any $\vec{w}, \vec{u}\in L^2_{\rm per}$,  we set
\begin{align*}
    \left<\vec{w}, \vec{u}\right>_{L^2_{\rm per}}=\sum_{j=1}^J \left<w_j,u_j\right>_{L^2_{\rm per}};
\end{align*}
    \item for any operator $A: H^{t_1}_{\rm per}\to L^2_{\rm per}$ and $B: H^{t_2}(\R^d;\C)\mapsto L^2(\R^d;\C)$ with some $t_1,t_2\in \R$, we define
\begin{align*}
    A\vec{u}:=(Au_1,\cdots, Au_J)^T
\end{align*}
and
\begin{align*}
    B\vec{v}:=(Bv_1,\cdots, Bv_J)^T.
\end{align*}

\end{itemize}

\medskip

\noindent{\bf Multi-index.} For
$\gamma=(\gamma_1,\ldots,\gamma_d)$ and
$\beta=(\beta_1,\ldots,\beta_d)$ in $\N^d$, and for $k,X\in\R^d$, set
\begin{align*}
    k^\beta=\prod_{j=1}^dk_j^{\beta_j},\qquad  X^\gamma=\prod_{j=1}^dX_j^{\gamma_j}
\end{align*}
and
\begin{align*}
    \partial_k^\beta=\prod_{j=1}^d\partial_{k_j}^{\beta_j},\qquad  \partial_X^\gamma=\prod_{j=1}^d\partial_{X_j}^{\gamma_j}.
\end{align*}
In addition, we set
\begin{align*}
    |\gamma|_1=\sum_{j=1}^d |\gamma_j|
\end{align*}
and
\begin{align*}
    \gamma!=\prod_{j=1}^d\gamma_j!
\end{align*}

\medskip

\noindent{\bf Cut-off function and ball.} We define a smooth cut-off function $\chi\in C^\infty(\R^d;[0,1])$ as follows
\begin{align*}
 \chi(x)= \begin{cases}
      1\qquad \mbox{for } |x|\leq 1;\\
      0\qquad \mbox{for } |x|\geq 2.
  \end{cases} 
\end{align*}
Let $R>0$ and $a\in \R^d$, define 
\begin{align}
    B_R(a):=\{x; |x-a|< R\}
\end{align}
as the ball of radius $R$ centered at $a$.

\medskip

\noindent{\bf Fourier Transform.} Let $\mathcal{F}$ be the Fourier transform: for any $g\in L^2(\R^d)$,
\begin{align}\label{eq:Fourier-transform}
    \mathcal{F}(g)(k):=\int_{\R^d}e^{-i k\cdot x}g(x)dx
\end{align}
with the inverse Fourier transform
\begin{align*}
     \mathcal{F}^{-1}(g)(x):=\frac{1}{(2\pi)^d}\int_{\R^d}e^{i k\cdot x}g(k)dk.
\end{align*}

For the effective Hamiltonians it is convenient to conjugate the Weyl quantization by the Fourier transform.  If $f(k,X)$ is a $J\times J$ matrix-valued polynomial and $\vec v\in\mathcal S(\R^d;\C^J)$, we set
\begin{align}\label{eq:F-Opf-F}
    \mathcal{F}^{-1}\Op(f)\mathcal{F} \vec v (x) = \mathcal{F}^{-1}\Big(\Op(f)(k,i\eps\nabla_k)\mathcal{F}(\vec v)(k)\Big)(x).
\end{align}

\medskip

\noindent{\bf Eigenpairs.}  We introduce some additional notation associated with the family of operators $h(k,X)$.  Let $e_0\in\mathbb R$, as specified in
Assumption \ref{ass:bandstructure-nondegenerate} below, be an
eigenvalue of $h(k_0,X_0)$ of multiplicity $J\in \N^+$ for some point $(k_0,X_0)\in \R^d\times \R^d$. We focus only on the spectrum near $e_0$ and on parameters near $(k_0,X_0)$.

Let $e_1,e_2\in\mathbb R$ be chosen such that
\[
    e_1<e_0<e_2,\qquad
    \sigma\bigl(h(k_0,X_0)\bigr)\cap[e_1,e_2]=\{e_0\}.
\]
In particular,
\[
    e_1,e_2\in\rho\bigl(h(k_0,X_0)\bigr),
\]
where $\rho(h(k,X))$ denotes the resolvent set of $h(k,X)$.

By the continuity of $(k,X)\mapsto h(k,X)$,
there exists a sufficiently small neighborhood
\[
    \mathcal N\subset\mathbb R^d\times\mathbb R^d
\]
of $(k_0,X_0)$ such that
\[
    e_1,e_2\in\rho\bigl(h(k,X)\bigr)
    \qquad
    \text{for every }(k,X)\in\mathcal N,
\]
and the spectral projector
\[
    P(k,X)
    :=
    \1_{(e_1,e_2)}\bigl(h(k,X)\bigr)
\]
has constant rank $J$ on $\mathcal N$.

For each $(k,X)\in\mathcal N$, let
\[
    E_1(k,X)\leq E_2(k,X)\leq\cdots\leq E_J(k,X)
\]
be the eigenvalues of the finite-dimensional self-adjoint operator
\[
    h(k,X)\big|_{\operatorname{Ran}P(k,X)},
\]
counted with multiplicity. Thus,
\[
    e_1<E_1(k,X)\leq\cdots\leq E_J(k,X)<e_2,
    \qquad
    (k,X)\in\mathcal N.
\]
This defines the local eigenvalue functions
\[
    E_j(k,X):\mathcal N\longrightarrow\mathbb R,
    \qquad j=1,\ldots,J.
\]

For each fixed $(k,X)\in\mathcal N$, choose an orthonormal eigenbasis
\[
    \{\phi_1(k,X),\ldots,\phi_J(k,X)\}
\]
of $\operatorname{Ran}P(k,X)$ such that
\[
    h(k,X)\phi_j(k,X)
    =
    E_j(k,X)\phi_j(k,X),
    \qquad
    \|\phi_j(k,X)\|_{L^2_{\rm per}}=1.
\]

\section{Asymptotic behavior of eigenmodes of \texorpdfstring{$h(k,X)$}{} and main results}\label{sec:3}
In this section, we summarize our main results. To do so, we first need to characterize the asymptotic behavior of eigenmodes  of $h(k,X)$ w.r.t. $(k,X)$ and then we use this asymptotic behavior to construct the approximate eigenfunctions of $H_\eps$.

\subsection{ Asymptotic behavior of the eigenvalues.}\label{sec:asymptotic-eigenvalues}
The following hypothesis describes an isolated spectral structure of $h(k,X)$ near the energy $e_0$ and a point $(k_0,X_0)$. Sections
\ref{sec:quantum-harmonic-oscillator}--\ref{sec:quantum-spin-hall} verify it in several distinct settings.
\begin{assumption}[Behavior of eigenvalues of $h$]\label{ass:bandstructure-nondegenerate}
Let $ m\in \N^+$.  Let 
\begin{align*}
   s_1:=\frac{1}{2}+\frac{1}{2\mathfrak{n} d \left(m+1\right)},\qquad    s_2:=\frac{1}{2}-\frac{1}{2\mathfrak{n} d (m+1)},\quad \mbox{and}\quad \mathfrak{n}:=8+\frac{16}{ m+1}.
\end{align*}
We assume that the operator $H_\eps$ or equivalently the family of operators $(h(k,X))_{k,X}$ satisfies the following properties:
\begin{enumerate}
    \item {\bf Eigenvalues.} The eigenvalue $e_0$ is an eigenvalue of $h(k_0,X_0)$ with  multiplicity $J:=\dim {\rm Ker}(h(k_0,X_0)-e_0)\in \N^+$, i.e.,
    \begin{align*}
     E_1(k_0,X_0)=\cdots=E_{J}(k_0,X_0)=e_0.
    \end{align*}
    \item  {\bf Asymptotic behavior of eigenvalues.} There exist functions $(\lambda_j^{\rm eff}(k,X))_{1\leq j\leq J}$ in $\R$  such that for any $1\leq j\leq J$,
\begin{align}\label{eq:ass1-1}
    E_{j}(k,X)=e_0+\lambda_j^{\rm eff}(k,X) +\mathcal{O}\left(|k-k_0|^{m+1}+|X-X_0|^{m+1}\right);
\end{align}
    \item {\bf Conditions on $\lambda^{\rm eff}_j(k,X)$.} There exist a Hermitian matrix-valued polynomial function $h^{\rm eff}: \R^{2d}\to \mathbb M_{J\times J}(\C)$ and an orthonormal basis $(\alpha_j(k,X))_{1\leq j\leq J}$ on $\C^J$ such that
\begin{align}\label{eq:ass1-3}
     h^{\rm eff}(k,X)\vec{\alpha}_j(k,X)= \lambda^{\rm eff}_j(k,X)\vec{\alpha}_j(k,X)
\end{align}
and there exists a homogeneous matrix-valued function $f_m^{\rm eff}$ of degree $m$ such that
\begin{align*}
    h^{\rm eff}(k,X):=f_m^{\rm eff}(k-k_0,X-X_0);
\end{align*}
\item {\bf Asymptotic behavior of eigenfunctions.} Let 
\begin{align*}
    w_{j}=\phi_{j}(k_0,X_0),\qquad 1\leq j\leq J,
\end{align*}
and
\begin{align*}
    \vec{w}:=(w_1,\cdots,w_J)^T.
\end{align*}
The orthonormal basis $(\vec{\alpha}_j(k,X))_{1\leq j\leq J}$ satisfies that for any $1\leq j\leq J$,
\begin{align}\label{eq:ass1-2}
    \|\phi_{j}(k,X)-\vec{w}^T\, \vec{\alpha}_j(k,X)\|_{L^2_{\rm per}}=\cO(|k-k_0|+|X-X_0|).
\end{align}
\end{enumerate}
\end{assumption}

The above assumption describes the asymptotic behavior of the eigenvalues $E_{j}(k,X)$ and the eigenfunctions $\phi_{j}(k,X)$ for $j=1,\cdots,J$ around $(k_0,X_0)$. It is a verifiable spectral condition, not an assumption on the unknown eigenpairs. For a simple isolated eigenvalue
it follows from standard analytic perturbation theory; conical crossings can
be treated by degenerate perturbation theory \cite{Fefferman-Weinstein-Honeycomb-2012,Reed-Simon-1978-operator}. Sections \ref{sec:quantum-harmonic-oscillator}--\ref{sec:quantum-spin-hall} verify the hypothesis for the effective models used later.

\subsection{Notation on \texorpdfstring{$h(k,X)$}{}}
Before stating the main results, we need to introduce some notation for $h(k,X)$ under Assumption \ref{ass:bandstructure-nondegenerate}.

Let
\begin{align*}
    h_e(k,X):=h(k,X)-e_0,\qquad  h_{e,0}:=h_e(k_0,X_0).
\end{align*}

\medskip

\noindent{\bf Projectors.} For any $g\in L^2_{\rm per}$, we define the operator $\left|g\right>\left<g\right|$ as follows: for any $f\in L^2_{\rm per}$,
\begin{align*}
    \left(\left|g\right>\left<g\right| f\right)(x) =\left<g,f\right>_{L^2_{\rm per}} g(x).
\end{align*}
We need a projector associated with the eigenfunctions of $h(k,X)$ in Assumption \ref{ass:bandstructure-nondegenerate}:
\begin{align}\label{def:P-parallel}
    P^\parallel(k,X):=\sum_{j=1}^J\left|\phi_{j}(k,X)\right>\left<\phi_{j}(k,X)\right|.
\end{align}
Its orthogonal projector is
\begin{align}\label{def:P-bot}
    P^\bot(k,X)=1-P^\parallel(k,X).
\end{align}
We also set
\begin{align}\label{def:P-k0X0}
    P^\parallel_0:=P^\parallel(k_0,X_0)=\1(h(k_0,X_0)=e_0),\qquad P_0^\bot:=P^\bot(k_0,X_0),
\end{align}
and
\begin{align}
    h_e^\bot(k,X)=h_e(k,X)P^\bot(k,X)=P^\bot(k,X)h_e(k,X)P^\bot(k,X).
\end{align}

\medskip

\noindent{\bf Derivatives of $h_e(k,X)$ and $P^\bot(k,X)$.} Note that under Assumption \ref{ass:bandstructure-nondegenerate}, $e_0$ is an eigenvalue of $h(k_0,X_0)$ of multiplicity $J$. According to the continuity of $(k,X)\mapsto h(k,X)$, there is a gap in the spectrum of $h(k,X)$ in the following sense: for any $(k,X)\in \cN$,
\begin{align}\label{eq:gap-distance}
\inf_{\ell=1,\cdots,J}  {\rm dist}\Big( E_{\ell}(k,X), \sigma(h(k,X))\setminus\{E_{j}(k,X),\;j=1,\cdots,J\}\Big)>0.
\end{align}
Thus $(k,X)\mapsto P^\bot(k,X)$ is smooth in this neighborhood. We define these derivatives at $(k_0,X_0)$: for any $\gamma,\beta\in \N^d$,
\begin{align}
    P^\bot_{\beta,\gamma,0}:=(\partial_k^\beta\partial_X^\gamma P^\bot)(k_0,X_0),\\
    h_{e,\beta,\gamma,0}:=(\partial_k^\beta\partial_X^\gamma h_e)(k_0,X_0)
\end{align}
and
\begin{align}\label{h-e-beta-alpha-0}
    h_{e,\beta,\gamma,0}^\bot:=(\partial_k^\beta\partial_X^\gamma h_e^\bot)(k_0,X_0).
\end{align}
As a result, we have the Taylor expansion of the operator $(k,X)\mapsto h_{e}^\bot(k,X)$ at $(k_0,X_0)$, i.e.,
\begin{align}\label{eq:taylor expansion}
    h_{e}^\bot(k,X)=\sum_{\substack{\beta,\gamma\in \N^d\\ |\beta+\gamma|_1\leq m}}\!\!\!\! \frac{1}{\beta!\gamma !}h^\bot_{e,\beta,\gamma,0} 
(k-k_0)^\beta(X-X_0)^\gamma +\cO_{\cB(H^{2}_{\rm per};L^2_{\rm per})}(|k-k_0|^{m+1}+|X-X_0|^{m+1}).
\end{align}

\subsection{Main results}
Based on Assumption \ref{ass:bandstructure-nondegenerate}, we can now construct approximate eigenfunctions of $H_\eps$. The construction of the approximate eigenfunctions and its spectral scale depend on the degree $m$ in Assumption \ref{ass:bandstructure-nondegenerate}.  We treat $m=1$ and $m=2$, which cover the conical and quadratic situations used in the applications.  Higher-degree or non-homogeneous cases would require a corresponding change of scaling and additional correctors and are not pursued here.

In our main results, the approximate eigenfunction $\Phi_\eps$ can be regarded as a composite of periodic wavefunctions in $L^2_{\rm per}$ and localized wavefunctions in $L^2(\R^d)$. Using this observation, we define the wavefunction $\Phi_\eps$ as a linear operator from $L^2_{\rm per}\otimes \big(L^2(\R^d)\cap W^{d+1,1}(\R^d)\big)$ to $L^2(\R^d;\C^n)$ by
\begin{align}\label{eq:Phi-eps}
    \Phi_{\eps}(\vec a\otimes\vec b)(x)=\chi(\eps^{s_1} x-\eps^{s_1-1}X_0)  \vec{a}^T(x)\,\vec{b}(x).
\end{align}
Here $\Phi_{\eps}(\vec a\otimes\vec b)\in L^2(\R^d;\C^n)$ is a consequence of Lemma \ref{lem:Uf-cH} when $\vec{b}\in  W^{d+1,1}(\R^d)$.

 We need the following assumption.
\begin{assumption}\label{ass:m}
    Let $m\in \{1,2\}$ and let $f_m^{\rm eff}$ be the homogeneous function of degree $m$ defined in Assumption \ref{ass:bandstructure-nondegenerate}. We assume that 
    \begin{itemize}
        \item $(\vec{v}_*,\mu_*)\in L^2(\R^d)\times \R$ is a normalized eigenpair of the operator 
    \begin{align*}
        \mathfrak{h}=\begin{cases}
        \mathcal{F}^{-1}{\rm Op}_1(f^{\rm eff}_1)\mathcal{F},& m=1\\
        \mathcal{F}^{-1}{\rm Op}_1(f^{\rm eff}_2)\mathcal{F}+\widetilde{\mathcal{M}},& m=2
    \end{cases}
    \end{align*}
where $\widetilde{\mathcal{M}}$ is a $J\times J$ matrix with elements
\begin{align}\label{eq:M-tilde}
(\widetilde{\mathcal{M}})_{jk}:=   \sum_{\substack{\beta,\gamma\in \N^d,\\\beta=0,\;|\gamma|_1=1}} \left<w_j, {\rm Im}(P_{\beta,\gamma,0}^\bot h_{e,0}^\bot P_{\gamma,\beta,0}^\bot) w_k\right>_{L^2_{\rm per}}
\end{align}
with the imaginary part of $ P_{\beta,\gamma,0}^\bot h_{e,0}^\bot P_{\gamma,\beta,0}^\bot$ being defined by
\begin{align*}
    {\rm Im}(P_{\beta,\gamma,0}^\bot h_{e,0}^\bot P_{\gamma,\beta,0}^\bot):=\frac{P_{\beta,\gamma,0}^\bot h_{e,0}^\bot P_{\gamma,\beta,0}^\bot-P_{\gamma,\beta,0}^\bot h_{e,0}^\bot P_{\beta,\gamma,0}^\bot}{2i};
\end{align*}
\item $\vec{v}_*$  satisfies that for any $\gamma\in \N^d$
\begin{align}\label{eq:v2-m}
    y^\gamma\vec{v}_*,\; \partial^\gamma\vec{v}_*\in \cS(\R^d)
\end{align}
and
 \begin{align}\label{eq:v3-m}
 \sum_{\gamma\in\N^d,\;|\gamma|_1\leq m} \|y^\gamma\vec{v}_*\|_{W^{d+2+2m,1}(\R^d\setminus B_{\eps^{1/2-s_1}}(0))}=\cO(\eps^{\frac12(m+1)}).
\end{align}
    \end{itemize}
\end{assumption}

Before going further, for any function $\vec{v}\in \mathcal{S}(\R^d)$, we define the following unitary scaling-translation operator $T_\eps$ by
\begin{align}\label{eq:scaling-translation}
    T_\eps(\vec{v})(x)= \eps^{\frac{d}{4}}  e^{ik_0\cdot (x-\eps^{-1}X_0)}\vec{v}(\sqrt\eps(x-\eps^{-1}X_0)).
\end{align}
Based on Assumption \ref{ass:m} with eigenpair $(\vec{v}_*,\mu_*)$, we can now construct the approximate eigenfunction of $H_\eps$: 
\begin{itemize}
    \item the leading order term of the approximate eigenfunction is $\Phi_\eps(U_\eps^{(0)}(\vec{w}\otimes \vec{v}_*))$ with
\begin{align}\label{eq:U0-eps}
     U_{\eps}^{(0)}(\vec{u}\otimes \vec{v}):=(\1_{L^2_{\rm per}}\otimes T_\eps) U^{(0)}(\vec{u}\otimes \vec{v})=\vec{u}\otimes T_\eps(\vec{v})
\end{align}
and
\begin{align}\label{eq:U0}
    U^{(0)}(\vec{u}\otimes \vec{v}):= \vec{u}\otimes \vec{v}\in     L^2_{\rm per}\otimes L^2(\R^d);
\end{align}
\item the first order correction term of the eigenfunction is $\sqrt\eps \Phi_\eps(U_\eps^{(1)}(\vec{w}\otimes \vec{v}_*))$ with
\begin{align}\label{eq:U1-eps}
     U_{\eps}^{(1)}(\vec{u}\otimes \vec{v}):= (\1_{L^2_{\rm per}}\otimes  T_\eps) U^{(1)}(\vec{u}\otimes \vec{v})
\end{align}
and
\begin{align}\label{eq:U1}
 U^{(1)}(\vec{u}\otimes \vec{v})(x,y):&= -  (P^\bot_0\otimes \1_{L^2(\R^d)}) \Big(h_{e,0}\otimes\1_{L^2(\R^d)}\Big)^{-1}\mathfrak{h}_{12}^{(1)}  U^{(0)}(\vec{u}\otimes \vec{v})(x,y)\notag\\
     &=-\sum_{\substack{\beta,\gamma\in \N^d,\\ \gamma=0,\;|\beta|_1=1}} \big[P^\bot_0 P_{\beta,\gamma,0}^\bot  \vec{ u}\big](x)\otimes ((-i\partial)^\beta \vec v)(y)\notag\\
    &\quad -\sum_{\substack{\beta,\gamma\in \N^d,\\ \beta=0,\;|\gamma|_1=1}} \big[P^\bot_0P_{\beta,\gamma,0}^\bot  \vec{ u}\big](x)\otimes (y^\gamma \vec v)(y)
\end{align}
where
\begin{align}\label{eq:h12-1}
    \mathfrak{h}_{12}^{(1)}(\vec u\otimes\vec v)(x,y):&=\sum_{\substack{\beta,\gamma\in \N^d,\\ \gamma=0,\;|\beta|_1=1}} \big[h_{e,0} P_{\beta,\gamma,0}^\bot  \vec{ u}\big](x)\otimes((-i\partial)^\beta \vec v(y))\notag\\
    &\quad +\sum_{\substack{\beta,\gamma\in \N^d,\\ \beta=0,\;|\gamma|_1=1}} \big[h_{e,0} P_{\beta,\gamma,0}^\bot  \vec{ u}\big](x)\otimes(y^\gamma \vec v(y)).
\end{align}
Here the operator $(P^\bot_0\otimes \1_{L^2(\R^d)}) \Big(h_{e,0}\otimes\1_{L^2(\R^d)}\Big)^{-1}$ should be understood as a bounded operator on $L^2_{\rm per}\otimes L^2(\R^d)$ defined as
\begin{align*}
    (P^\bot_0\otimes \1_{L^2(\R^d)}) \Big(h_{e,0}\otimes\1_{L^2(\R^d)}\Big)^{-1}=(P^\bot_0 h_{e,0}^{-1})\otimes\1_{L^2(\R^d)}=(P^\bot_0 h_{e,0}^{-1} P^\bot_0)\otimes\1_{L^2(\R^d)};
\end{align*}
\item the second order correction term is $\eps\Phi_\eps(U_\eps^{(2)}(\vec{w}\otimes \vec{v}_*))$ with
\begin{align}\label{eq:U2-epsilon}
     U_{\eps}^{(2)}(\vec{u}\otimes \vec{v}):= (\1_{L^2_{\rm per}}\otimes  T_\eps) U^{(2)}(\vec{u}\otimes \vec{v})
\end{align}
and
\begin{align}\label{eq:U2}
     U^{(2)}(\vec{u}\otimes \vec{v}):&=- (P^\bot_0\otimes \1_{L^2(\R^d)} )\Big(h_{e,0}\otimes\1_{L^2(\R^d)}\Big)^{-1}\Big[\mathfrak{h}^{(2)}U^{(0)} (\vec{u}\otimes \vec{v})+\mathfrak{h}_{22}^{(1)} U^{(1)}(\vec{u}\otimes \vec{v})\Big]
\end{align}
where
\begin{align}\label{eq:h22-1}
    \mathfrak{h}_{22}^{(1)}(\vec u\otimes\vec v)(x,y):&=\sum_{\substack{\beta,\gamma\in \N^d,\\\ \gamma=0,\; |\beta|_1=1}} \big[h_{e,\beta,\gamma,0}\vec u\big](x)\otimes ((-i\partial_{y})^\beta\vec{v})(y)\notag\\
    &\quad+\sum_{\substack{\beta,\gamma\in \N^d,\\\ \beta=0,\; |\gamma|_1=1}} \big[h_{e,\beta,\gamma,0}\vec u\big](x)\otimes(y^\gamma\vec{v})(y)
\end{align}
and
\begin{align}\label{eq:h-2}
    \mathfrak{h}^{(2)}(\vec{u}\otimes\vec{v})(x,y):=\frac{1}{2}\sum_{\substack{\beta,\gamma\in \N^d\\ |\beta+\gamma|_1=2}}\frac{1}{\beta!\gamma!}\left(h^\bot_{e,\beta,\gamma,0} \vec u\right)(x)\otimes \Big((-i\partial_y)^\beta y^\gamma+y^\gamma (-i\partial_y)^\beta\Big) \vec{v}(y).
\end{align}
\end{itemize}

Then,
\begin{theorem}\label{th:m}
We assume that the operator $H_\eps$ satisfies Assumption \ref{ass:bandstructure-nondegenerate} with $m\in\{1,2\}$. Let $(\vec{v}_*,\mu_*)$ be an eigenpair satisfying Assumption \ref{ass:m}. Then,  for $\eps$ small enough,
\begin{align}\label{eq:m}
          \left\| (H_\eps  -e_0- \eps^{\frac{m}{2}}\mu_*)\sum_{j=0}^m \eps^{\frac{j}{2}} \Phi_\eps\Big(U_\eps^{(j)}(\vec{w}\otimes \vec{v}_*)\Big)\right\|_{L^2(\R^d;\C^n)}=\cO( \eps^{\frac{m}{2}+\frac{1}{4}}),
\end{align}
with
\begin{align*}
    \left\|\sum_{j=0}^m \eps^{\frac{j}{2}}\Phi_\eps\Big(U_\eps^{(j)}(\vec{w}\otimes \vec{v}_*)\Big)\right\|_{L^2(\R^d;\C^n)}=\frac{1}{|\Omega|^{1/2}}+\cO(\sqrt{\eps}).
\end{align*}
\end{theorem}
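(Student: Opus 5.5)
The plan is to reduce the estimate to the operator-valued reduction Theorem \ref{th:quadratic}, applied to the three terms of the ansatz separately. By linearity of $\Phi_\eps$, the quantity $(H_\eps-e_0)\Theta_\eps$ splits as $\sum_{j=0}^m\eps^{j/2}(H_\eps-e_0)\Phi_\eps(U^{(j)}_\eps(\vec w\otimes\vec v_*))$. Each $U^{(j)}$ is a finite sum of elementary tensors $\vec u\otimes\vec v$, where $\vec u$ is periodic (e.g. $P_0^\bot h_{e,0}^{-1}h_{e,\beta,\gamma,0}\vec w$) and $\vec v$ is a polynomial-weighted derivative of $\vec v_*$. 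Assumption \ref{ass:m}, namely \eqref{eq:v2-m}--\eqref{eq:v3-m}, is arranged precisely so that each such $\vec v$, after $T_\eps$, satisfies the localization Assumption \ref{ass:localization}. Theorem \ref{th:quadratic} then replaces $(H_\eps-e_0)\Phi_\eps(\cdot)$ by $\Phi_\eps(\mathfrak h^{\rm eff}_\eps\,\cdot)$ up to $O(\eps^{m/2+1/4})$. The prefactors $\eps^{j/2}$ only improve the error.

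The core is then an algebraic computation in $L^2_{\rm per}\otimes L^2(\R^d)$. Conjugating by $T_\eps$ (Lemma \ref{lem:scaling}), the rescaled effective operator should expand as
\[
\mathfrak h^{\rm eff}_\eps \simeq h_{e,0}\otimes 1+\eps^{1/2}\big(\mathfrak h^{(1)}_{12}+\mathfrak h^{(1)}_{22}\big)+\eps\,\mathfrak h^{(2)}+\cdots,
\]
with remainder controlled by the homogeneity of degree $m$. I then collect powers of $\eps^{1/2}$ and decompose each order with $P_0^\parallel\oplus P_0^\bot$.
\begin{itemize}
\item At order $\eps^0$: $h_{e,0}\vec w=0$.
\item At order $\eps^{1/2}$: the $P_0^\bot$-part cancels by the definition \eqref{eq:U1} of $U^{(1)}$. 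The $P_0^\parallel$-part, projected onto $\vec w$, gives $\mathcal F^{-1}{\rm Op}_1(f_1^{\rm eff})\mathcal F\vec v_*$; this is where Assumption \ref{ass:bandstructure-nondegenerate}(3)--(4) identifies the first-order band splitting with $f^{\rm eff}_1$. For $m=1$ this equals $\mu_*\vec v_*$, which gives the claim, with the $\eps U^{(1)}$ term absorbed into the error.
\item For $m=2$: the $\eps^{1/2}$ parallel term vanishes, since $f^{\rm eff}_2$ has no linear part. At order $\eps$, the $P_0^\bot$ component is cancelled by $U^{(2)}$ through \eqref{eq:U2}. The $P_0^\parallel$ component consists of two pieces. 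The first is the second-order Taylor term, which matches ${\rm Op}_1(f^{\rm eff}_2)$ via Assumption \ref{ass:bandstructure-nondegenerate}(2). The second is the cross term $\langle w_j,\mathfrak h_{22}^{(1)}U^{(1)}\rangle$.
\end{itemize}

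The main obstacle is showing that this cross term, together with the Weyl-symmetrization in \eqref{eq:h-2}, reduces exactly to $\widetilde{\mathcal M}$ plus the part already contained in $f^{\rm eff}_2$. The second-order eigenvalue expansion $E_j$ contains $-\langle w, h_{e,\beta}P_0^\bot h_{e,0}^{-1}h_{e,\gamma}w\rangle$-type terms, i.e. the symmetric part. The commutator $[y^\gamma,-i\partial^\beta]$ produces the antisymmetric part $\mathrm{Im}(P^\bot_{0,\gamma}h^\bot_{e,0}P^\bot_{\gamma,0})$, i.e. $\widetilde{\mathcal M}$. To carry this out, I would use the identities $P^\bot_{\beta,\gamma,0}w=-P_0^\bot h_{e,0}^{-1}h_{e,\beta,\gamma,0}w$, obtained by differentiating $h_eP^\parallel=P^\parallel h_eP^\parallel$, and then track the ordering of $y$ and $\partial_y$. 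The remaining terms, from $\eps^{3/2}$ onward, involve bounded periodic factors and Schwartz envelopes, hence are $O(\eps^{m/2+1/4})$ after $\Phi_\eps$.

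For the norm, $\Phi_\eps(\vec w\otimes T_\eps\vec v_*)$ has the cutoff equal to $1$ on the region where $T_\eps\vec v_*$ lives, up to $O(\eps^\infty)$ by \eqref{eq:v3-m}. Hence its norm squared is $\int\sum_{j,l}\overline{w_j}w_l\,\overline{T_\eps v_{*,j}}\,T_\eps v_{*,l}$. Since the $w_j$ are orthonormal and periodic while $T_\eps v_*$ varies on scale $\eps^{-1/2}$, averaging over cells (Riemann sum / Lemma \ref{lem:Uf-cH}) gives $|\Omega|^{-1}\|\vec v_*\|^2+O(\eps^{1/2})$. The corrector terms are $O(\eps^{1/2})$, which gives the stated normalization.
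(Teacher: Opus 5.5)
\textbf{Verdict: same architecture as the paper, but the central algebraic step misreads the operator of Theorem~\ref{th:quadratic} and misattributes where $f^{\rm eff}_m$ comes from.}

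Your architecture is the paper's. You pass through Corollary~\ref{th:main} via Theorem~\ref{th:quadratic}, expand in powers of $\eps^{1/2}$ in the $P_0^\parallel\oplus P_0^\bot$ decomposition, let $U^{(1)},U^{(2)}$ solve the $P_0^\bot$ equations, get $\widetilde{\mathcal M}$ from a commutator, and obtain the normalization by cell averaging. The problem is your expansion $h_{e,0}\otimes\1+\eps^{1/2}(\mathfrak h^{(1)}_{12}+\mathfrak h^{(1)}_{22})+\eps\,\mathfrak h^{(2)}$. It is not the operator Theorem~\ref{th:quadratic} controls, for two reasons:
\begin{itemize}
\item It omits $\mathfrak h^{\rm eff}_{1,\eps}=\eps^{m/2}(\1\otimes T_\eps)\mathfrak h^{\rm eff}_{\rm main}(\1\otimes T_\eps)^{-1}$, which is the \emph{only} place $f^{\rm eff}_m$ enters.
\item It treats the Taylor terms as derivatives of $h_e$, whereas $\mathfrak h^{\rm eff}_{2,\eps}$ expands $h_e^\bot=P^\bot h_eP^\bot$.
\end{itemize}
By Proposition~\ref{prop:h-e-bot-derivatives}, the $\parallel\parallel$ block of the $\sqrt\eps$ term in \eqref{eq:h-scaling-1}--\eqref{eq:h-scaling-2} vanishes. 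For $|\beta+\gamma|_1=2$, $P_0^\parallel h^\bot_{e,\beta,\gamma,0}P_0^\parallel$ is a combination of the terms $P_0^\parallel P^\bot_{\beta',\gamma',0}h_{e,0}P^\bot_{\beta'',\gamma'',0}P_0^\parallel$. This has two consequences for your attributions:
\begin{itemize}
\item The ${\rm Op}_1(f^{\rm eff}_1)$ at order $\eps^{1/2}$ comes from $\mathfrak h^{\rm eff}_{\rm main}$, not from the Taylor term.
\item $\mathfrak h^{(2)}$ as defined in \eqref{eq:h-2} does \emph{not} match ${\rm Op}_1(f^{\rm eff}_2)$. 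Its symmetrized $\parallel\parallel$ part cancels exactly the symmetric part of the cross term $(\mathfrak h^{(1)}_{12})^*U^{(1)}$. What survives is $\tfrac12[(-i\partial_y)^{\beta''},y^{\gamma'}]=-\tfrac i2\delta_{\beta'',\gamma'}$, i.e.\ $\widetilde{\mathcal M}$ (Lemma~\ref{lem:h-cor}).
\end{itemize}
So with the correct operator, your ``main obstacle'' needs no eigenvalue perturbation theory: $f^{\rm eff}_m$ is supplied ready-made. If you reinstate $\mathfrak h^{\rm eff}_{1,\eps}$ while still extracting $f^{\rm eff}_m$ from the Taylor terms, you double count.

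Your full-$h_e$ bookkeeping can be made rigorous, but it needs one missing lemma. The order-$m$ Taylor polynomial of $h_eP^\parallel$ at $(k_0,X_0)$ must equal exactly $\sum_{i,l}|w_i\rangle(f^{\rm eff}_m)_{il}\langle w_l|$, so that the quantized Taylor polynomial of $h_e$ coincides with $\mathfrak h^{\rm eff}_\eps$. This lemma does hold. Write $h_eP^\parallel=\sum_j\widetilde E_j|\phi_j\rangle\langle\phi_j|$; then \eqref{eq:ass1-1}, \eqref{eq:ass1-3} and \eqref{eq:ass1-2} give
\[
h_eP^\parallel=\sum_{i,l}|w_i\rangle(h^{\rm eff})_{il}\langle w_l|+\cO\big(|k-k_0|^{m+1}+|X-X_0|^{m+1}\big).
\]
For $J>1$, the eigenvalue expansion alone (item 2, the only item you invoke for $m=2$) cannot identify the matrix; the eigenvector condition \eqref{eq:ass1-2} is essential. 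Only after this lemma are your identifications theorems rather than heuristics. These are $\sum P_0^\parallel h_{e,\beta,\gamma,0}P_0^\parallel (k-k_0)^\beta(X-X_0)^\gamma=f^{\rm eff}_1$ for $m=1$ (resp.\ $=0$ for $m=2$), and ``Taylor term $+$ symmetric cross term $=f^{\rm eff}_2$''.

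Also fix a sign. Differentiating $h_eP^\bot=P^\bot h_e$ and applying it to $w$ gives $h_{e,0}P^\bot_{\beta,\gamma,0}w=P_0^\bot h_{e,\beta,\gamma,0}w$. Hence $P^\bot_{\beta,\gamma,0}w=+P_0^\bot h_{e,0}^{-1}h_{e,\beta,\gamma,0}w$, not with a minus sign. Both the cancellation of the $P_0^\bot$ component by \eqref{eq:U1} and the sign with which the commutator term enters depend on it.

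The rest is fine: the localization check, the $\cO(\eps^{3/2})$ remainder, and the normalization argument.
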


\begin{remark}[Error estimate]
Under Assumption \ref{ass:periodicpotentials} in Section \ref{sec:almost-flat-band} later, $H_\eps$ can be a periodic Hamiltonian for some suitable $\eps$. According to \cite{thomas1973time}, if $\bA=0$, then $H_\eps$ has only absolutely continuous spectrum. Thus, there are no localized eigenfunctions of $H_\eps$. In this sense, an error estimate in \eqref{eq:m} is necessary. 

Better remainder estimates may be possible under stronger
quantitative estimates (i.e., \eqref{eq:ass1-1} and \eqref{eq:ass1-3}) and localization assumptions, together with a
refined choice of cutoffs and, if necessary, additional correctors.
Such improvements are not pursued here.
\end{remark}

\begin{remark}[Appearance of the $\widetilde{\mathcal{M}}$ term]\label{rem:zeeman}
  From the proof of Theorem \ref{th:m} in Section \ref{sec:proof-m=2}, we observe that $\widetilde{\mathcal{M}}$ arises from the term
    \begin{align*}
        \mathfrak{h}^{(2)}-\big(\mathfrak{h}_{12}^{(1)}\big)^* \Big(h_{e,0}\otimes\1_{L^2(\R^d)}\Big)^{-1}\mathfrak{h}_{12}^{(1)}.
    \end{align*}
    It takes a form analogous to the Zeeman effect \eqref{eq:zeeman'} in the Landau-Dirac operator, where $\mathfrak{h}^{(2)}$ plays the role of Landau-Schr\"odinger operator $H^S$, $\mathfrak{h}_{12}^{(1)}$ plays the role of $\mathcal{D}^*$, and $\Big(h_{e,0}\otimes\1_{L^2(\R^d)}\Big)^{-1}$ plays the role of $\frac{1}{2}$, with notation defined in Section \ref{sec:shift-of-energy}.
\end{remark}

\begin{remark}[Possible extension]\label{rem:extension}
  In this paper, our aim is to provide a new method to study aperiodic crystals of the form $H_\eps$, and we only considered some simple cases, i.e., $f_m^{\rm eff}$ is a homogeneous function with $m\in\{1,2\}$. But this does not mean that we cannot consider more general $f^{\rm eff}$. 
  
  For example, for $J=1$, one can easily extend the present results to the double well case \cite{simon1984semiclassical}, for example
  \begin{align*}
      \lambda^{\rm eff}_j(k,X)=|k|^2+X^2(X-1)^2+\cO(|k|^5+|X|^5).
  \end{align*}
  with
  \begin{align*}
      f^{\rm eff}=|k|^2+X^2(X-1)^2.
  \end{align*} 
One can also consider the case
\begin{align*}
    f^{\rm eff}=  |k|^2+|X|^4.
\end{align*}
In the latter, the $\sqrt{\eps}$ scaling will be changed, and as a result, we need to modify the definition of $s_1$ and $s_2$ in Assumption \ref{ass:bandstructure-nondegenerate}.

We also want to study the twisted-bilayer graphene problem. For this problem, we refer to \cite{cances2023simple} for the formal relationship between $h(k,X)$ and the symbol $f^{\rm eff}$ of Bistritzer-MacDonald effective Hamiltonian. Thus the main problem becomes to find approximate eigenfunction of Bistritzer-MacDonald operator $\mathcal{F}^{-1}{\rm Op}_\eps(f^{\rm eff})\mathcal{F}$ satisfying some assumptions analogous to Assumption \ref{ass:m}. But this is still nontrivial. It is also possible to study the edge state of honeycomb materials in incommensurate case. We refer to \cite{Drouor-Weinstein-Edgestates,Ying-Zhu-Edgestates} for the commensurate case.
\end{remark}

\section{Effective Hamiltonian \texorpdfstring{$\mathfrak{h}_{\eps}^{\rm eff}$}{}}\label{sec:effective-hamiltonian}
To prove Theorem \ref{th:m} and study $H_\eps$, we need to construct an effective Hamiltonian $ \mathfrak{h}_{\eps}^{\rm eff}$ that describes approximate eigenfunction $\Phi_\eps(\bullet)$. In this section, we define this key operator and study its properties. 

\subsection{Construction of the effective Hamiltonian \texorpdfstring{$\mathfrak{h}_{\eps}^{\rm eff}$}{}}
Concerning the Hamiltonian $H_\eps$, we mainly focus on its behavior around the energy level $e_0$. Thus characterized by the energy level $e_0$, the effective Hamiltonian naturally splits into two parts:
\begin{align}\label{h-eff-total}
     \mathfrak{h}_{\eps}^{\rm eff}(\vec u\otimes \vec v):= \mathfrak{h}_{1,\eps}^{\rm eff}(\vec u\otimes \vec v)+ \mathfrak{h}_{2,\eps}^{\rm eff}(\vec u\otimes \vec v)
\end{align}
where
\begin{itemize}
    \item The first part, $\mathfrak{h}_{1,\eps}^{\mathrm{eff}}$, quantizes the local spectral structure described in Assumption~\ref{ass:bandstructure-nondegenerate}:
    \begin{align}
    \mathfrak{h}_{1,\eps}^{\rm eff}(\vec u\otimes\vec v):= \vec{w}\otimes\Big(\mathcal{F}^{-1}\Op(h^{\rm eff}) \mathcal{M}_{\vec u} \mathcal{F} \vec{v}\Big)
\end{align}
where the $J\times J$ matrix-valued functional $\vec u\mapsto \mathcal{M}_{\vec u}\in \mathbb M_{J\times J}$ is defined by
\begin{align}\label{eq:M-u}
(\mathcal{M}_{\vec{u}})_{ij}=\left<w_i,u_j\right>_{L^2_{\rm per}}
\end{align}
In ideal cases such as $m=1$, this term is the leading contribution to $\mathfrak{h}_{\eps}^{\mathrm{eff}}$.

\item The second part, $\mathfrak{h}_{2,\eps}^{\mathrm{eff}}$, accounts for effects arising from the loss of translation invariance in $H_\eps$. This correction arises from the Taylor expansion \eqref{eq:taylor expansion} and is represented by
    \begin{align}
 \MoveEqLeft  \mathfrak{h}_{2,\eps}^{\rm eff}(\vec u\otimes\vec v):=\!\!\!\!\!\!\sum_{\substack{\beta,\gamma\in \N^d\\ |\beta+\gamma|_1\leq m}}\!\!\!\! \frac{1}{\beta!\gamma!}\left(h^\bot_{e,\beta,\gamma,0} \vec u\right)\otimes \Big(\mathcal{F}^{-1}\Op((k-k_0)^\beta(X-X_0)^\gamma ) \mathcal{F}\vec v \Big)
\end{align}
where $h^\bot_{e,\beta,\gamma}$ is defined by \eqref{h-e-beta-alpha-0}. In particular, if $\beta=\gamma=0$, 
\begin{align*}
    h^\bot_{e,\beta,\gamma,0}=h_{e,0}^\bot=h_{e,0}.
\end{align*}
\end{itemize}
\subsection{Study of \texorpdfstring{$\mathfrak{h}_{1,\eps}^{\rm eff}$}{}}
We now apply the scaling–translation operator $T_\eps$ defined in \eqref{eq:scaling-translation} to $\mathfrak{h}_{1,\eps}^{\mathrm{eff}}$. By Lemma \ref{lem:scaling},
\begin{align*}
   \mathfrak{h}_{1,\eps}^{\rm eff} (m) =\eps^{\frac{m}{2}} (\1_{L^2_{\rm per}}\otimes T_\eps)\big(\mathfrak{h}_{\rm main}^{\rm eff}(m)\big)(\1_{L^2_{\rm per}}\otimes T_\eps)^{-1}
\end{align*}
where
\begin{align*}
    \mathfrak{h}_{\rm main}^{\rm eff}(m)(\vec u\otimes \vec v):&=\vec{w}\otimes \mathcal{F}^{-1}{\rm Op}_1(f_m^{\rm eff}) \mathcal{M}_{\vec u} \mathcal{F}(\vec v).
\end{align*}
Using the definition of $\vec{w}$ and $\mathcal{M}_{\bullet}$,we further obtain
\begin{align}\label{eq:h1-eff-para-decomp}
 \mathfrak{h}_{1,\eps}^{\rm eff}(m)&= \eps^{\frac{m}{2}}  \begin{pmatrix}
       P^\parallel_0 \otimes T_\eps\\ P^\bot_0\otimes T_\eps
     \end{pmatrix}^T \begin{pmatrix}
         \mathfrak{h}_{\rm main}^{\rm eff}(m) &0\\0& 0
     \end{pmatrix}\begin{pmatrix}
         P^\parallel_0\otimes T_\eps^{-1}\\
         P^\bot_0\otimes T_\eps^{-1}
     \end{pmatrix}.\notag\\
  &= \eps^{\frac{m}{2}}  \begin{pmatrix}
       P^\parallel_0 \otimes T_\eps\\ P^\bot_0\otimes T_\eps
     \end{pmatrix}^T \begin{pmatrix}
         \vec{w}\otimes\Big(\mathcal{F}^{-1}{\rm Op}_1(f_m^{\rm eff}) \mathcal{M}_{\vec u} \mathcal{F} \vec{v}\Big) &0\\0& 0
     \end{pmatrix}\begin{pmatrix}
         P^\parallel_0\otimes T_\eps^{-1}\\
         P^\bot_0\otimes T_\eps^{-1}
     \end{pmatrix}
\end{align}
where the projectors $P_0^\parallel$ and $P_0^\bot$ are defined in \eqref{def:P-k0X0}

The structure of $\mathfrak{h}_{1,\eps}^{\mathrm{eff}}$ is rather involved. However, since we are primarily interested in states near $e_0$, $k_0$, and $X_0$, the case $\vec u=\vec w$ is of particular importance. In this case,
\begin{align*}
    \mathcal{M}_{\vec u}=\1_{J\times J},\qquad h_{e,0}\vec{u}(x)=0.
\end{align*}
Thus
\begin{align*}
     \mathfrak{h}_{1,\eps}^{\rm eff}(m)(\vec u\otimes \vec v)= \eps^{\frac{m}{2}} \vec{w}\otimes \Big(T_\eps\mathcal{F}^{-1}{\rm Op}_1(f_m^{\rm eff}) \mathcal{F}T_\eps^{-1}\vec{v}\Big).
\end{align*}
It therefore suffices to study the operator $\mathcal{F}^{-1}{\rm Op}_1(f_m^{\rm eff}) \mathcal{F}$.

\subsection{Study of \texorpdfstring{$\mathfrak{h}_{2,\eps}^{\rm eff}$}{}}
The effective Hamiltonian $\mathfrak{h}_{2,\eps}^{\rm eff}$ is also quite complicated. In this paper, we restrict to  $m\leq 2$. Analogously to the previous subsection, we apply $T_\eps$ and use Lemma~\ref{lem:scaling}: for $m=1$,
 \begin{align*} 
 \MoveEqLeft  \mathfrak{h}_{2,\eps}^{\rm eff}(m=1):=h_{e,0}\otimes\1_{L^2(\R^d)}+\sqrt{\eps}\!\!\!\!\sum_{\substack{\beta,\gamma\in \N^d\\ |\beta+\gamma|_1= 1}}\!\!\! h^\bot_{e,\beta,\gamma,0} \otimes \Big(T_\eps\mathcal{F}^{-1}{\rm Op}_1(k^\beta X^\gamma ) \mathcal{F} T_\eps^{-1}\Big),
\end{align*}
and for $m=2$
\begin{align*}
     \MoveEqLeft  \mathfrak{h}_{2,\eps}^{\rm eff}(m=2):=h_{e,0}\otimes\1_{L^2(\R^d)}\\
 &+\sqrt{\eps}\!\!\!\!\sum_{\substack{\beta,\gamma\in \N^d\\ |\beta+\gamma|_1= 1}}\!\!\! h^\bot_{e,\beta,\gamma,0} \otimes \Big(T_\eps\mathcal{F}^{-1}{\rm Op}_1(k^\beta X^\gamma ) \mathcal{F} T_\eps^{-1}\Big)+ \eps\!\!\!\!\sum_{\substack{\beta,\gamma\in \N^d\\ |\beta+\gamma|_1=2}}\!\!\! \frac{1}{\beta!\gamma!} h^\bot_{e,\beta,\gamma,0} \otimes \Big(T_\eps\mathcal{F}^{-1}{\rm Op}_1(k^\beta X^\gamma ) \mathcal{F}T_\eps^{-1}\Big).
\end{align*}

We now decompose $\mathfrak{h}_{2,\eps}^{\mathrm{eff}}$ using the projectors $P_0^\bot$ and $P_0^\parallel$, as in \eqref{eq:h1-eff-para-decomp}. To this end, we claim that
\begin{proposition}\label{prop:h-e-bot-derivatives}
 For $\beta,\gamma\in \N^d$ and $|\beta+\gamma|_1=1$,
    \begin{align*}
        h^\bot_{e,\beta,\gamma,0}=\begin{pmatrix}
            P_0^\parallel\\P_0^\bot
        \end{pmatrix}^T \begin{pmatrix}
            0&  P^\bot_{\beta,\gamma,0} h_{e,0}\\ h_{e,0} P^\bot_{\beta,\gamma,0}&h_{e,\beta,\gamma,0}
        \end{pmatrix}\begin{pmatrix}
            P_0^\parallel\\P_0^\bot
        \end{pmatrix}.
    \end{align*}
\end{proposition}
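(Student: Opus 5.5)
We differentiate the identity $h_e^\bot = P^\bot h_e P^\bot$ once at $(k_0,X_0)$, writing $\partial$ for $\partial_k^\beta\partial_X^\gamma$ with $|\beta+\gamma|_1=1$. The product rule gives
\begin{align*}
h^\bot_{e,\beta,\gamma,0} = P^\bot_{\beta,\gamma,0}\, h_{e,0}\, P^\bot_0 + P^\bot_0\, h_{e,\beta,\gamma,0}\, P^\bot_0 + P^\bot_0\, h_{e,0}\, P^\bot_{\beta,\gamma,0}.
\end{align*}
Differentiation is legitimate because $P^\bot$ is smooth on $\cN$; this follows from the gap \eqref{eq:gap-distance} and the Riesz formula $P^\parallel=\frac{1}{2\pi i}\oint (z-h(k,X))^{-1}dz$. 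The operator $h$ is affine in $k$, and in $X$ it is smooth as a map into $\cB(H^r_{\rm per};L^2_{\rm per})$. We then use three elementary facts. First, $h_{e,0}P^\parallel_0=P^\parallel_0h_{e,0}=0$, so $h_{e,0}=P_0^\bot h_{e,0}P_0^\bot$ and $h_{e,0}$ commutes with $P^\bot_0$. Second, we differentiate $(P^\bot)^2=P^\bot$ to get $P^\bot_{\beta,\gamma,0}=P^\bot_{\beta,\gamma,0}P^\bot_0+P^\bot_0P^\bot_{\beta,\gamma,0}$. This identity shows that $P^\bot_{\beta,\gamma,0}$ is off-diagonal: $P^\parallel_0P^\bot_{\beta,\gamma,0}P^\parallel_0=0$ and $P^\bot_0P^\bot_{\beta,\gamma,0}P^\bot_0=0$.

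Next we insert $1=P^\parallel_0+P^\bot_0$ on both sides of $h^\bot_{e,\beta,\gamma,0}$ and compute the four blocks one at a time.

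The $(\parallel,\parallel)$ block vanishes. In the first term, $h_{e,0}P^\bot_0P^\parallel_0=0$. In the second and third terms, a factor $P^\parallel_0P^\bot_0=0$ appears.

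The $(\bot,\parallel)$ block is $P^\bot_0(\cdot)P^\parallel_0$. Only the third term survives, and it gives $P^\bot_0h_{e,0}P^\bot_{\beta,\gamma,0}P^\parallel_0=h_{e,0}P^\bot_{\beta,\gamma,0}P^\parallel_0$.

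The $(\parallel,\bot)$ block is symmetric to the previous one and gives $P^\parallel_0P^\bot_{\beta,\gamma,0}h_{e,0}$.

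The $(\bot,\bot)$ block receives $P^\bot_0P^\bot_{\beta,\gamma,0}P^\bot_0h_{e,0}$ from the first term, which vanishes by the off-diagonal property. The third term vanishes for the same reason. What remains is $P^\bot_0h_{e,\beta,\gamma,0}P^\bot_0$.

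Assembling the blocks gives exactly the matrix form
\begin{align*}
h^\bot_{e,\beta,\gamma,0}=\begin{pmatrix} P_0^\parallel\\P_0^\bot\end{pmatrix}^T\begin{pmatrix}0&P^\bot_{\beta,\gamma,0}h_{e,0}\\ h_{e,0}P^\bot_{\beta,\gamma,0}&h_{e,\beta,\gamma,0}\end{pmatrix}\begin{pmatrix}P_0^\parallel\\P_0^\bot\end{pmatrix},
\end{align*}
where each entry is sandwiched between the corresponding projectors.

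The main obstacle is not algebraic; it is justifying the derivatives. One has to check that $P^\bot$ and $h_e^\bot$ are differentiable as maps into $\cB(H^2_{\rm per};L^2_{\rm per})$. For $P^\bot$ this means differentiating under the contour integral, which requires a uniform resolvent bound on a fixed contour $\Gamma$ around $[e_1,e_2]$. Such a bound holds on $\cN$ after possibly shrinking $\cN$. Once this is granted, the product rule is valid and the rest is the bookkeeping above.
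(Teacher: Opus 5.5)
Your proof is correct and follows the paper's argument. Both differentiate $(P^\bot)^2=P^\bot$ to see that $P^\bot_{\beta,\gamma,0}$ is off-diagonal with respect to $P^\parallel_0$ and $P^\bot_0$, expand $\partial(P^\bot h_e P^\bot)$ by the product rule, and use $h_{e,0}=P^\bot_0 h_{e,0}P^\bot_0$; your block-by-block bookkeeping just unpacks the paper's substitution $P^\bot_{\beta,\gamma,0}=P^\bot_0P^\bot_{\beta,\gamma,0}P^\parallel_0+P^\parallel_0P^\bot_{\beta,\gamma,0}P^\bot_0$, though one small slip in your regularity remark is that in the magnetic Schr\"odinger case $h(k,X)$ is quadratic (not affine) in $k$, which still gives the needed smoothness into $\cB(H^2_{\rm per};L^2_{\rm per})$.
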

\begin{proof}
For simplicity, we treat the case $|\beta|_1=1$, $\gamma=0$; the case $\beta=0$, $|\gamma|_1=1$ is analogous. Since $P^\bot(k,X)$ is a projector,
    \begin{align*}
    P^\bot_{\beta,\gamma,0}  =(\partial_k^\beta  P^\bot)(k_0,X_0)= (\partial_k^\beta  (P^\bot)^2)(k_0,X_0)= P^\bot_{\beta,\gamma,0} P^\bot_0+P^\bot_0 P^\bot_{\beta,\gamma,0}.
    \end{align*}
Thus,
\begin{align}
   P^\parallel_0  P^\bot_{\beta,\gamma,0}P^\parallel_0 =0,
\end{align}
and
\begin{align}
     P^\bot_0  P^\bot_{\beta,\gamma,0}P^\bot_0=0
\end{align}
where the second identity follows from the equality 
\begin{align*}
   P^\bot_0  P^\bot_{\beta,\gamma,0}P^\bot_0 =2 P^\bot_0  P^\bot_{\beta,\gamma,0}P^\bot_0,
\end{align*}
which forces the term to vanish. We also refer to e.g., \cite{MR4695788} for the use of this type of formula. As a result,
\begin{align}\label{eq:P-P'-P}
    P^\bot_{\beta,\gamma,0}=P^\bot_0 P^\bot_{\beta,\gamma,0} P^\parallel_0 +P^\parallel_0 P^\bot_{\beta,\gamma,0}P^\bot_0 .
\end{align}
Using these two equation and the fact that
\begin{align*}
    h_e^\bot(k,X)=P^\bot(k,X)h_e(k,X)P^\bot(k,X),
\end{align*}
for $|\beta|_1=1$ and $\gamma=0$,
\begin{align*}
    h^\bot_{e,\beta,\gamma,0}&= P^\bot_0 h^\bot_{e,\beta,\gamma,0}P^\bot_0+P^\bot_{\beta,\gamma,0}h_{e,0}P^\bot_0+P^\bot_0h_{e,0}P^\bot_{\beta,\gamma,0}\\
    &=P^\bot_0 h_{e,\beta,\gamma,0}P^\bot_0+P^\parallel_0 P^\bot_{\beta,\gamma,0}h_{e,0}+h_{e,0}P^\bot_{\beta,\gamma,0}P^\parallel_0
\end{align*}
where in the last equation, we used the fact that
\begin{align*}
    P^\bot_0h_{e,0}=h_{e,0}=h_{e,0}P^\bot_0.
\end{align*}
Thus,
 \begin{align*}
        h^\bot_{e,\beta,\gamma,0}=\begin{pmatrix}
            P_0^\parallel\\P_0^\bot
        \end{pmatrix}^T \begin{pmatrix}
            0&  P^\bot_{\beta,\gamma,0} h_{e,0}\\ h_{e,0} P^\bot_{\beta,\gamma,0}&h_{e,\beta,\gamma,0}
        \end{pmatrix}\begin{pmatrix}
            P_0^\parallel\\P_0^\bot
        \end{pmatrix}
    \end{align*}
This ends the proof.
\end{proof}

As a result of Proposition \ref{prop:h-e-bot-derivatives}, we can now reformulate the effective Hamiltonian $\mathfrak{h}_{2,\eps}^{\rm eff}$ as follows:
\begin{align}
\mathfrak{h}_{2,\eps}^{\rm eff}(m=1)= &\begin{pmatrix}
            P_0^\parallel\otimes T_\eps\\P_0^\bot\otimes T_\eps
        \end{pmatrix}^T \begin{pmatrix}
           0 &  0\\  0& h_{e,0}\otimes \1_{L^2(\R^d)}
        \end{pmatrix}\begin{pmatrix}
            P_0^\parallel\otimes T_\eps^{-1}\\P_0^\bot\otimes T_\eps^{-1}
        \end{pmatrix}\notag\\
    &+\sqrt{\eps} \begin{pmatrix}
            P_0^\parallel\otimes T_\eps\\P_0^\bot\otimes T_\eps
        \end{pmatrix}^T \begin{pmatrix}
            0&    (\mathfrak{h}_{12}^{(1)})^*\\   \mathfrak{h}_{12}^{(1)}&  \mathfrak{h}_{22}^{(1)}
        \end{pmatrix}\begin{pmatrix}
            P_0^\parallel\otimes T_\eps^{-1}\\P_0^\bot\otimes T_\eps^{-1}
        \end{pmatrix},
\end{align}
and
\begin{align}
  \mathfrak{h}_{2,\eps}^{\rm eff}(m=2)=& \begin{pmatrix}
            P_0^\parallel\otimes T_\eps\\P_0^\bot\otimes T_\eps
        \end{pmatrix}^T \begin{pmatrix}
            0&  0\\  0& h_{e,0}\otimes \1_{L^2(\R^d)}
        \end{pmatrix}\begin{pmatrix}
            P_0^\parallel\otimes T_\eps^{-1}\\P_0^\bot\otimes T_\eps^{-1}
        \end{pmatrix}\notag\\
    &+\sqrt{\eps} \begin{pmatrix}
            P_0^\parallel\otimes T_\eps\\P_0^\bot\otimes T_\eps
        \end{pmatrix}^T \begin{pmatrix}
            0&    (\mathfrak{h}_{12}^{(1)})^*\\   \mathfrak{h}_{12}^{(1)}&  \mathfrak{h}_{22}^{(1)}
        \end{pmatrix} \begin{pmatrix}
            P_0^\parallel\otimes T_\eps^{-1}\\P_0^\bot\otimes T_\eps^{-1}
        \end{pmatrix}\notag\\
        &+\eps \begin{pmatrix}
            P_0^\parallel\otimes T_\eps\\P_0^\bot\otimes T_\eps
        \end{pmatrix}^T \begin{pmatrix}
             \mathfrak{h}^{(2)}&   \mathfrak{h}^{(2)}\\ \mathfrak{h}^{(2)}& \mathfrak{h}^{(2)}
        \end{pmatrix}\begin{pmatrix}
            P_0^\parallel\otimes T_\eps^{-1}\\P_0^\bot\otimes T_\eps^{-1}
        \end{pmatrix}
\end{align}
where the Hamiltonian $\mathfrak{h}_{12}^{(1)}$, $\mathfrak{h}_{22}^{(1)}$ and $\mathfrak{h}^{(2)}$ are defined by \eqref{eq:h12-1}, \eqref{eq:h22-1} and \eqref{eq:h-2} respectively.

\subsection{The Hamiltonian \texorpdfstring{$\mathfrak{h}_{\eps}^{\rm eff}$ for $m=1,2$}{}}
Combining the above, we obtain the following explicit forms of the effective Hamiltonian for $m=1$ and $m=2$:
\begin{align}\label{eq:h-scaling-1}
\mathfrak{h}_{\eps}^{\rm eff}(m=1)=& \begin{pmatrix}
            P_0^\parallel\otimes T_\eps\\P_0^\bot\otimes T_\eps
        \end{pmatrix}^T \begin{pmatrix}
            \sqrt\eps \mathfrak{h}_{\rm main}^{\rm eff}(m=1) &  0\\  0& h_{e,0}\otimes \1_{L^2(\R^d)}
        \end{pmatrix}\begin{pmatrix}
            P_0^\parallel\otimes T_\eps^{-1}\\P_0^\bot\otimes T_\eps^{-1}
        \end{pmatrix}\notag\\
    &+\sqrt{\eps} \begin{pmatrix}
            P_0^\parallel\otimes T_\eps\\P_0^\bot\otimes T_\eps
        \end{pmatrix}^T \begin{pmatrix}
            0&    (\mathfrak{h}_{12}^{(1)})^*\\   \mathfrak{h}_{12}^{(1)}&  \mathfrak{h}_{22}^{(1)}
        \end{pmatrix}\begin{pmatrix}
            P_0^\parallel\otimes T_\eps^{-1}\\P_0^\bot\otimes T_\eps^{-1}
        \end{pmatrix},
\end{align}
and
\begin{align}\label{eq:h-scaling-2}
\mathfrak{h}_{\eps}^{\rm eff}(m=2)= &\begin{pmatrix}
            P_0^\parallel\otimes T_\eps\\P_0^\bot\otimes T_\eps
        \end{pmatrix}^T \begin{pmatrix}
             \eps\mathfrak{h}_{\rm main}^{\rm eff}(m=2)&  0\\  0& h_{e,0}\otimes \1_{L^2(\R^d)}
        \end{pmatrix}\begin{pmatrix}
            P_0^\parallel\otimes T_\eps^{-1}\\P_0^\bot\otimes T_\eps^{-1}
        \end{pmatrix}\notag\\
    &+\sqrt{\eps} \begin{pmatrix}
            P_0^\parallel\otimes T_\eps\\P_0^\bot\otimes T_\eps
        \end{pmatrix}^T \begin{pmatrix}
            0&    (\mathfrak{h}_{12}^{(1)})^*\\   \mathfrak{h}_{12}^{(1)}&  \mathfrak{h}_{22}^{(1)}
        \end{pmatrix} \begin{pmatrix}
            P_0^\parallel\otimes T_\eps^{-1}\\P_0^\bot\otimes T_\eps^{-1}
        \end{pmatrix}\notag\\
        &+\eps \begin{pmatrix}
            P_0^\parallel\otimes T_\eps\\P_0^\bot\otimes T_\eps
        \end{pmatrix}^T \begin{pmatrix}
             \mathfrak{h}^{(2)}&   \mathfrak{h}^{(2)}\\ \mathfrak{h}^{(2)}& \mathfrak{h}^{(2)}
        \end{pmatrix}\begin{pmatrix}
            P_0^\parallel\otimes T_\eps^{-1}\\P_0^\bot\otimes T_\eps^{-1}
        \end{pmatrix}.
\end{align}

\section{From the effective reduction to approximate eigenfunctions}\label{sec:WKB-approximation}
In this section, we prove Theorem \ref{th:m}. To do so, we need the following Theorem \ref{th:quadratic} which shows that $H_\eps$ can be approximated by the effective Hamiltonian $\mathfrak{h}_\eps^{\rm eff}$. The proof of Theorem \ref{th:m} then reduces to a WKB approximation for the eigenvalue problem of $\mathfrak{h}_\eps^{\rm eff}$ up to an error of order $\cO(\eps^{\frac{m}{2}+\frac{1}{4}})$. The case $m=2$ requires a second corrector and the identification of $\widetilde{\mathcal M}$.

\medskip

Before going further, we need the following assumption.
\begin{assumption}[Localization]\label{ass:localization}
We assume that $\vec{v}_\eps:=(v_{1,\eps},\cdots, v_{J,\eps})^T$ is in $W^{d+m+2,1}(\R^d)$ such that 
    \begin{align*}
    \|\vec{v}_\eps\|_{W^{d+m+2,1}(\R^d\setminus B_{\eps^{-s_1}}(\eps^{-1}X_0))}=\cO(\eps^{\frac12(m+1)}),
\end{align*}
and for any $M\in \N$,
\begin{align*}
  \epsilon^{\frac{d }{4}}\|\Delta^{M}e^{-ik_0\cdot x} \vec{v}_\eps\|_{L^1(\R^d)}=\cO(\eps^{M}).
\end{align*}

\end{assumption}
This assumption is satisfied, for example,  when $\vec{v}_\eps(x)=T_\eps(\vec{v})$ with $T_\eps$ being defined by \eqref{eq:scaling-translation} and $\vec{v}\in C^\infty(\R^d)$ having exponential decay.

The main operator-valued reduction is the following.
\begin{theorem}[Reduction of the Hamiltonian]\label{th:quadratic}
Let $\vec{u}=(u_1,\cdots,u_J)^T$ be in $H^2_{\rm per}$ and let $\vec{v}_\eps$ satisfy Assumption \ref{ass:localization}. 
Then under Assumption \ref{ass:bandstructure-nondegenerate}, for $\eps$ small enough
\begin{align}\label{eq:Hepsilon-Phi-epsilon}
\| (H_\eps  -e_0)\Phi_\eps(\vec u\otimes \vec v_\eps) -\Phi_\eps(\mathfrak{h}_\eps^{\rm eff}(m)(\vec u\otimes \vec v_\eps)) \|_{L^2(\R^d;\C^n)}=\cO( \eps^{\frac{1}{2}m+\frac{1}{4}}).
\end{align}
\end{theorem}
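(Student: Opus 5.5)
Our plan is to pass to the Bloch side, where $H_\eps=\cU^{-1}{\rm Op}_\eps(h)\cU$ by \eqref{op-h}, and compare the two sides there. First compute the Bloch transform of $\Phi_\eps(\vec u\otimes\vec v_\eps)$. Since $\vec u$ is $\mathbb L$-periodic, we get $[\cU\Phi_\eps(\vec u\otimes\vec v_\eps)]_k(x)=\vec u^T(x)\,\sum_{G\in\mathbb L^*}\widehat{(\chi_\eps\vec v_\eps)}(k+G)e^{iG\cdot x}$ up to normalisation, with $\chi_\eps:=\chi(\eps^{s_1}\cdot-\eps^{s_1-1}X_0)$. By the second condition of Assumption \ref{ass:localization} (the bounds on $\Delta^M e^{-ik_0\cdot x}\vec v_\eps$), the Fourier transform of $\vec v_\eps$ is concentrated in a ball of radius $\eps^{s_2}$ around $k_0$. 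The $G\neq0$ aliases are therefore $\cO(\eps^\infty)$ in $\cH$ (Lemma \ref{lem:Uf-cH}), and the cutoff $\chi_\eps$ alters $\vec v_\eps$ only by $\cO(\eps^{(m+1)/2})$ in $W^{d+m+2,1}$, by the first condition. Up to such errors, the state is $\vec u\otimes\mathcal F\vec v_\eps$, viewed as an element of $\cH$ whose $k$-support lies near $k_0$ and whose dual variable $X=i\eps\nabla_k$ is localised within $\eps^{1-s_1}$ of $X_0$.

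Next, insert the Taylor expansion of the symbol. Decompose $h_e(k,X)=P^\parallel h_eP^\parallel+h_e^\bot$. On ${\rm Ran}P^\parallel(k,X)$, use \eqref{eq:ass1-1}--\eqref{eq:ass1-2} to replace $P^\parallel h_eP^\parallel$ by $\sum_j\lambda^{\rm eff}_j|\vec w^T\vec\alpha_j\rangle\langle\vec w^T\vec\alpha_j|$. By \eqref{eq:ass1-3} this equals $\vec w^T h^{\rm eff}(k,X)\langle\vec w,\cdot\rangle$, whose quantization is exactly $\mathfrak h^{\rm eff}_{1,\eps}$ through the matrix $\mathcal M_{\vec u}$. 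For the complementary part use \eqref{eq:taylor expansion}, which produces $\mathfrak h^{\rm eff}_{2,\eps}$ term by term. The Weyl quantization of a polynomial symbol is exact, so ${\rm Op}_\eps(h_e)$ minus these pieces equals ${\rm Op}_\eps$ of a remainder symbol. That remainder is $\cO(|k-k_0|^{m+1}+|X-X_0|^{m+1})$, together with the eigenvector error, which is $\cO((|k-k_0|+|X-X_0|)\cdot|\lambda^{\rm eff}|)=\cO(|\cdot|^{m+1})$.

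The remainders are controlled by localisation. Insert a symbol cutoff supported in $\mathcal N$ and apply a Calder\'on--Vaillancourt / operator-valued symbolic-calculus bound. This requires the weighted estimate on the seminorms coming from $W^{d+m+2,1}$ in Assumption \ref{ass:localization}. On the support, $|k-k_0|\lesssim\eps^{s_2}$ and $|X-X_0|\lesssim\eps^{1-s_1}$, and both exponents exceed $1/2$ only by $1/(2\mathfrak n d(m+1))$. Hence $|\cdot|^{m+1}\lesssim\eps^{(m+1)/2-\delta}$, where $\delta$ is tiny. The numerology of $s_1,s_2,\mathfrak n$ is designed so that, after losing derivatives and volume factors of order $\eps^{-d\cdot\text{small}}$ in passing from $L^1$ bounds to $L^2$ bounds, this yields at least $\eps^{m/2+1/4}$. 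Outside $\mathcal N$, the contribution is $\cO(\eps^\infty)$ by the same decay.

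Finally, check that commuting $\chi_\eps$ past $H_\eps$ costs only terms with derivatives of $\chi_\eps$ of size $\eps^{s_1}$. These sit on the annulus where $\vec v_\eps$ is already $\cO(\eps^{(m+1)/2})$, and the same applies to its image under $\mathfrak h^{\rm eff}_\eps$; this lets us bring the cutoff outside on both sides, yielding $\Phi_\eps(\mathfrak h^{\rm eff}_\eps(\vec u\otimes\vec v_\eps))$. The main obstacle is the remainder step: we must obtain operator-valued pseudodifferential bounds for an unbounded symbol, $h$ being $\cB(H^2_{\rm per},L^2_{\rm per})$-valued. We must also track the precise $\eps$-powers lost in converting the $L^1$/Sobolev localization of Assumption \ref{ass:localization} into $L^2$ bounds, so that everything closes at $\eps^{m/2+1/4}$. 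We would first attempt this by a direct kernel estimate of ${\rm Op}_\eps$ of the Taylor remainder, using the integral form of the remainder and integrating by parts in $k'$.
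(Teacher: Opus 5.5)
Your outline has the same architecture as the paper's proof. It uses Poisson summation to keep only the $G=0$ term. It splits $h_e=P^\parallel h_eP^\parallel+h_e^\bot$, replacing the first block by $\vec w^Th^{\rm eff}\langle\vec w,\cdot\rangle$ (exactly where $\mathcal M_{\vec u}$ comes from) and the second by its Taylor polynomial. It then commutes $\chi_\eps$. The real difference is the remainder estimate, which you leave open and propose to handle by Calder\'on--Vaillancourt. The paper uses no symbol-class estimates there; it proceeds in three steps:
\begin{itemize}
\item It first moves the cutoff into the symbol: on $\cU u_\eps$ it replaces ${\rm Op}_\eps(h_e){\rm Op}_\eps(\chi_\eps)$ by ${\rm Op}_\eps(h_e\chi_\eps)$ at cost $\cO(\eps^{(m+1)/2})$. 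The Moyal expansion terminates because $h_e$ is at most quadratic in $k$ (Lemma \ref{lem:3.2}), so the symbol is supported in $|X-X_0|\le2\eps^{1-s_1}$.
\item It inserts $\widetilde\chi_\eps(k-k')$ and discards $|k-k'|\ge\eps^{s_2}$ by integrating by parts in $X$, gaining $\eps^{s_1-s_2}$ per step. This is where $s_1>s_2$ enters. Together with the Fourier decay of $\vec v_\eps$ (Lemma \ref{lem:3.4}), it keeps the midpoint $\tfrac{k+k'}{2}$ within $\cO(\eps^{s_2})$ of $k_0$. Integrating by parts in $k'$, as you suggest, turns powers of $X-X_0$ into moments of $\vec v_\eps$ but does not give this off-diagonal localization.
\item It bounds the near-diagonal part in absolute value, using only the size bound $\cO(|k-k_0|^{m+1}+|X-X_0|^{m+1})$, the estimate $|\mathcal F\vec v_\eps|\le\|\vec v_\eps\|_{L^1}\lesssim\eps^{-d/4}$, and the volume factor $\eps^{-d}\eps^{s_2d}\eps^{(1-s_1)d}$.
\end{itemize}

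The trade-off is as follows. The paper's route is elementary and needs no derivative bounds on the remainders. It pays with dimension-dependent volume losses, which is why $s_1-s_2$ must be as small as $1/(\mathfrak n d(m+1))$. Your route would rescale $(k,X)-(k_0,X_0)=\eps^{1/2}\zeta$ and cut off at $|\zeta|\lesssim\eps^{-\delta}$. It would then give dimension-free operator bounds of order $\eps^{(m+1)(1/2-\delta)}$, but it needs three things you did not state:
\begin{itemize}
\item \emph{Derivative bounds for the remainder symbol.} These do hold. $P^\parallel h_eP^\parallel$ (a Riesz projection) and $h_e^\bot$ are smooth even when the individual $E_j,\phi_j$ are not, as in the conical case. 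A smooth remainder of size $\cO(|z-z_0|^{m+1})$ therefore vanishes to order $m+1$.
\item \emph{An operator-valued Calder\'on--Vaillancourt bound.} It must map $L^2(\R^d;H^2_{\rm per})$ to $L^2(\R^d;L^2_{\rm per})$ and be applied after the $G=0$ extraction. The cut-off symbol is not $\mathbb L^*$-equivariant and does not act on $\cH$, so the state $\vec u\,\mathcal F\vec v_\eps$ is not an element of $\cH$ as you wrote.
\item \emph{The cutoff must go into the symbol at the start, not the end.} No growth of $V,\bA$ in $X$ is assumed, and the $X$-tail of $\vec v_\eps$ is only $\cO(\eps^{(m+1)/2})$. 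So your claim that the region outside $\mathcal N$ contributes $\cO(\eps^\infty)$ holds only in the $k$-direction; in $X$ the localization must come from the exact support of $\chi_\eps$.
\end{itemize}
A small slip: $s_2$ and $1-s_1$ fall short of $1/2$ rather than exceeding it, which is what your bound $\eps^{(m+1)/2-\delta}$ actually uses.
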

The proof of Theorem \ref{th:quadratic} will be provided in Section \ref{sec:proofofquadratic} later.

\begin{remark}
  The effective Hamiltonian $\mathfrak{h}_\eps^{\rm eff}$ is constructed primarily from Assumption \ref{ass:bandstructure-nondegenerate}. The reduction result can be extended if different asymptotic behaviors of the mapping $(k,X)\mapsto E_j(k,X)$ and $(k,X)\mapsto \phi_j(k,X)$ are available.
\end{remark}

As a consequence of Theorem \ref{th:quadratic}, we have the following.
\begin{corollary}[WKB approximation]\label{th:main}
Let $N\in \N^+$.  For any $0\leq j\leq N$, let $\vec{u}_j=(u_{j,1},\cdots,u_{j,J})^T$ be in $H^2_{\rm per}$ and let $\vec{v}_{j,\eps}$ satisfy Assumption \ref{ass:localization}. If in addition a sequence $(\vec u_j\otimes \vec{v}_{j,\eps})_{0\leq j\leq N}$ is a WKB approximate eigenfunction of $\mathfrak{h}_\eps^{\rm eff}(m)$ with approximated eigenvalue $\mu_\eps\in \R$ in the following sense
    \begin{align}\label{eq:eigen-problem-h-eff}
        \left\|\sum_{j=0}^N\Phi_\eps\left(\mathfrak{h}_\eps^{\rm eff}(m)(\vec u_j\otimes\vec v_{j,\eps})\right)- \mu_\eps \sum_{j=0}^N\Phi_\eps\left(\vec u_j \otimes \vec v_{j,\eps}\right) \right\|_{ L^2(\R^d)}=\cO(\eps^{\frac{1}{2}m+\frac{1}{4}}), 
    \end{align}
    then
    \begin{align}\label{eq:eigen-main}
         \left\| (H_\eps  -e_0-\mu_\eps)\sum_{j=0}^N\Phi_\eps(\vec u_j\otimes\vec v_{j,\eps})\right\|_{L^2(\R^d)}=\cO(\eps^{\frac{1}{2}m+\frac{1}{4}}).
    \end{align}
\end{corollary}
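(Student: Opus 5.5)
The corollary follows from Theorem \ref{th:quadratic} by linearity of $\Phi_\eps$ and $H_\eps$, together with a single triangle inequality. We apply Theorem \ref{th:quadratic} separately to each pair $(\vec u_j,\vec v_{j,\eps})$, for $0\le j\le N$. By hypothesis, each $\vec u_j$ lies in $H^2_{\rm per}$ and each $\vec v_{j,\eps}$ satisfies Assumption \ref{ass:localization}, so the theorem gives
\begin{align*}
\big\|(H_\eps-e_0)\Phi_\eps(\vec u_j\otimes\vec v_{j,\eps})-\Phi_\eps\big(\mathfrak h_\eps^{\rm eff}(m)(\vec u_j\otimes\vec v_{j,\eps})\big)\big\|_{L^2(\R^d)}=\cO(\eps^{\frac m2+\frac14}).
\end{align*}
The implicit constant and the threshold for ``$\eps$ small enough'' may depend on $j$. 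Since $N$ is fixed and finite, we take the maximum of the constants and the minimum of the thresholds over $0\le j\le N$.

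Next we sum these $N+1$ estimates and use linearity. The map $\vec a\otimes\vec b\mapsto\Phi_\eps(\vec a\otimes\vec b)$ in \eqref{eq:Phi-eps} is linear, being multiplication by a cutoff times a bilinear pairing, and $H_\eps-e_0$ is linear. Hence
\begin{align*}
\Big\|(H_\eps-e_0)\sum_{j=0}^N\Phi_\eps(\vec u_j\otimes\vec v_{j,\eps})-\sum_{j=0}^N\Phi_\eps\big(\mathfrak h_\eps^{\rm eff}(m)(\vec u_j\otimes\vec v_{j,\eps})\big)\Big\|_{L^2}\le (N+1)\,\cO(\eps^{\frac m2+\frac14}).
\end{align*}
We then write
\begin{align*}
(H_\eps-e_0-\mu_\eps)\sum_j\Phi_\eps(\vec u_j\otimes\vec v_{j,\eps})=\Big[(H_\eps-e_0)\sum_j\Phi_\eps(\cdot)-\sum_j\Phi_\eps(\mathfrak h_\eps^{\rm eff}(m)(\cdot))\Big]+\Big[\sum_j\Phi_\eps(\mathfrak h_\eps^{\rm eff}(m)(\cdot))-\mu_\eps\sum_j\Phi_\eps(\cdot)\Big].
\end{align*}
The first bracket is controlled by the summed estimate above, and the second bracket is exactly the quantity bounded in hypothesis \eqref{eq:eigen-problem-h-eff}. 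The triangle inequality then yields \eqref{eq:eigen-main}.

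There is no substantial obstacle here; all the analytic work is contained in Theorem \ref{th:quadratic}. The only points to check are that each term $\Phi_\eps(\vec u_j\otimes\vec v_{j,\eps})$ lies in the domain of $H_\eps$, which is already implicit in Theorem \ref{th:quadratic}, and that the $\cO$-constants are uniform over the finitely many indices $j$, which holds because $N$ is fixed.
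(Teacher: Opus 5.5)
Your proposal is correct and follows essentially the same route as the paper: apply Theorem~\ref{th:quadratic}, then split $(H_\eps-e_0-\mu_\eps)\sum_j\Phi_\eps(\vec u_j\otimes\vec v_{j,\eps})$ with a single triangle inequality into the reduction error plus the hypothesis \eqref{eq:eigen-problem-h-eff}. Your term-by-term application using linearity of $\Phi_\eps$ and $\mathfrak h_\eps^{\rm eff}(m)$, with constants made uniform over the finitely many indices $j$, spells out what the paper leaves implicit.
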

\begin{proof}
    This is a direct consequence of Theorem \ref{th:quadratic}. Indeed, we have
    \begin{align*}
   \MoveEqLeft   \left\| (H_\eps  -e_0-\mu_\eps)\sum_{j=0}^N\Phi_\eps(\vec u_j\otimes\vec v_{j,\eps})\right\|_{L^2(\R^d)}\\
   &\leq \left\| (H_\eps  -e_0)\sum_{j=0}^N\Phi_\eps(\vec u_j\otimes\vec v_{j,\eps})-\sum_{j=0}^N\Phi_\eps\left(\mathfrak{h}_\eps^{\rm eff}(\vec u_j\otimes \vec v_{j,\eps})\right)\right\|_{L^2(\R^d)}\\
   &\quad+\left\|\sum_{j=0}^N\Phi_\eps\left(\mathfrak{h}_\eps^{\rm eff}(\vec u_j\otimes\vec v_{j,\eps})\right)- \mu_\eps \sum_{j=0}^N\Phi_\eps\left(\vec u_j \otimes \vec v_{j,\eps}\right) \right\|_{ L^2(\R^d)}=\cO(\eps^{\frac{m}{2}+\frac{1}{4}}).
    \end{align*}
    This ends the proof.
\end{proof}

We now apply Corollary \ref{th:main} to prove Theorems \ref{th:m}. We begin with the simpler case $m=1$.
\subsection{Proof of Theorem \ref{th:m}, case \texorpdfstring{$m=1$}{}}
Before going further, for simplicity, in this case, we set
\begin{align*}
   \mathfrak{h}_\eps^{\rm eff}= \mathfrak{h}_\eps^{\rm eff}(m=1),\qquad\mbox{and}\quad  \mathfrak{h}_{\rm main}^{\rm eff}= \mathfrak{h}_{\rm main}^{\rm eff}(m=1).
\end{align*}
We rewrite $U^{(0)}$ and $U^{(1)}$ by using a sequence of functions $\vec{u}_j$ in $L^2_{\rm per}$ and $\vec{v}_j$ in $L^2(\R^d)$ for $0\leq j\leq 2d$
\begin{align*}
    \vec{u}_0\otimes \vec{v}_0=\vec{w}\otimes \vec{v}_*=U^{(0)}(\vec{w}\otimes \vec{v}_*)\qquad  \vec{u}_0\otimes T_\eps(\vec{v}_0)=U_\eps^{(0)}(\vec{w}\otimes \vec{v}_*)
\end{align*}
and
\begin{align*}
   \sum_{j=1}^{2d} \vec{u}_j\otimes \vec{v}_j=U^{(1)}(\vec{w}\otimes \vec{v}_*),\qquad    \sum_{j=1}^{2d}\vec{u}_j\otimes T_\eps(\vec{v}_j)=U_\eps^{(1)}(\vec{w}\otimes \vec{v}_*).
\end{align*}
According to the definition of $U^{(1)}(\bullet)$, for $1\leq j\leq 2d$ and for some $\beta,\gamma\in \N^d$ with $|\beta|_1=|\gamma|_1=1$
\begin{align*}
    \vec{v}_j\in\{y^\gamma \vec v_*,\; \partial^\beta \vec v_*\}.
\end{align*}
Under Assumption \ref{ass:m} on $\vec{v}_*$, it is straightforward to check that $(T_\eps(\vec{v}_j))_{0\leq j\leq 2d}$ satisfies Assumption \ref{ass:localization}.

To finish the proof of Theorem \ref{th:m}, it remains to verify \eqref{eq:eigen-problem-h-eff}. From the definition of $U_\eps^{(0)}$ and $U_\eps^{(1)}$,
\begin{align*}
    \mathfrak{h}_{\rm main}^{\rm eff} U^{(0)}(\vec{w}\otimes \vec{v}_*)=\mu_*\vec{w}\otimes \vec{v}_*=\mu_* U^{(0)}(\vec{w}\otimes \vec{v}_*)
\end{align*}
and
\begin{align*}
    \mathfrak{h}_{12}^{(1)} U^{(0)}(\vec{w}\otimes \vec{v}_*)+ h_{e,0}^\bot\otimes \1_{L^2(\R^d)} U^{(1)}(\vec{w}\otimes \vec{v}_*)=0
\end{align*}
where we used again the fact that $h_{e,0}=P^\bot_0h_{e,0}$. Moreover, by \eqref{eq:U0-eps} and \eqref{eq:U1-eps},
\begin{align*}
    U^{(0)}_\eps(\vec{w}\otimes \vec{v}_*)\in (P^\parallel_0 L^2_{\rm per})\otimes L^2(\R^d)
\end{align*}
and
\begin{align*}
   U^{(1)}_\eps(\vec{w}\otimes \vec{v}_*) \in (P^\bot_0 L^2_{\rm per})\otimes L^2(\R^d).
\end{align*}
Using \eqref{eq:h-scaling-1},
\begin{align*}
 \MoveEqLeft    (\mathfrak{h}_\eps^{\rm eff}-\sqrt{\eps}\mu_*)(U_\eps^{(0)}+\sqrt{\eps} U_\eps^{(1)})(\vec{w}\otimes \vec{v}_*)\\
 &=\eps (\1_{L^2_{\rm per}}\otimes T_\eps) \Big((\mathfrak{h}_{12}^{(1)})^* +\mathfrak{h}_{22}^{(1)}-\mu_* \Big)U^{(1)}(\vec{w}\otimes \vec{v}_*).
\end{align*}
Thus,
\begin{align*}
  \MoveEqLeft  \Phi_\eps\Big( (\mathfrak{h}_\eps^{\rm eff}-\sqrt{\eps}\mu_*)(U_\eps^{(0)}+\sqrt{\eps} U_\eps^{(1)})(\vec{w}\otimes \vec{v}_*)\Big)\\
    &=\eps \Phi_\eps\Big( (\1_{L^2_{\rm per}}\otimes T_\eps) \Big((\mathfrak{h}_{12}^{(1)})^* +\mathfrak{h}_{22}^{(1)}-\mu_* \Big)U^{(1)}(\vec{w}\otimes \vec{v}_*)\Big)
\end{align*}
Using Lemma \ref{lem:Uf-cH} and the condition \eqref{eq:v2-m} in Assumption \ref{ass:m}, we obtain
\begin{align*}
  \MoveEqLeft  \left\|\Phi_\eps\Big( (\1_{L^2_{\rm per}}\otimes T_\eps) \Big((\mathfrak{h}_{12}^{(1)})^* +\mathfrak{h}_{22}^{(1)}-\mu_* \Big)U^{(1)}(\vec{w}\otimes \vec{v}_*)\Big)\right\|_{L^2(\R^d;\C^n)}\\
  &\lesssim \|\vec{w}\|_{H^{4}_{\rm per}}\sum_{\gamma\in \N^d,\; |\gamma|_1\leq 2m}\left(\|y^\gamma\vec{v}_*\|_{W^{d+2,1}(\R^d)}+\|\partial^\gamma\vec{v}_*\|_{W^{d+2,1}(\R^d)}\right)<\infty
\end{align*}
where we used the fact that $\vec{w}\in C^\infty(\R^d)$ since $x\mapsto \bA(x,X_0)$ and $x\mapsto V(x,X_0)$ are smooth. Hence,
\begin{align*}
    \left\| \Phi_\eps\Big( (\mathfrak{h}_\eps^{\rm eff}-\sqrt{\eps}\mu_*)(U_\eps^{(0)}(\vec{w}\otimes \vec{v}_*)+\sqrt{\eps} U_\eps^{(1)}(\vec{w}\otimes \vec{v}_*))\Big)\right\|_{L^2(\R^d;\C^n)}=\cO(\eps).
\end{align*}
This estimate and Corollary \ref{th:main}  yield \eqref{eq:m}. 

It remains to prove the normalization estimate: 
\begin{align*}
    \left\|\Phi_\eps\Big((U_\eps^{(0)}+\sqrt\eps U_\eps^{(1)}\big)(\vec{w}\otimes \vec{v}_*)\Big)\right\|_{L^2(\R^d;\C^n)}=\frac{1}{|\Omega|^{1/2}}+\cO(\eps^{1/2}).
\end{align*}
By Assumption \ref{ass:m}, Lemma \ref{lem:Uf-cH} and the fact that operators $P_{\beta,\gamma,0}^\bot$ are bounded on $L^2_{\rm per}$  for $|\beta+\gamma|_1\leq 1$,
\begin{align*}
  \MoveEqLeft  \|\Phi_\eps(\sqrt{\eps} U_\eps^{(1)})(\vec{w}\otimes \vec{v}_*)\|_{L^2(\R^d;\C^n)}=\|\cU(\Phi_\eps(\sqrt{\eps} U_\eps^{(1)})(\vec{w}\otimes \vec{v}_*))\|_\cH\\
  \lesssim& \sqrt{\eps}\sum_{\substack{\beta,\gamma\in \N^d,\\ \gamma=0,\;|\beta|_1=1}} \|P^\bot_0P_{\beta,\gamma,0}^\bot  \vec{ w}\|_{L^2_{\rm per}}\left(\|\partial_y^\beta \vec v_*\|_{L^2(\R^d)}+\|\partial_y^\beta \vec v_*\|_{W^{d+1,1}(\R^d)}\right)\\
    &+\sqrt{\eps}\sum_{\substack{\beta,\gamma\in \N^d,\\ \beta=0,\;|\gamma|_1=1}} \|P^\bot_0P_{\beta,\gamma,0}^\bot  \vec{ w}\|_{L^2_{\rm per}}\left(\|y^\gamma \vec v_*\|_{L^2(\R^d)}+\|y^\gamma \vec v_*\|_{W^{d+1,1}(\R^d)}\right) =\cO(\sqrt{\eps})
\end{align*}
where we used the fact that $\chi(\eps^{s_1} x-\eps^{s_1-1}X_0)$ and its derivatives are uniformly bounded independently of $\eps$, and the fact that for any function $\vec f\in L^2(\R^d)$,
\begin{align*}
   \MoveEqLeft \fint_{\Omega^*}\left|\mathcal{F}\Big(\chi(\eps^{s_1} \cdot-\eps^{s_1-1}X_0) \vec{f}\Big)\right|^2(k)dk\\
   &\lesssim \int_{\R^d}\left|\mathcal{F}\Big(\chi(\eps^{s_1} \cdot-\eps^{s_1-1}X_0) \vec{f}\Big)\right|^2(k)dk\\
   &\lesssim \int_{\R^d}\left|\chi(\eps^{s_1} \cdot-\eps^{s_1-1}X_0) \vec{f}\right|^2(x)dx\lesssim \|\vec{f}\|_{L^2(\R^d)}^2.
\end{align*}
Thus,
\begin{align*}
    \left\|\Phi_\eps\Big((U_\eps^{(0)}+\sqrt\eps U_\eps^{(1)}\big)(\vec{w}\otimes \vec{v}_*)\Big)\right\|_{L^2(\R^d;\C^n)}= \|\Phi_\eps(U_\eps^{(0)})(\vec{w}\otimes \vec{v}_*)\|_{L^2(\R^d;\C^n)}+\cO(\sqrt{\eps}).
\end{align*}
Recall that
\begin{align*}
    \vec{w}=(w_1,\cdots, w_J)^T,\qquad \vec{v}_*=(v_1,\cdots,v_J)^T.
\end{align*}
It remains to study
\begin{align*}
 \Phi_\eps(U_\eps^{(0)})(\vec{w}\otimes \vec{v}_*)(x)&=\chi(\eps^{s_1} x-\eps^{s_1-1}X_0) \vec{w}^T(x) T_\eps(\vec v_*)(x)\\
 &=\eps^{\frac{d}{4}}\sum_{1\leq j\leq J}\chi(\eps^{s_1} x-\eps^{s_1-1}X_0)  e^{ik_0\cdot (x-\eps^{-1}X_0)}w_j(x)v_j(\sqrt\eps(x-\eps^{-1}X_0)).
\end{align*}
Let
\begin{align*}
    \widetilde{v}_{\eps,j}=\eps^{\frac{d}{4}}\chi(\eps^{s_1} x-\eps^{s_1-1}X_0) v_j(\sqrt\eps(x-\eps^{-1}X_0)).
\end{align*}
Thus,
\begin{align*}
    \|\Phi_\eps(U_\eps^{(0)})(\vec{w}\otimes \vec{v}_*)\|_{L^2(\R^d;\C^n)}&=\left\|\sum_{1\leq j\leq J}e^{ik_0\cdot (x-\eps^{-1}X_0)} w_j  \widetilde{v}_{\eps,j} \right\|_{L^2(\R^d;\C^n)}=\left\|\sum_{1\leq j\leq J} w_j  \widetilde{v}_{\eps,j} \right\|_{L^2(\R^d;\C^n)}.
\end{align*}
Using Bloch transform and Lemma \ref{lem:Uf-cH},
\begin{align*}
  \MoveEqLeft  \left\|\sum_{1\leq j\leq J} w_j  \widetilde{v}_{\eps,j} \right\|_{L^2(\R^d;\C^n)}=\left\|\sum_{1\leq j\leq J} \cU(w_j  \widetilde{v}_{\eps,j}) \right\|_{\cH}\\
    &=|\Omega|^{-1}\left(\fint_{\Omega^*}\left\|\sum_{1\leq j\leq J} w_j(x)  \mathcal{F}(\widetilde{v}_{\eps,j})(k) \right\|_{L^2_{\rm per}}^2dk \right)^{1/2}+\cO(\eps)\\
    &=\frac{1}{(2\pi)^{d/2}|\Omega|^{1/2}}\left(\sum_{1\leq j\leq J}\int_{\Omega^*}\left|  \mathcal{F}(\widetilde{v}_{\eps,j})(k) \right|^2dk \right)^{1/2}+\cO(\eps)
\end{align*}
where in the last one, we used the fact that $ \left<w_j,w_l\right>_{L^2_{\rm per}}=\delta_{jl}$ according to Assumption \ref{ass:bandstructure-nondegenerate}. Since ${\rm dist}(0, \R^d\setminus \Omega^*)>0$,
\begin{align*}
  \MoveEqLeft  \int_{\Omega^*}\left|  \mathcal{F}(\widetilde{v}_{\eps,j})(k) \right|^2dk\\
    &=  \int_{\R^d}\left|  \mathcal{F}(\widetilde{v}_{\eps,j})(k) \right|^2dk-  \int_{\R^d\setminus \Omega^*}\left|  \mathcal{F}(\widetilde{v}_{\eps,j})(k) \right|^2dk\\
    &={(2\pi)^d}\int_{\R^d}\left|  \widetilde{v}_{\eps,j} \right|^2dx-  \int_{\R^d\setminus \Omega^*} \frac{1}{|k|^{4d}}\left|  \mathcal{F}\Big((-\Delta)^d(\widetilde{v}_{\eps,j})\Big) \right|^2dk \\
    &=\eps^{\frac{d}{2}}(2\pi)^d\int_{\R^d}\left|  \chi(\eps^{s_1} x-\eps^{s_1-1}X_0) v_j(\sqrt\eps(x-\eps^{-1}X_0)) \right|^2dx +\cO(\eps^{d})\\
    &={(2\pi)^d}\int_{\R^d}|v_j|^2dx+\cO(\sqrt\eps)
\end{align*}
where in the last step we used \eqref{eq:v3-m}.

Therefore,
\begin{align*}
      \|\Phi_\eps(U_\eps^{(0)})(\vec{w}\otimes \vec{v}_*)\|_{L^2(\R^d;\C^n)}=\frac{1}{|\Omega|^{1/2}}\|\vec{v}_*\|_{L^2(\R^d)}+\cO(\sqrt\eps)= \frac{1}{|\Omega|^{1/2}}+\cO(\sqrt\eps)
\end{align*}
and
\begin{align*}
    \left\|\Phi_\eps\Big((U_\eps^{(0)}+\sqrt\eps U_\eps^{(1)}\big)(\vec{w}\otimes \vec{v}_*)\Big)\right\|_{L^2(\R^d;\C^n)}=\frac{1}{|\Omega|^{1/2}}+\cO(\sqrt\eps).
\end{align*}
This ends the proof of Theorem \ref{th:m}.

\subsection{Proof of Theorem \ref{th:m}, case \texorpdfstring{$m=2$}{}}\label{sec:proof-m=2}
For simplicity, in this case we set
\begin{align*}
   \mathfrak{h}_\eps^{\rm eff}= \mathfrak{h}_\eps^{\rm eff}(m=2),\qquad\mbox{and}\quad  \mathfrak{h}_{\rm main}^{\rm eff}= \mathfrak{h}_{\rm main}^{\rm eff}(m=2).
\end{align*}
Let $(\vec{v}_*,\mu_*)$ be the eigenpair from Assumption \ref{ass:m}. Analogous to the proof of Theorem \ref{th:m}, it is easy to see that the $\vec{v}$ part of the state $U^{(0)}_\eps(\vec{w}\otimes \vec{v}_*)$, $U^{(1)}_\eps(\vec{w}\otimes \vec{v}_*)$ and $U^{(2)}_\eps(\vec{w}\otimes \vec{v}_*)$ satisfy Assumption \ref{ass:localization}. From the definition of $U^{(0)}_\eps$, $U^{(1)}_\eps$ and $U^{(2)}_\eps$, we have
\begin{align*}
    U^{(0)}_\eps(\vec{w}\otimes \vec{v}_*)\in (P^\parallel_0 L^2_{\rm per})\otimes L^2(\R^d),
\end{align*}
\begin{align*}
  U^{(1)}_\eps(\vec{w}\otimes \vec{v}_*),\qquad   U^{(2)}_\eps(\vec{w}\otimes \vec{v}_*)\in (P^\bot_0 L^2_{\rm per})\otimes L^2(\R^d)
\end{align*}
and
\begin{align*}
  \MoveEqLeft   (\mathfrak{h}_\eps^{\rm eff}-\eps \mu_*)( U^{(0)}_\eps+ \sqrt{\eps} U^{(1)}_\eps+\eps U^{(2)}_\eps)(\vec{w}\otimes \vec{v}_*)\\
  &=\begin{pmatrix}
      P^\parallel_0\otimes T_\eps\\
        P^\bot_0\otimes T_\eps
  \end{pmatrix}^T\left[\begin{pmatrix}
      \eps (\mathfrak{h}_{\rm main}^{\rm eff}-\mu_*)U^{(0)}\\ (h_{e,0}\otimes \1_{L^2(\R^d)}-\eps \mu_*)(\sqrt{\eps}U^{(1)} +\eps U^{(2)})
  \end{pmatrix}\right.\\
  &\quad+ \sqrt{\eps}\begin{pmatrix}
    (\mathfrak{h}_{12}^{(1)})^*(\sqrt{\eps}U^{(1)}+\eps U^{(2)})\\\mathfrak{h}_{12}^{(1)}U^{(0)}+\mathfrak{h}_{22}^{(1)}(\sqrt{\eps} U^{(1)}+\eps U^{(2)})
  \end{pmatrix} \left.+ \eps\begin{pmatrix}
     \mathfrak{h}^{(2)}(U^{(0)}+\sqrt{\eps}U^{(1)}+\eps U^{(2)})\\  \mathfrak{h}^{(2)}(U^{(0)}+\sqrt{\eps}U^{(1)}+\eps U^{(2)})
  \end{pmatrix}\right](\vec{w}\otimes \vec{v}_*)\\
  &=\eps\begin{pmatrix}
      P^\parallel_0\otimes T_\eps\\
        P^\bot_0\otimes T_\eps
  \end{pmatrix}^T \begin{pmatrix}
  (  \mathfrak{h}^{\rm cor} -\mu_*)U^{(0)}\\ 0
  \end{pmatrix} (\vec{w}\otimes \vec{v}_*)\\
  &\quad+\eps^{3/2}\begin{pmatrix}
      P^\parallel_0\otimes T_\eps\\
        P^\bot_0\otimes T_\eps
  \end{pmatrix}^T\begin{pmatrix}
      \big(\mathfrak{h}_{12}^{(1)}\big)^*U^{(2)} +\mathfrak{h}^{(2)}(U^{(1)}+\sqrt{\eps}U^{(2)})\\ (\mathfrak{h}^{(2)}-\mu_*)(U^{(1)}+\sqrt{\eps}U^{(2)})+ \mathfrak{h}_{22}^{(1)}U^{(2)}
  \end{pmatrix}(\vec{w}\otimes \vec{v}_*)
\end{align*}
where we define
\begin{align}\label{eq:h-correction}
    \mathfrak{h}^{\rm cor}U^{(0)}:&=(P^\parallel_0\otimes \1_{L^2(\R^d)})(\mathfrak{h}_{\rm main}^{\rm eff} + \mathfrak{h}^{(2)}) U^{(0)}+(P^\parallel_0\otimes \1_{L^2(\R^d)})(\mathfrak{h}_{12}^{(1)})^*U^{(1)}\notag\\
    &=(P^\parallel_0\otimes \1_{L^2(\R^d)}) \Big[ \mathfrak{h}_{\rm main}^{\rm eff}+ \mathfrak{h}^{(2)}-\big(\mathfrak{h}_{12}^{(1)}\big)^* \Big(h_{e,0}\otimes\1_{L^2(\R^d)}\Big)^{-1}\mathfrak{h}_{12}^{(1)}\Big]U^{(0)}. 
\end{align}
Now we claim that
\begin{lemma}\label{lem:h-cor}
Let $(\vec{v}_*,\mu_*)$ be an  eigenpair satisfying Assumption \ref{ass:m}, and for any $\vec{a}\in L^2_{\rm per}$ and $\vec{b}\in L^2(\R^d)$, let $\Psi$ be a linear operator defined by
\begin{align*}
    \Psi(\vec{a}\otimes \vec{b})(x,y)=\vec{a}^T(x)\vec{b}(y).
\end{align*}
Then,
    \begin{align}\label{eq:5.4}
\Psi\Big(\mathfrak{h}^{\rm cor}U^{(0)}(\vec{w}\otimes \vec{v}_*)\Big)=\mu_* \Psi\Big( U^{(0)}(\vec{w}\otimes \vec{v}_*)\Big).
\end{align}
\end{lemma}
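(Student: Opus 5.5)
Since $(\vec v_*,\mu_*)$ is an eigenpair of $\mathfrak h=\mathcal F^{-1}{\rm Op}_1(f_2^{\rm eff})\mathcal F+\widetilde{\mathcal M}$, and $\mathfrak h_{\rm main}^{\rm eff}U^{(0)}(\vec w\otimes\vec v_*)=\vec w\otimes \mathcal F^{-1}{\rm Op}_1(f_2^{\rm eff})\mathcal F\vec v_*$ (because $\mathcal M_{\vec w}=\1_{J\times J}$), the plan is to show that the remaining part of $\mathfrak h^{\rm cor}$ acts on $\vec w\otimes\vec v_*$ exactly as $\widetilde{\mathcal M}$. Concretely, the target identity is
\begin{align*}
\Psi\Big((P^\parallel_0\otimes\1)\Big[\mathfrak h^{(2)}-\big(\mathfrak h_{12}^{(1)}\big)^*\big(h_{e,0}\otimes\1\big)^{-1}\mathfrak h_{12}^{(1)}\Big](\vec w\otimes\vec v_*)\Big)=\vec w^T(x)\,\big(\widetilde{\mathcal M}\vec v_*\big)(y).
\end{align*}
Adding the main term and using $\mathfrak h\vec v_*=\mu_*\vec v_*$ then gives \eqref{eq:5.4}. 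The map $\Psi$ is used only so that $\sum_j w_j\otimes(\cdot)_j$ is identified with $\vec w^T(\cdot)$. Projecting onto $P_0^\parallel$ amounts to taking the matrix elements $\langle w_j,\cdot\,w_k\rangle_{L^2_{\rm per}}$.

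\textbf{Step 1: the $\mathfrak h^{(2)}$ term.} I will compute $P_0^\parallel h^\bot_{e,\beta,\gamma,0}P_0^\parallel$ for $|\beta+\gamma|_1=2$. I differentiate $h_e^\bot=P^\bot h_eP^\bot$ twice by Leibniz. Terms containing an undifferentiated $P^\bot_0$ next to $P^\parallel_0$ vanish, and so do terms with $h_{e,0}$ adjacent to $P_0^\parallel$, since $h_{e,0}P^\parallel_0=0$. The only survivors are those in which both derivatives fall on the two $P^\bot$ factors. This gives
\begin{align*}
P^\parallel_0 h^\bot_{e,\beta,\gamma,0}P^\parallel_0=\sum P^\parallel_0 P^\bot_{\beta_1,\gamma_1,0}\,h_{e,0}\,P^\bot_{\beta_2,\gamma_2,0}P^\parallel_0,
\end{align*}
summed over splittings $(\beta,\gamma)=(\beta_1+\beta_2,\gamma_1+\gamma_2)$ into first-order pieces, with the appropriate multinomial weights.

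\textbf{Step 2: the Schur-complement term.} Using \eqref{eq:P-P'-P}, I will write $\mathfrak h_{12}^{(1)}$ and its adjoint explicitly. The operator $(\mathfrak h_{12}^{(1)})^*(h_{e,0})^{-1}\mathfrak h_{12}^{(1)}$, restricted to $P^\parallel_0$, becomes
\begin{align*}
\sum P^\parallel_0P^\bot_{\beta',\gamma',0}h_{e,0}P^\bot_{\beta,\gamma,0}P^\parallel_0\otimes a'^*a,
\end{align*}
where $a,a'\in\{-i\partial_{y_l},\,y_l\}$ and $h_{e,0}h_{e,0}^{-1}h_{e,0}=h_{e,0}$ on ${\rm Ran}P_0^\bot$. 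Here I need the fact that $P^\bot_{\beta,\gamma,0}$ is self-adjoint, so that adjoints carry no extra sign.

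\textbf{Step 3: matching the two contributions.} The operator-valued coefficients of Steps 1 and 2 are identical. The difference therefore reduces to comparing the $y$-operators: on one side the Weyl-symmetrized $\tfrac12(a a'+a'a)$ with factorial weights from \eqref{eq:h-2}, on the other the ordered product $a'^*a$. Pure $k$–$k$ and $X$–$X$ pairs cancel exactly, because these operators commute. For mixed pairs, $-i\partial_{y_l}$ and $y_l$ have the commutator $[y_l,-i\partial_{y_l}]=i$. The symmetrized-minus-ordered difference is then $\pm\tfrac{i}{2}$ times the antisymmetric combination $P^\bot_{0,e_l,0}h_{e,0}P^\bot_{e_l,0,0}-P^\bot_{e_l,0,0}h_{e,0}P^\bot_{0,e_l,0}$. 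This equals the matrix with entries ${\rm Im}(\cdot)$ in \eqref{eq:M-tilde}, up to the sign convention. Pairs with $l\neq l'$ commute and cancel.

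\textbf{Main obstacle.} The bookkeeping of factorial weights and signs in Step 3 is where I expect trouble. I must check that the $\tfrac{1}{\beta!\gamma!}$ and the extra $\tfrac12$ in \eqref{eq:h-2}, combined with the double counting of splittings from Step 1, reproduce exactly the ordered sum from Step 2. Only the commutator term should remain, and it should carry precisely the coefficient defining $\widetilde{\mathcal M}$. I will verify this first on a single index $d=1$ with the three cases $(\beta,\gamma)=(2,0),(1,1),(0,2)$.
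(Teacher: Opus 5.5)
Your proposal is correct and follows essentially the same route as the paper. The paper likewise expands $P_0^\parallel h^\bot_{e,\beta,\gamma,0}P_0^\parallel$ by Leibniz into products of two first-order projector derivatives, whose multinomial weights absorb the $\tfrac{1}{\beta!\gamma!}$. It then cancels the pure $k$--$k$ and $X$--$X$ terms against $(\mathfrak h_{12}^{(1)})^*(h_{e,0}\otimes\1)^{-1}\mathfrak h_{12}^{(1)}$, and identifies the commutator residue $[-i\partial_{y_l},y_l]=-i$ of the mixed terms with $\widetilde{\mathcal M}$ via $\Psi$ and $\mathcal M_{\vec u}$. The sign is not a matter of convention, though: the residue is exactly $-\tfrac{i}{2}\sum_l P_0^\parallel\big(P^\bot_{0,e_l,0}h_{e,0}P^\bot_{e_l,0,0}-P^\bot_{e_l,0,0}h_{e,0}P^\bot_{0,e_l,0}\big)P_0^\parallel$, which equals $+\widetilde{\mathcal M}$, as the lemma requires.
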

Using Lemma \ref{lem:h-cor},
\begin{align*}
  \MoveEqLeft  \Phi_\eps\Big((P_0^\parallel\otimes T_\eps)\mathfrak{h}^{\rm cor}U^{(0)}(\vec{w}\otimes \vec{v}_*)\Big)(x)\\
  &=\chi(\eps^{s_1} x-\eps^{s_1-1}X_0) \left[T_{\eps,v}\Psi\Big(\mathfrak{h}^{\rm cor}U^{(0)}(\vec{w}\otimes \vec{v}_*)\Big)\right](x,x)\\
    &=\mu_* \chi(\eps^{s_1} x-\eps^{s_1-1}X_0)\left[T_{\eps,v}\Psi\Big( U^{(0)}(\vec{w}\otimes \vec{v}_*)\Big)\right](x,x)=\mu_* \Phi_\eps\Big( U_\eps^{(0)}(\vec{w}\otimes \vec{v}_*)\Big)(x)
\end{align*}
where $T_{\eps,v}$ is the scaling and translation operator $T_{\eps}$ that only acts on the $\vec{v}_*$ component. Arguing as in the proof of Theorem \ref{th:m}, we obtain the result. In particular, the normalization estimate
\begin{align*}
    \left\|\Phi_\eps\Big((U_\eps^{(0)}+\sqrt\eps U_\eps^{(1)}+\eps U_\eps^{(2)}\big)(\vec{w}\otimes \vec{v}_*)\Big)\right\|_{L^2(\R^d;\C^n)}=\frac{1}{|\Omega|^{1/2}}+\cO(\sqrt{\eps})
\end{align*}
follows from the fact that $\vec{w}\in C^\infty(\R^d)\cap L^2_{\rm per}\subset H^2_{\rm per}$ since $\bA(x,X_0), V(x,X_0)$ are smooth w.r.t. $x$.

\subsection{Proof of Lemma \ref{lem:h-cor}}\label{sec:9.3}
To complete the proof of Theorem \ref{th:m}, we now prove Lemma \ref{lem:h-cor}. We first write the explicit formula for $\mathfrak{h}^{\rm cor}$:
\begin{align*}
  \MoveEqLeft  \mathfrak{h}^{\rm cor}U^{(0)}(\vec{w}\otimes \vec{v}_*)(x,y)=\vec{w}(x)\otimes (\mathcal{F}^{-1}{\rm Op}_1(f^{\rm eff}_m)\mathcal{F}\vec{v}_*)(y)\\
  &\quad+\frac{1}{2}\sum_{\substack{\beta,\gamma\in \N^d\\ |\beta+\gamma|_1=2}}\frac{1}{\beta!\gamma!} (P_0^\parallel h^\bot_{e,\beta,\gamma,0} \vec{w})(x)\otimes \Big((-i\partial_y)^\beta y^\gamma+y^\gamma (-i\partial_y)^\beta\Big) \vec{v}_*(y)\\
    &\quad-\sum_{\substack{\beta',\beta'',\gamma',\gamma''\in \N^d\\|\gamma'|_1=|\gamma''|_1=0\\ |\beta'|_1=|\beta''|_1=1}}(P_0^\parallel P_{\beta',\gamma',0}^\bot h_{e,0}^\bot P_{\beta'',\gamma'',0}^\bot \vec{w})(x)\otimes \Big((-i\partial_y)^{\beta'+\beta''}\vec{v}_*(y)  \Big)\\
    &\quad-\sum_{\substack{\beta',\beta'',\gamma',\gamma''\in \N^d\\|\beta'|_1=|\gamma''|_1=0\\ |\gamma'|_1=|\beta''|_1=1}}(P_0^\parallel P_{\beta',\gamma',0}^\bot h_{e,0}^\bot P_{\beta'',\gamma'',0}^\bot \vec{w})(x)\otimes \Big(y^{\gamma'} (-i\partial_y)^{\beta''}\vec{v}_*(y)\Big) \\
     &\quad-\sum_{\substack{\beta',\beta'',\gamma',\gamma''\in \N^d\\|\gamma'|_1=|\beta''|_1=0\\ |\beta'|_1=|\gamma''|_1=1}}(P_0^\parallel P_{\beta',\gamma',0}^\bot h_{e,0}^\bot P_{\beta'',\gamma'',0}^\bot \vec{w})(x)\otimes   \Big((-i\partial_y)^{\beta'} (y^{\gamma''}\vec{v}_*(y))\Big) \\
     &\quad-\sum_{\substack{\beta',\beta'',\gamma',\gamma''\in \N^d\\|\beta'|_1=|\beta''|_1=0\\ |\gamma'|_1=|\gamma''|_1=1}}(P_0^\parallel P_{\beta',\gamma',0}^\bot h_{e,0}^\bot P_{\beta'',\gamma'',0}^\bot \vec{w})(x)\otimes \Big(y^{\gamma'+\gamma''} \vec{v}_*(y)\Big).
\end{align*}

\medskip

Now we write 
\begin{align*}
    \beta=\beta'+\beta'',\qquad\gamma=\gamma'+\gamma''
\end{align*}
with $\beta',\beta'',\gamma',\gamma''\in \N^d$, and $|\beta'|\leq 1$, $|\beta''|\leq 1$, $|\gamma'|\leq 1$ and $|\gamma''|\leq 1$. Since $P^\bot_0 P^\parallel_0=0$ and
\begin{align*}
    h^\bot_{e}(k,X)= P^\bot(k,X)h_e(k,X)P^\bot(k,X),
\end{align*} 
for $|\beta+\gamma|_1=2$, we have
\begin{align*}
P_0^\parallel h^\bot_{e,\beta,\gamma,0}P_0^\parallel&=\sum_{\substack{\beta'+\beta''=\beta\\
\gamma'+\gamma''=\gamma\\ |\beta'+\gamma'|_1=|\beta''+\gamma''|_1=1}}\frac{\beta!\gamma!}{\beta'!\beta''!\gamma'!\gamma''!}   P_0^\parallel P^\bot_{\beta',\gamma',0}h_{e,0}P^\bot_{\beta'',\gamma'',0}P_0^\parallel\\
&=\sum_{\substack{\beta'+\beta''=\beta\\
\gamma'+\gamma''=\gamma\\ |\beta'+\gamma'|_1=|\beta''+\gamma''|_1=1}}(\beta!\gamma!)  \;\; P_0^\parallel P^\bot_{\beta',\gamma',0}h_{e,0}P^\bot_{\beta'',\gamma'',0}P_0^\parallel
\end{align*}
where we used the fact that
\begin{align*}
    \beta'!\beta''!\gamma'!\gamma''!=1.
\end{align*}
Thus, as $\vec{w}=P^\parallel_0\vec{w}$, 
\begin{align*}
 \MoveEqLeft  \sum_{\substack{\beta,\gamma\in \N^d\\ |\beta+\gamma|_1=2}} \frac{1}{\beta!\gamma!}(P_0^\parallel h^\bot_{e,\beta,\gamma,0} \vec{w})(x)\otimes\Big((-i\partial_y)^\beta y^\gamma+y^\gamma (-i\partial_y)^\beta\Big) \vec{v}_*(y)\\  &=\sum_{\substack{\beta',\beta'',
\gamma',\gamma''\in \N^d\\ |\beta'+\gamma'|_1=|\beta''+\gamma''|_1=1}}(P_0^\parallel P^\bot_{\beta',\gamma',0}h_{e,0}P^\bot_{\beta'',\gamma'',0} \vec{w})(x)\otimes\Big((-i\partial_y)^{\beta'+\beta''} y^{\gamma'+\gamma''}+y^{\gamma'+\gamma''} (-i\partial_y)^{\beta'+\beta''}\Big) \vec{v}_*(y)\\
&=2\sum_{\substack{\beta',\beta'',\gamma',\gamma''\in \N^d\\|\gamma'|_1=|\gamma''|_1=0\\ |\beta'|_1=|\beta''|_1=1}}(P_0^\parallel P_{\beta',\gamma',0}^\bot h_{e,0}^\bot P_{\beta'',\gamma'',0}^\bot \vec{w})(x)\otimes \Big((-i\partial_y)^{\beta'+\beta''}\vec{v}_*(y)  \Big)\\
    &\quad+\sum_{\substack{\beta',\beta'',\gamma',\gamma''\in \N^d\\|\beta'|_1=|\gamma''|_1=0\\ |\gamma'|_1=|\beta''|_1=1}}(P_0^\parallel P_{\beta',\gamma',0}^\bot h_{e,0}^\bot P_{\beta'',\gamma'',0}^\bot \vec{w})(x)\otimes \Big(y^{\gamma'} (-i\partial_y)^{\beta''}\vec{v}_*(y)+(-i\partial_y)^{\beta''}(y^{\gamma'} \vec{v}_*(y))\Big) \\
    &\quad+\sum_{\substack{\beta',\beta'',\gamma',\gamma''\in \N^d\\|\gamma'|_1=|\beta''|_1=0\\ |\beta'|_1=|\gamma''|_1=1}}(P_0^\parallel P_{\beta',\gamma',0}^\bot h_{e,0}^\bot P_{\beta'',\gamma'',0}^\bot \vec{w})(x)\otimes   \Big((-i\partial_y)^{\beta'} (y^{\gamma''}\vec{v}_*(y))+y^{\gamma''}(-i\partial_y)^{\beta'} \vec{v}_*(y)\Big)\\
      &\quad+2\sum_{\substack{\beta',\beta'',\gamma',\gamma''\in \N^d\\|\beta'|_1=|\beta''|_1=0\\ |\gamma'|_1=|\gamma''|_1=1}}(P_0^\parallel P_{\beta',\gamma',0}^\bot h_{e,0}^\bot P_{\beta'',\gamma'',0}^\bot \vec{w})(x)\otimes \Big(y^{\gamma'+\gamma''} \vec{v}_*(y)\Big).
\end{align*}
Inserting this identity into $ \mathfrak{h}^{\rm cor}U^{(0)}$, we obtain
\begin{align*}
  \MoveEqLeft  \mathfrak{h}^{\rm cor}U^{(0)}(\vec{w}\otimes \vec{v}_*)(x,y)=\vec{w}(x)\otimes (\mathcal{F}^{-1}{\rm Op}_1(f^{\rm eff}_m)\mathcal{F}\vec{v}_*)(y)\\
    &\quad+\frac{1}{2}\sum_{\substack{\beta',\beta'',\gamma',\gamma''\in \N^d\\|\beta'|_1=|\gamma''|_1=0\\ |\gamma'|_1=|\beta''|_1=1}}(P_0^\parallel P_{\beta',\gamma',0}^\bot h_{e,0}^\bot P_{\beta'',\gamma'',0}^\bot \vec{w})(x)\otimes \Big([(-i\partial_y)^{\beta''}, y^{\gamma'}]\vec{v}_*(y)\Big) \\
     &\quad+\frac{1}{2}\sum_{\substack{\beta',\beta'',\gamma',\gamma''\in \N^d\\|\gamma'|_1=|\beta''|_1=0\\ |\beta'|_1=|\gamma''|_1=1}}(P_0^\parallel P_{\beta',\gamma',0}^\bot h_{e,0}^\bot P_{\beta'',\gamma'',0}^\bot \vec{w})(x)\otimes   \Big([y^{\gamma''},(-i\partial_y)^{\beta'} ]\vec{v}_*(y)\Big).
\end{align*}
Note that
\begin{align*}
    [(-i\partial_y)^{\beta''}, y^{\gamma'}]= -i\delta_{\beta'',\gamma'}
\end{align*}
where $\delta_{\beta'',\gamma'}$ is the Kronecker delta function, i.e.,
\begin{align*}
    \delta_{\beta'',\gamma'}=\begin{cases}
        1,\qquad \beta''=\gamma'\\
        0,\qquad \beta''\neq \gamma'
    \end{cases}.
\end{align*}
Then,
\begin{align*}
    \MoveEqLeft  \mathfrak{h}^{\rm cor}U^{(0)}(\vec{w}\otimes \vec{v}_*)(x,y)=\vec{w}(x)\otimes (\mathcal{F}^{-1}{\rm Op}_1(f^{\rm eff}_m)\mathcal{F}\vec{v}_*)(y)\\
    &\quad-\frac{i}{2}\sum_{\substack{\beta,\gamma\in \N^d\\|\beta|_1=0,\;|\gamma|_1=1}}\Big(P_0^\parallel \big( P_{\beta,\gamma,0}^\bot h_{e,0}^\bot P_{\gamma,\beta,0}^\bot -P_{\gamma,\beta,0}^\bot h_{e,0}^\bot P_{\beta,\gamma,0}^\bot\big)\vec{w}\Big)(x)\otimes \vec{v}_*(y).
\end{align*}
Thus $ \mathfrak{h}^{\rm cor}U^{(0)}$ is a sum of terms of the form $ \vec{u}\otimes \vec{v}$ with $\vec{u}=P^\parallel_0 \vec{u}\in L^2_{\rm per}$ and $\vec{v}\in L^2(\R^d)$. Now,
\begin{align*}
    \Psi(\vec{u}\otimes \vec{v})(x,y)&=\vec{u}^T(x)\vec{v}(y)=\sum_{j=1}^J u_j(x)v_j(y) \\
    &=\sum_{j,k=1}^J \left<w_k,u_j\right>_{L^2_{\rm per}}w_k(x)v_j(y)=\vec{w}^T\mathcal{M}_{\vec{u}}\vec{v}
\end{align*}
where $\mathcal{M}_{\vec{u}}$ is defined by \eqref{eq:M-u} and we used the fact that $\vec{u}=P^\parallel_0 \vec{u}\in L^2_{\rm per}$ with
\begin{align*}
    P^\parallel_0=\sum_{j=1}^J \left|w_j\right>\left<w_j\right|.
\end{align*}
Therefore,
\begin{align*}
  \MoveEqLeft   \Psi(\mathfrak{h}^{\rm cor}U^{(0)}(\vec{w}\otimes \vec{v}_*))(x,y)=\vec{w}^T(x)(\mathcal{F}^{-1}{\rm Op}_1(f^{\rm eff}_m)\mathcal{F}\vec{v}_*)(y)\\
    &\qquad-\frac{i}{2}\sum_{\substack{\beta,\gamma\in \N^d\\|\beta|_1=0,\;|\gamma|_1=1}}\vec{w}^T\mathcal{M}_{\big( P_{\beta,\gamma,0}^\bot h_{e,0}^\bot P_{\gamma,\beta,0}^\bot -P_{\gamma,\beta,0}^\bot h_{e,0}^\bot P_{\beta,\gamma,0}^\bot\big)\vec{w}} \vec{v}_*(y).
\end{align*}
Using \eqref{eq:M-tilde},
\begin{align*}
 \MoveEqLeft   \sum_{\substack{\beta,\gamma\in \N^d\\|\beta|_1=0,\;|\gamma|_1=1}} (\mathcal{M}_{\big( P_{\beta,\gamma,0}^\bot h_{e,0}^\bot P_{\gamma,\beta,0}^\bot -P_{\gamma,\beta,0}^\bot h_{e,0}^\bot P_{\beta,\gamma,0}^\bot\big)\vec{w}})_{jk}\\
    &=\sum_{\substack{\beta,\gamma\in \N^d\\|\beta|_1=0,\;|\gamma|_1=1}}\left<w_j,P_{\beta,\gamma,0}^\bot h_{e,0}^\bot P_{\gamma,\beta,0}^\bot w_k\right>_{L^2_{\rm per}}-\left<w_j, P_{\gamma,\beta,0}^\bot h_{e,0}^\bot P_{\beta,\gamma,0}^\bot w_k\right>_{L^2_{\rm per}}\\
    &=2i \sum_{\substack{\beta,\gamma\in \N^d\\|\beta|_1=0,\;|\gamma|_1=1}}\left<w_j,{\rm Im} (P_{\beta,\gamma,0}^\bot h_{e,0}^\bot P_{\gamma,\beta,0}^\bot) w_k\right>_{L^2_{\rm per}}= 2i (\widetilde{\mathcal{M}})_{jk}.
\end{align*}
Thus under Assumption \ref{ass:m} for the eigenpair $(\vec{v}_*,\mu_*)$,
\begin{align*}
      \Psi(\mathfrak{h}^{\rm cor}U^{(0)}(\vec{w}\otimes \vec{v}_*))=\vec{w}^T\Big(\mathcal{F}^{-1}{\rm Op}_1(f_2^{\rm eff})\mathcal{F} +\widetilde{\mathcal{M}} \Big)\vec{v}_* =\mu_* \vec{w}^T\vec{v}_*=\mu_* \Psi(U^{(0)}(\vec{w}\otimes \vec{v}_*))
\end{align*}
where $f_2^{\rm eff}$ is defined in Assumption \ref{ass:m}. This ends the proof of Lemma \ref{lem:h-cor}.

\section{Proof of Theorem \ref{th:quadratic}}\label{sec:proofofquadratic}
In this section, we prove Theorem \ref{th:quadratic}. That is, we will show that
\begin{align*}
   \| (H_\eps  -e_0)\Phi_\eps(\vec u\otimes\vec v_\eps)(x) -\chi(\eps^{s_1} x-\eps^{s_1-1}X_0) \mathfrak{h}_\eps^{\rm eff}(\vec u\otimes\vec v_\eps)(x,x) \|_{L^2(\R^d;\C^n)}=\cO( \eps^{\frac{1}{2}m+\frac{1}{4}}).
\end{align*}
Let $k_0$ be given as in Assumption \ref{ass:bandstructure-nondegenerate}. Without loss of generality, it suffices to assume $k_0=0$. The case $k_0\neq 0$ can be reduced to the case $k_0=0$ by using the following lemma.
\begin{lemma}
If Theorem \ref{th:quadratic} holds for $k_0=0$, it also holds for any $k_0\in \Omega^*$.
\end{lemma}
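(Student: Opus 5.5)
The reduction rests on a gauge transformation by the plane wave $e^{ik_0\cdot x}$. For $k_0\in\Omega^*$ we introduce a modified problem with $\widetilde{\bA}(x,X):=\bA(x,X)+k_0$. Its Hamiltonian is
\begin{align*}
\widetilde H_\eps:=T(-i\nabla_x+\widetilde\bA(x,\eps x))+V(x,\eps x),
\end{align*}
and it is still $\mathbb L$-periodic in $x$, smooth, and of the admissible form. Since $(-i\nabla_x)(e^{ik_0\cdot x}g)=e^{ik_0\cdot x}(-i\nabla_x+k_0)g$, we have the exact intertwining
\begin{align*}
H_\eps\big(e^{ik_0\cdot x}g\big)=e^{ik_0\cdot x}\,\widetilde H_\eps g,
\end{align*}
valid for both the Schr\"odinger and the Dirac kinetic term. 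The fiber symbols are related by $\widetilde h(k,X)=h(k+k_0,X)$. Hence Assumption \ref{ass:bandstructure-nondegenerate} for $h$ at $(k_0,X_0)$ is exactly the same assumption for $\widetilde h$ at $(0,X_0)$. The data carry over unchanged: the eigenvalues $\widetilde E_j(k,X)=E_j(k+k_0,X)$, eigenfunctions $\widetilde\phi_j(k,X)=\phi_j(k+k_0,X)$, the same $\vec w$ and the same $J$. One also has $\widetilde h^{\rm eff}(k,X)=f_m^{\rm eff}(k,X-X_0)$, and all derivatives match: $\widetilde P^\bot_{\beta,\gamma,0}=P^\bot_{\beta,\gamma,0}$ and $\widetilde h^\bot_{e,\beta,\gamma,0}=h^\bot_{e,\beta,\gamma,0}$.

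Next we transport the states. Given $\vec v_\eps$ satisfying Assumption \ref{ass:localization} for $k_0$, set $\widetilde{\vec v}_\eps:=e^{-ik_0\cdot x}\vec v_\eps$ (a constant phase $e^{ik_0\cdot\eps^{-1}X_0}$ is harmless). The $W^{d+m+2,1}$ tail bound outside $B_{\eps^{-s_1}}(\eps^{-1}X_0)$ survives because multiplication by $e^{-ik_0\cdot x}$ costs only bounded factors $|k_0|^{j}$ on derivatives. The Laplacian condition becomes $\eps^{d/4}\|\Delta^M\widetilde{\vec v}_\eps\|_{L^1}=\cO(\eps^M)$, which is precisely Assumption \ref{ass:localization} with $k_0=0$. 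The cutoff in $\Phi_\eps$ does not involve $k_0$, so
\begin{align*}
\Phi_\eps(\vec u\otimes\vec v_\eps)=e^{ik_0\cdot x}\Phi_\eps(\vec u\otimes\widetilde{\vec v}_\eps).
\end{align*}
Applying the $k_0=0$ theorem to $\widetilde H_\eps$ and multiplying by the unimodular factor $e^{ik_0\cdot x}$, which preserves the $L^2$ norm, then gives the estimate for $H_\eps$, provided the effective operators are compatible.

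The main step, and the one needing care, is to verify the conjugation identity
\begin{align*}
\mathfrak h^{\rm eff}_\eps(\vec u\otimes\vec v_\eps)(x,y)=e^{ik_0\cdot y}\,\widetilde{\mathfrak h}^{\rm eff}_\eps(\vec u\otimes\widetilde{\vec v}_\eps)(x,y),
\end{align*}
evaluated on the diagonal $x=y$. Every piece of $\mathfrak h_\eps^{\rm eff}$ acts on the $\vec v$ factor through $\mathcal F^{-1}\Op(g(k-k_0,X-X_0))\mathcal F$ for polynomial $g$. Multiplication by $e^{ik_0\cdot y}$ translates the Fourier variable by $k_0$. We will show that the Weyl quantization \eqref{eq:Weyl_quantization} is covariant under this shift, namely $\Op(g(\cdot-k_0,\cdot))\tau_{k_0}=\tau_{k_0}\Op(g)$ with $\tau_{k_0}$ the translation in $k$. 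This holds because the Weyl midpoint $\frac{k+k'}{2}$ shifts by exactly $k_0$. It then follows that $\mathcal F^{-1}\Op(g(k-k_0,X))\mathcal F\,e^{ik_0\cdot y}=e^{ik_0\cdot y}\mathcal F^{-1}\Op(g(k,X))\mathcal F$. The operator coefficients $h^\bot_{e,\beta,\gamma,0}$ and $\mathcal M_{\vec u}$ are identical on both sides, so the identity holds term by term in both $\mathfrak h^{\rm eff}_{1,\eps}$ and $\mathfrak h^{\rm eff}_{2,\eps}$. Finally, $\Phi_\eps$ of the left side equals $e^{ik_0\cdot x}$ times $\Phi_\eps$ of $\widetilde{\mathfrak h}^{\rm eff}_\eps(\vec u\otimes\widetilde{\vec v}_\eps)$, which completes the reduction.
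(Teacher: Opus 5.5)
Your proposal is correct and follows essentially the same route as the paper. Both arguments use the gauge conjugation by $e^{ik_0\cdot x}$ relating $H_\eps$ to $\widetilde H_\eps$ with fiber symbol $\widetilde h(k,X)=h(k+k_0,X)$, the transported state $e^{-ik_0\cdot x}\vec v_\eps$, and the intertwining of the effective operators. The paper asserts that intertwining ``by direct calculation''; you justify it explicitly through the translation covariance $\Op(g(\cdot-k_0,\cdot))\tau_{k_0}=\tau_{k_0}\Op(g)$ of the Weyl quantization, and you also spell out the check of Assumption \ref{ass:localization}, which the paper leaves as ``easy to check''.
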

\begin{proof}
This is indeed a consequence of Gauge invariance. Let $k_0\neq 0$, and let $\vec v_\eps$ be defined as in Theorem \ref{th:quadratic}. We define 
\begin{align*}
   \vec{\widetilde{v}}_\eps(x)=e^{-ik_0\cdot x} \vec{v}_\eps(x).
\end{align*}
and
\begin{align*}
  \widetilde{H}_\eps:=  T(-i\nabla_x +k_0+\bA(x,\eps x))+V(x,\eps x)=\cU^{-1}\widetilde{h}(k,i\eps \nabla_k)\cU
\end{align*}
with $\widetilde{h}(k,X):=h(k+k_0,X)$. 

Note that $(E_j(k+k_0,X),\phi_j(k+k_0,X))_{j\geq 1}$ are the eigenpairs of the operator  $\widetilde{h}(k,X)$. Then $\widetilde{h}$ satisfies Assumption \ref{ass:bandstructure-nondegenerate} around $(0,X_0)$ and $e_0$. Moreover, it is easy to check  that $\vec{\widetilde{v}}_\eps$ verifies Assumption \ref{ass:localization}. Let $\widetilde{\Phi}_{\eps}$ be constructed as in Theorem \ref{th:quadratic} for $k_0=0$. Then Theorem \ref{th:quadratic} gives
\begin{align*}
   \| (\widetilde{H}_\eps  -e_0)\Phi_\eps(\vec u\otimes \vec{\widetilde{v}}_\eps) -\Phi_\eps(\widetilde{\mathfrak{h}}_\eps^{\rm eff}(m)(\vec u\otimes \vec{\widetilde{v}}_\eps)) \|_{L^2(\R^d;\C^n)}=\cO( \eps^{\frac{m}{2}+\frac{1}{4}})
\end{align*}
where
\begin{align*}
     \widetilde{\mathfrak{h}}_{\eps}^{\rm eff}(\vec u\otimes \vec v):=  \widetilde{\mathfrak{h}}_{1,\eps}^{\rm eff}(\vec u\otimes \vec v)+  \widetilde{\mathfrak{h}}_{2,\eps}^{\rm eff}(\vec u\otimes \vec v)
\end{align*}
with
\begin{align*}
      \widetilde{\mathfrak{h}}_{1,\eps}^{\rm eff}(\vec u\otimes \vec v)(x,y):= \vec{w}(x)\otimes \Big(\mathcal{F}^{-1}\Op(\widetilde{h}^{\rm eff}) \mathcal{M}_{\vec u} \mathcal{F} \vec{v}\Big)(y)
\end{align*}
\begin{align*}
     \widetilde{\mathfrak{h}}_{2,\eps}^{\rm eff}(\vec u\otimes \vec v)(x,y):=\sum_{\substack{\beta,\gamma\in \N^d\\ |\beta+\gamma|_1\leq m}}\!\!\!\! \frac{1}{\beta!\gamma!}\left(h^\bot_{e,\beta,\gamma,0} \vec u\right)(x) \otimes \Big(\mathcal{F}^{-1}\Op(k^\beta(X-X_0)^\gamma ) \mathcal{F}\vec v \Big)(y)
\end{align*}
and
\begin{align*}
    \widetilde{h}^{\rm eff}(k,X):=f_m^{\rm eff}(k,X-X_0).
\end{align*}
Here $\widetilde{\mathfrak{h}}_{1,\eps}^{\rm eff}$ and $\widetilde{\mathfrak{h}}_{2,\eps}^{\rm eff}$ are exactly the same as $\mathfrak{h}_{1,\eps}^{\rm eff}$ and $\mathfrak{h}_{2,\eps}^{\rm eff}$ except we replace $k-k_0$ by $k$ in the Weyl quantization $\Op(\bullet)$. In above definition, we also used the fact that
\begin{align*}
  \partial_k^\beta \partial_X^\gamma \widetilde{h}^\bot_e(0,X_0)=h^\bot_{e,\beta,\gamma,0}
\end{align*}
where $\widetilde{h}^\bot_e(k,X)$ is defined analogously to $h^\bot_e(k,X)$ by replacing the operator $h(k,X)$ with $\widetilde{h}(k,X)$ in the definition.

By direct calculation,
\begin{align*}
    \mathcal{F}^{-1}\Op(\widetilde{h}^{\rm eff}) \mathcal{M}_{\vec u} \mathcal{F} \vec{\widetilde{v}}_\eps=e^{-ik_0\cdot x}\mathcal{F}^{-1}\Op(h^{\rm eff}) \mathcal{M}_{\vec u} \mathcal{F} \vec{v}_\eps
\end{align*}
and
\begin{align*}
    \mathcal{F}^{-1}\Op(k^\beta(X-X_0)^\gamma ) \mathcal{F}\vec{\widetilde{v}}_\eps=e^{-ik_0\cdot x}\mathcal{F}^{-1}\Op((k-k_0)^\beta(X-X_0)^\gamma ) \mathcal{F}\vec{v}_\eps.
\end{align*}
Moreover, by Gauge invariance, 
\begin{align*}
     (\widetilde{H}_\eps-e_0) \Phi_\eps(\vec u\otimes \vec {\widetilde{v}}_\eps)=e^{-ik_0\cdot x}(H_\eps  -e_0)\Phi_\eps(\vec u\otimes \vec v_\eps).
\end{align*}
These equations imply \eqref{eq:Hepsilon-Phi-epsilon}, hence proving Theorem \ref{th:quadratic}.
\end{proof}

It now suffices to prove Theorem \ref{th:quadratic} with $k_0=0$. Using the Bloch transform and \eqref{op-h}, this is equivalent to showing
\begin{align*}
   \MoveEqLeft  \left\| \Op(h_e) \Big(\cU\Phi_\eps(\vec u\otimes\vec v_\eps)\Big)_{\bullet} -\Big(\cU \Phi_\eps\big( \mathfrak{h}_\eps^{\rm eff}(\vec u\otimes\vec v_\eps)\big)\Big)_\bullet\right\|_\cH\\
   &= \left(\fint_{\Omega^*}\left\|\Op(h_e) \Big(\cU\Phi_\eps(\vec u\otimes\vec v_\eps)\Big)_k -\Big(\cU \Phi_\eps\big( \mathfrak{h}_\eps^{\rm eff}(\vec u\otimes\vec v_\eps)\big)\Big)_k\right\|_{L^2_{\rm per}}^2dk\right)^{1/2}\\
   &=\cO( \eps^{\frac{1}{2}m+\frac{1}{4}}).
\end{align*}

We now outline the main ideas of the proof, and the relevant lemmas are proved in Subsections \ref{sec:3.2}-\ref{sec:3.6} below.
\begin{proof}[Sketch of proof of Theorem \ref{th:quadratic} with $k_0=0$]
We assume henceforth $k_0=0$. For simplicity, set
\begin{align*}
    \Phi_\eps(x):=\Phi_\eps(\vec{u}\otimes \vec{v}_\eps)(x).
\end{align*}

The proof is divided into the following steps.

\medskip

\noindent{\bf Step 1. Reduction of the problem.} 

Recall that $s_1= \frac{1}{2}+ \frac{1}{2\mathfrak{n} d\left(m+1\right)}$ is defined in Assumption \ref{ass:bandstructure-nondegenerate}. Let
\begin{align*}
    \chi_\eps(X):=\chi(\eps^{s_1-1}(X-X_0)).
\end{align*}
Analogously to \eqref{op-h}, we have
\begin{align*}
    (\cU\Phi_\eps)=\Op(\chi_\eps)(\cU u_\eps).
\end{align*}
with
\begin{align*}
    u_\eps:=\vec{u}^T\vec{v}_\eps.
\end{align*}
Thus,
\begin{align*}
    \Op(h_e)(\cU\Phi_\eps)=  \Op(h_e)\Op(\chi_\eps)(\cU u_\eps).
\end{align*}
Our first step is to replace operator $\Op(h_e)\Op(\chi_\eps)$ by $\Op(h_e \chi_\eps)$ in the following sense.
\begin{lemma}\label{lem:3.2}
With the same assumptions as in Theorem \ref{th:quadratic}, for any $\eps<1$,
    \begin{align*}
        \|\Op(h_e)(\cU\Phi_\eps)-\Op(h_e \chi_\eps)(\cU u_\eps)\|_{\cH}=\cO(\eps^{\frac{1}{2}(m+1)}).
    \end{align*}
\end{lemma}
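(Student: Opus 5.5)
The estimate in Lemma \ref{lem:3.2} is a composition estimate for Weyl quantization: we must show $\Op(h_e)\Op(\chi_\eps)-\Op(h_e\chi_\eps)$ is small on $\cU u_\eps$. The plan is to use the Moyal expansion. Since $h_e(k,X)$ is a polynomial of degree at most $2$ in $k$ (Schr\"odinger case; degree $1$ in the Dirac case) with operator coefficients in $\cB(H^2_{\rm per};L^2_{\rm per})$ depending smoothly on $X$, and $\chi_\eps$ depends only on $X$, the Moyal product $h_e\#\chi_\eps$ terminates in the $k$-derivatives of $h_e$:
\begin{align*}
h_e\#\chi_\eps = h_e\chi_\eps + \sum_{1\le|\beta|_1\le 2}\frac{c_\beta\,\eps^{|\beta|_1}}{\beta!}\,(\partial_k^\beta h_e)\,(\partial_X^\beta\chi_\eps)+ R_\eps,
\end{align*}
where the terms with $k$-derivatives falling on $h_e$ pair with $X$-derivatives of $\chi_\eps$, and $R_\eps$ collects mixed terms involving $X$-derivatives of $h_e$ against $k$-derivatives of $\chi_\eps$. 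Because $\chi_\eps$ is independent of $k$, $R_\eps=0$. The expansion is therefore exact, with finitely many terms.

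\textbf{Step 1: size of the correction symbols.} Each $X$-derivative of $\chi_\eps(X)=\chi(\eps^{s_1-1}(X-X_0))$ costs $\eps^{s_1-1}$, so a term of order $|\beta|_1=j$ carries the factor $\eps^{j}\eps^{j(s_1-1)}=\eps^{js_1}$. For $j=1$ this gives $\eps^{s_1}=\eps^{1/2+\delta}$ with $\delta>0$, which by itself is only $\cO(\eps^{1/2})$. That is far from $\eps^{(m+1)/2}$, so boundedness alone is not enough.

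\textbf{Step 2: exploit support separation.} Every $\partial_X^\beta\chi_\eps$ with $\beta\neq0$ is supported in the annulus $\eps^{1-s_1}\le|X-X_0|\le2\eps^{1-s_1}$. This corresponds to $x$ in the region $\eps^{-s_1}\le|x-\eps^{-1}X_0|\le 2\eps^{-s_1}$. There Assumption \ref{ass:localization} gives $\|\vec v_\eps\|_{W^{d+m+2,1}}=\cO(\eps^{(m+1)/2})$.

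To use this, rewrite the correction operators in physical space:
\[
\cU^{-1}\Op\big((\partial_k^\beta h_e)(\partial_X^\beta\chi_\eps)\big)\cU
\]
is a differential operator of order at most $1$ in $x$, with coefficients that are smooth and periodic in $x$, multiplied by $(\partial^\beta\chi)(\eps^{s_1}x-\eps^{s_1-1}X_0)$. This holds exactly, since the Weyl symbol is polynomial in $k$; the symmetrization produces only harmless lower-order terms. Applying it to $u_\eps=\vec u^T\vec v_\eps$ gives a function supported in the annulus. Its $L^2$ norm is controlled, via Lemma \ref{lem:Uf-cH}, by $\|\vec u\|_{H^2_{\rm per}}$ times the $W^{d+2,1}$ norm of $\vec v_\eps$ on the annulus. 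The result is $\cO(\eps^{s_1}\cdot\eps^{(m+1)/2})$, which is better than required.

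\textbf{Main obstacle.} The delicate point is justifying the exact Moyal identity with operator-valued, unbounded coefficients, and the transfer back to physical space. The coefficients $\partial_k^\beta h_e$ involve $-i\nabla_x+\bA$ and are not bounded on $L^2_{\rm per}$. I would avoid abstract calculus entirely. Instead I would compute directly in $x$-space that
\[
H_\eps(\chi_\eps(\eps\,\cdot)\,u_\eps)-\chi_\eps(\eps\,\cdot)\,H_\eps u_\eps
\]
is a commutator involving only derivatives of the cutoff. I would then identify $\Op(h_e\chi_\eps)$ with the symmetrized product $\frac12\{\chi_\eps(\eps x),H_\eps-e_0\}$ up to a second commutator of size $\eps^{2s_1}$. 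Every remainder then contains a derivative of the cutoff, and Step 2 applies. Finally, $W^{d+2,1}$ control of $\vec v_\eps$ (with $d+m+2\ge d+3$ derivatives available) suffices for the Lemma \ref{lem:Uf-cH} bound, since at most two derivatives are consumed.
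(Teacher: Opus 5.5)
Your proposal is correct and follows essentially the paper's proof. Both arguments use the same exact, terminating Moyal expansion (since $h_e$ is polynomial in $k$ and $\chi_\eps$ is independent of $k$), transfer each correction term to physical space where it carries a derivative of the cutoff supported in $\eps^{-s_1}\le|x-\eps^{-1}X_0|\le 2\eps^{-s_1}$, and close with Lemma~\ref{lem:Uf-cH} and Assumption~\ref{ass:localization}; the paper handles your ``harmless symmetrization'' remark by a second Moyal step $\Op(\partial_{k_j}h_e\,\partial_{X_j}\chi_\eps)=\Op(\partial_{k_j}h_e)\Op(\partial_{X_j}\chi_\eps)+\tfrac{i\eps}{2}\Op(\nabla_k\partial_{k_j}h_e\cdot\nabla_X\partial_{X_j}\chi_\eps)$, and it controls the coefficient $\bA(x,\eps x)$ by uniform bounds on $\bA$ and its derivatives (this coefficient is not periodic in $x$, contrary to your phrasing, though this changes nothing in the estimate).
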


\medskip

\noindent{\bf Step 2. Reduction of $\Op(h_e \chi_\eps)(\cU u_\eps)$.} 

Using Lemma \ref{lem:3.2}, we now focus on the study of $\Op(h_e\chi_\eps)\cU(u_\eps)$. By the definition of Weyl quantization,
\begin{align*}
    \Big(\Op(h_e\chi_\eps)\cU(u_\eps)\Big)(k)=\frac{1}{(2\pi \eps)^d}\int_{\R^d\times \R^d}  e^{-i\frac{(k-k')\cdot X}{\eps}}\chi_\eps(X)h_e\left(\frac{k+k'}{2},X\right)(\cU u_\eps)_{k'}\;dk'dX.
\end{align*}
We split the integration w.r.t. $k'$ into two parts: close to $k$ and away from $k$. To do so, recall
\begin{align}
    s_2=\frac{1}{2}-\frac{1}{2\mathfrak{n} d(m+1)}
\end{align}
and let
\begin{align}
    \widetilde{\chi}_\eps(k)=\chi(\eps^{-s_2} k).
\end{align}
Then 
\begin{align}\label{eq:hchi-integration}
    \MoveEqLeft \Big(\Op(h_e\chi_\eps)\cU(u_\eps)\Big)(k)\notag\\
    &=\frac{1}{(2\pi \eps)^d}\int_{\R^d\times \R^d}  e^{-i\frac{(k-k')\cdot X}{\eps}}\chi_\eps(X)\widetilde{\chi}_\eps(k-k') h_e\left(\frac{k+k'}{2},X\right)(\cU u_\eps)_{k'}\;dk'dX\notag\\
    &\quad +\frac{1}{(2\pi \eps)^d}\int_{\R^d\times \R^d} e^{-i\frac{(k-k')\cdot X}{\eps}}\chi_\eps(X)\Big(1-\widetilde{\chi}_\eps(k-k') \Big)h_e\left(\frac{k+k'}{2},X\right)(\cU u_\eps)_{k'}\;dk'dX\notag\\
    &=:I(k)+II(k).
\end{align}
Our next lemma shows that
\begin{lemma}\label{lem:3.3}
 For $\eps<1$, we have
    \begin{align*}
      \|II\|_{\cH}=\cO(\eps^{\frac12(m+1)}).
    \end{align*}
\end{lemma}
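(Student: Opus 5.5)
The plan is to show that $II$ has an operator kernel that is tiny because of non-stationary phase in $X$. The mechanism is that the oscillation $e^{-i(k-k')\cdot X/\eps}$ at frequencies $|k-k'|\geq\eps^{s_2}$ beats the scale $\eps^{1-s_1}$ of the cutoff $\chi_\eps$. This works because $s_1>s_2$.

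\emph{Step 1 (computing the $X$-integral).} For fixed $k,k'$, I would set $\xi=k-k'$ and $K=(k+k')/2$, and rescale $X=X_0+\eps^{1-s_1}Y$:
\begin{align*}
\frac{1}{(2\pi\eps)^d}\int_{\R^d} e^{-i\frac{\xi\cdot X}{\eps}}\chi_\eps(X)h_e(K,X)\,dX
=\frac{\eps^{-ds_1}e^{-i\frac{\xi\cdot X_0}{\eps}}}{(2\pi)^d}\int_{\R^d} e^{-i\eps^{-s_1}\xi\cdot Y}\chi(Y)\,h_e\big(K,X_0+\eps^{1-s_1}Y\big)\,dY .
\end{align*}
I would then integrate by parts $N$ times with the operator $\frac{i\eps^{s_1}\xi\cdot\nabla_Y}{|\xi|^2}$. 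Each $Y$-derivative either falls on $\chi$, or falls on $h_e$ and produces a harmless factor $\eps^{1-s_1}$ times $\partial_X h_e$. Both $V$ and $\bA$ are smooth, and $\partial_X^\gamma h_e(K,X)$ is bounded $H^r_{\rm per}\to L^2_{\rm per}$ with at most polynomial growth $\langle K\rangle^{r}$, uniformly for $X$ in a compact set. This should give the kernel bound
\begin{align*}
\big\|\mathcal K(k,k')\big\|_{\cB(H^r_{\rm per};L^2_{\rm per})}\lesssim_N \eps^{-ds_1}\big(\eps^{-s_1}|k-k'|\big)^{-N}\langle K\rangle^{r}\,\1_{|k-k'|\geq\eps^{s_2}} .
\end{align*}

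\emph{Step 2 (Schur-type estimate).} Next I would write $\langle K\rangle^r\lesssim\langle k-k'\rangle^r\langle k'\rangle^r$. The growth in $k'$ is paid for by the smoothness of $u_\eps=\vec u^T\vec v_\eps$: using Lemma \ref{lem:Uf-cH} together with the $W^{d+m+2,1}$ bounds and the $\Delta^M$-bounds of Assumption \ref{ass:localization}, the weighted quantity $\langle k'\rangle^r\|(\cU u_\eps)_{k'}\|_{H^r_{\rm per}}$ should be controlled uniformly for $k'\in\R^d$. Here I use quasi-periodicity, $(\cU u_\eps)_{k'-G}=\tau_G(\cU u_\eps)_{k'}$, and the fact that $\tau_G$ costs only $\langle G\rangle^r$ in $H^r_{\rm per}$.

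The remaining growth $\langle k-k'\rangle^r$ is absorbed into the decay $|k-k'|^{-N}$. Schur's test in $k'$ then gives
\begin{align*}
\eps^{-ds_1}\int_{|\xi|\geq\eps^{s_2}}\big(\eps^{-s_1}|\xi|\big)^{-N}\langle\xi\rangle^r\,d\xi\lesssim \eps^{(s_1-s_2)(N-d)} ,
\end{align*}
and the same bound holds in $k$. Since $s_1-s_2=\frac{1}{\mathfrak n d(m+1)}$, it suffices to choose $N\geq d+\frac{1}{2}\mathfrak n d(m+1)^2$. This yields $\|II\|_\cH=\cO(\eps^{\frac12(m+1)})$, where the $\cH$-norm only averages $k$ over $\Omega^*$ and so causes no difficulty.

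\emph{Main obstacle.} The delicate point is Step 2: the interplay between the unbounded symbol $h_e(K,X)$, which is of order $r$ in $K$ and maps $H^r_{\rm per}\to L^2_{\rm per}$, and the fact that $k'$ ranges over all of $\R^d$ rather than $\Omega^*$. I would handle this by first commuting the $k$-growth onto $\cU u_\eps$ via quasi-periodicity. The needed $H^r_{\rm per}$ control of $(\cU u_\eps)_{k'}$ then comes from $\vec u\in H^2_{\rm per}$ and the $L^1$-Sobolev bounds on $\vec v_\eps$, through Lemma \ref{lem:Uf-cH}. If the uniform-in-$k'$ bound turns out to be too weak, the fallback is to spend a few of the $N$ integrations by parts to produce extra powers of $\langle k-k'\rangle^{-1}$ instead.
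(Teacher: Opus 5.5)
Your strategy is the paper's: integrate by parts in $X$ to gain $(\eps^{s_1}/|k-k'|)^N$, integrate over $|k-k'|\geq\eps^{s_2}$ to get $\eps^{(N-d)(s_1-s_2)}$, and bound the Bloch fibers of $u_\eps$ uniformly via Lemma \ref{lem:Uf-cH}. Step 1 is correct.

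\textbf{The gap is in Step 2.} Its central claim is false as stated: $\langle k'\rangle^r\|(\cU u_\eps)_{k'}\|_{H^r_{\rm per}}$ is \emph{not} bounded uniformly in $k'\in\R^d$. Your own remark shows why. Write $k'=k_0'-G$ with $k_0'\in\Omega^*$. Then $(\cU u_\eps)_{k'}=\tau_G(\cU u_\eps)_{k_0'}$, and multiplication by $e^{iG\cdot x}$ really does cost a factor of order $\langle G\rangle^r$ in $H^r_{\rm per}$. So $\|(\cU u_\eps)_{k'}\|_{H^r_{\rm per}}$ itself grows like $\langle k'\rangle^r$, and the weighted quantity grows like $\langle k'\rangle^{2r}$. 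Quasi-periodicity makes only $L^2_{\rm per}$-norms, or norms adapted to $k'$, periodic in $k'$; the standard $H^r_{\rm per}$-norm is not.

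\textbf{How to repair it.} The fix is short, but it must be stated; there are two options.
\begin{enumerate}
\item[(a)] Use that $k$ only ranges over the bounded set $\Omega^*$. Then $\langle K\rangle+\langle k'\rangle\lesssim\langle k-k'\rangle$, so the total growth is at most $\langle k-k'\rangle^{2r}$. This is absorbed by the kernel decay once $N>d+2r$. This is exactly the fact your fallback needs; extra powers of $\langle k-k'\rangle^{-1}$ only help once growth in $k'$ has been converted into growth in $k-k'$.
\item[(b)] Do what the paper does. Use $(-i\nabla_x+\frac{k+k'}{2})(\cU f)_{k'}=(\cU(-i\nabla f))_{k'}+\frac{k-k'}{2}(\cU f)_{k'}$. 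This turns the $K$-growth of $h_e$ into a factor $(1+|k-k'|^2)$ times $L^2_{\rm per}$-norms of $(\cU u_\eps)_{k'}$ and of $\cU$ applied to derivatives of $u_\eps$. Those norms are $\mathbb L^*$-periodic in $k'$, so their supremum over $\R^d$ equals their supremum over $\Omega^*$. Lemma \ref{lem:Uf-cH} then bounds it by $\|\vec v_\eps\|_{W^{d+3,1}}$.
\end{enumerate}

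\textbf{Two smaller points.}
\begin{itemize}
\item Under Assumption \ref{ass:localization} that supremum is only $\cO(\eps^{-d/4})$; for instance $\|T_\eps\vec v\|_{L^1}=\eps^{-d/4}\|\vec v\|_{L^1}$. Your choice of $N$ ignores this loss. You need $(N-d)(s_1-s_2)\geq\frac12(m+1)+\frac d4$, which is the paper's condition on $M$.
\item What you call Schur's test is really the bound $\sup_{k\in\Omega^*}\int\|\mathcal K(k,k')\|\,dk'$ times a supremum of fiber norms. A genuine $L^2(\R^d_{k'})$ Schur argument is not available, because the fiber norms are periodic and hence not square-integrable over $\R^d$.
\end{itemize}
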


\medskip

\noindent{\bf Step 3. Reduction of $I(k)$.}

We recall the Poisson summation formula for $u_\eps$:
\begin{align}\label{poisson}
    \left(\cU u_\eps \right)_k&=\vec{u}^T(x)\sum_{R\in \mathbb{L}}\vec{v}_\eps(x+R)e^{-ik\cdot(x+R)}\notag \\
    &=|\Omega|^{-1}\vec{u}^T(x) \sum_{G\in \mathbb{L}^*}e^{i G\cdot x} \mathcal{F}(\vec{v}_\eps)\left(G+k\right)
\end{align}
According to Assumption \ref{ass:localization}, $\mathcal{F}(\vec{v}_\eps)$ decays rapidly. Thus, we split $\|I\|_\cH$ into the following two parts for $\eps$ small enough:
\begin{align*} \|I\|_{\cH}^2&=\fint_{\Omega^*}\|I(k)\|_{L^2_{\rm per}}^2dk\\
    &=\frac{1}{|\Omega^*|}\int_{\substack{k\in \Omega^*\\{\rm dist}(k,\mathbb L^*)< 4\eps^{s_2}}}\|I(k)\|_{L^2_{\rm per}}^2dk+ \frac{1}{|\Omega^*|}\int_{\substack{k\in \Omega^*\\ {\rm dist}(k,\mathbb L^*)\geq  4\eps^{s_2}}}\|I(k)\|_{L^2_{\rm per}}^2dk\\
    &=\frac{1}{|\Omega^*|}\int_{|k|< 4\eps^{s_2}}\|I(k)\|_{L^2_{\rm per}}^2dk+ \frac{1}{|\Omega^*|}\int_{\substack{k\in \Omega^*\\ {\rm dist}(k,\mathbb L^*)\geq  4\eps^{s_2}}}\|I(k)\|_{L^2_{\rm per}}^2dk
\end{align*}
where in the last equality, by the definition of $\Omega^*$, i.e., \eqref{eq:Omega*}, for any $k\in \Omega^*$,
\begin{align}
    |k|={\rm dist}(k,\mathbb L^*).
\end{align}
The following is a consequence of the decay of $\mathcal{F}(\vec{v}_\eps)$.
\begin{lemma}\label{lem:3.4}
For any $0<\eps<1$, we have
    \begin{align*}
       \left( \int_{\substack{k\in \Omega^*\\ {\rm dist}(k,\mathbb L^*)\geq  4\eps^{s_2}}}\|I(k)\|_{L^2_{\rm per}}^2dk\right)^{1/2} =\cO(\eps^{\frac{m+1}{2}}).
    \end{align*}
\end{lemma}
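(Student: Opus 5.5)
The plan is to insert the Poisson summation formula \eqref{poisson} into the definition of $I(k)$, so that $I(k)$ becomes an explicit oscillatory integral in $(k',X)$ acting on terms $e^{iG\cdot x}\mathcal F(\vec v_\eps)(G+k')$. The cut-off $\widetilde\chi_\eps(k-k')$ forces $|k-k'|\le 2\eps^{s_2}$. Hence, for $k\in\Omega^*$ with ${\rm dist}(k,\mathbb L^*)\ge 4\eps^{s_2}$, every $k'$ in the support satisfies $|G+k'|\ge {\rm dist}(k',\mathbb L^*)\ge 2\eps^{s_2}$ for all $G\in\mathbb L^*$. The whole argument then rests on the fact that $\mathcal F(\vec v_\eps)$ is tiny away from the origin at scale $\eps^{s_2}\gg\eps^{1/2}$.

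\textbf{Step 1 (decay of $\mathcal F(\vec v_\eps)$).} Since $k_0=0$, integration by parts gives $|\mathcal F(\vec v_\eps)(\xi)|\le |\xi|^{-2M}\|\Delta^M\vec v_\eps\|_{L^1}$. By Assumption \ref{ass:localization} the right-hand side is $\lesssim \eps^{M-d/4}|\xi|^{-2M}$. For $|\xi|\ge 2\eps^{s_2}$ this is bounded by $\eps^{M-d/4-2Ms_2}|\xi|^{-2M}$. Because $1-2s_2=\frac{1}{\mathfrak n d(m+1)}>0$, the exponent $M(1-2s_2)-d/4$ can be made as large as we like by choosing $M$ large. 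The extra factor $|\xi|^{-2M}$ with $2M>d$ makes the sum over $G\in\mathbb L^*$ converge uniformly in $k'$. Combining these,
\begin{align*}
\sum_{G\in\mathbb L^*}(1+|G|)^{2}\,|\mathcal F(\vec v_\eps)(G+k')| =\cO(\eps^{N})
\end{align*}
uniformly for such $k'$, for any fixed $N$.

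\textbf{Step 2 (bounding $I(k)$ pointwise).} First I will make the $X$-integral explicit. It equals $\eps^d\,\widehat{\chi_\eps}((k-k')/\eps)$-type kernels weighted by $h_e$. Since $h_e((k+k')/2,X)$ is a polynomial of degree at most $2$ in $k$, its contribution can be taken outside the $X$-integral, apart from the smooth $X$-dependence of $\bA$ and $V$. That smooth dependence is handled by integrating by parts in $X$, or by bounding crudely: $|X-X_0|\le 2\eps^{1-s_1}$ on the support, so the $X$-volume is $\cO(\eps^{d(1-s_1)})$. Prefactors are at most polynomial in $\eps^{-1}$, namely $\eps^{-d}$ times the $X$-volume times the $k'$-volume $\eps^{ds_2}$.

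Next, I use that $h_e(\kappa,X)$ maps $H^2_{\rm per}$ to $L^2_{\rm per}$ with norm $\lesssim (1+|\kappa|)^2$, and that $\|e^{iG\cdot x}\vec u\|_{H^2_{\rm per}}\lesssim (1+|G|)^2\|\vec u\|_{H^2_{\rm per}}$. Together with Step 1 this gives $\|I(k)\|_{L^2_{\rm per}}\lesssim \eps^{-C}\eps^{N}$ with $C$ depending only on $d$. Choosing $N$ large yields $\cO(\eps^{(m+1)/2})$ uniformly in $k$. Integrating over the bounded set $\Omega^*$ then gives the claim.

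\textbf{Main obstacle.} I expect the main obstacle to be Step 2: controlling the operator $h_e$ acting on the periodic factor $e^{iG\cdot x}\vec u$ uniformly in $G$. Only $\vec u\in H^2_{\rm per}$ is assumed, so the growth in $G$ must be absorbed by the polynomial weight $(1+|G|)^2$ from Step 1. I will need to check carefully that the decay of Step 1 is summable with this weight, which requires $2M>d+2$.
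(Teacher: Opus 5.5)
Your proposal is correct and follows essentially the same route as the paper: Poisson summation, the support observation ${\rm dist}(k',\mathbb L^*)\ge 2\eps^{s_2}$, the identity $\mathcal F(\vec v_\eps)(\xi)=|\xi|^{-2M}\mathcal F((-\Delta)^M\vec v_\eps)(\xi)$ with $\|(-\Delta)^M\vec v_\eps\|_{L^1}\lesssim\eps^{M-d/4}$, the crude volume bound giving the prefactor $\eps^{(s_2-s_1)d}$, the $(1+|G|^2)\|\vec u\|_{H^2_{\rm per}}$ bound on $h_e$ (the paper gets it via gauge invariance $h_e(\kappa,X)e^{iG\cdot x}=e^{iG\cdot x}h_e(\kappa+G,X)$), and then $M$ large using $1-2s_2>0$. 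The only thing to tidy is Step 1, where you use the factor $|\xi|^{-2M}$ twice, once for smallness and once for summability; as the paper does, treat $G=0$ (where $|k'|^{-2M}\le(2\eps^{s_2})^{-2M}$ supplies the smallness) separately from $G\neq0$ (where $|G+k'|\gtrsim|G|$, so the sum converges and $\eps^{M-d/4}$ alone gives the smallness).
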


\medskip

\noindent{\bf Step 4. From $h(k,X)$ to symbols of effective Hamiltonian}

It remains to study $I(k)$ with $|k|={\rm dist}(k,\mathbb L^*)< 4\eps^{s_2}$. For $k'\in \Supp(\widetilde{\chi}_\eps(k-\cdot))$, we have $|k'|\leq 6\eps^{s_2}$. Then using the Poisson summation \eqref{poisson} and the decay of $\mathcal{F}(\vec{v}_\eps)$ again, 
\begin{align*}
     \left(\cU u_\eps \right)_{k'}\approx |\Omega|^{-1}  \vec{u}^T(x)\mathcal{F}(\vec{v}_\eps)\left(k'\right).
\end{align*}
Next, we decompose $h_e(k,X)$ as
\begin{align}\label{eq:h-decomp}
    h_e(k,X)&=\sum_{1\leq j\leq J}\widetilde{E}_{j}(k,X)\left|\phi_{j}\right>\left<\phi_{j}\right|(k,X)+h_{e}^\bot(k,X)
\end{align}
with
\begin{align*}
\widetilde{E}_{j}\left(k,X\right):=E_{j}\left(k,X\right) -e_0.
\end{align*}
By Assumption \ref{ass:bandstructure-nondegenerate} and \eqref{eq:ass1-3}, around $(k_0,X_0)$,
\begin{align*} 
\MoveEqLeft \sum_{1\leq j\leq J}\widetilde{E}_{j}\left|\phi_{j}\right>\left<\phi_{j}\right|(k,X)\approx \sum_{1\leq j\leq J}\Big|\vec{w}^T \vec{\alpha}_j(k,X) \Big>\Big< \vec{w}^T \Big(h^{\rm eff}(k,X) \vec{\alpha}_j(k,X)\Big)\Big|.
\end{align*}
For the part of the operator $h(k,X)$ around $e_0$, we construct the following integration:
\begin{align}\label{eq:I1-eff}
    I^{\rm eff}_1(k):&=\frac{1}{(2 \pi\eps)^d|\Omega|}\int_{\R^d\times \R^d}  e^{-i\frac{(k-k')\cdot X}{\eps}} \chi_\eps(X)\widetilde{\chi}_\eps(k-k')\notag\\
   &\qquad\qquad\qquad\times \vec{w}^T(x) h^{\rm eff}\left(\tfrac{k+k'}{2},X\right) \mathcal{M}_{\vec{u}}\mathcal{F}(\vec{v}_\eps)(k')  \;dk'dX.
\end{align}
For the part $h_e^\bot(k,X)$, using Taylor's expansion \eqref{eq:taylor expansion}, we construct the following integration:
\begin{align}\label{eq:I2-eff}
     I^{\rm eff}_2(k):&=\frac{1}{(2\pi \eps)^d|\Omega|}\int_{\R^d\times \R^d}  e^{-i\frac{(k-k')\cdot X}{\eps}}\chi_\eps(X)\widetilde{\chi}_\eps(k-k') \notag\\
   & \times \sum_{\substack{\beta,\gamma\in \N^d\\ |\beta+\gamma|_1\leq m}}\frac{1}{\beta!\gamma!} (h^\bot_{e,\beta,\gamma,0} \vec{u})^T \left[ \left(\tfrac{k+k'}{2}\right)^{\beta} (X-X_0)^\gamma \mathcal{F}(\vec{v}_{\eps})(k') \right] dk'dX.
\end{align}
Then we will show the following.
\begin{lemma}\label{lem:3.5}
 For any $\eps<1$, we have
    \begin{align*}
       \left( \int_{|k|< 4\eps^{s_2}}\|I(k)- I_1^{\rm eff}(k)-I_2^{\rm eff}(k)\|_{L^2_{\rm per}}^2dk\right)^{1/2}=\cO(\eps^{\frac{1}{2}m+\frac{1}{4}}).
    \end{align*}
\end{lemma}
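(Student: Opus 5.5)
We prove Lemma \ref{lem:3.5} by working on the region $|k|<4\eps^{s_2}$. There we replace, inside $I(k)$, first the Bloch transform $(\cU u_\eps)_{k'}$ and then the symbol $h_e$ by their local approximations, and we control each replacement error in $L^2(\{|k|<4\eps^{s_2}\};L^2_{\rm per})$.

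\textbf{Step 1: keep only the $G=0$ Poisson term.} On the support of $\chi_\eps(X)\widetilde\chi_\eps(k-k')$ we have $|X-X_0|\le 2\eps^{1-s_1}$ and $|k'|\le 6\eps^{s_2}$. In the Poisson formula \eqref{poisson}, the terms with $G\ne 0$ satisfy $|G+k'|\gtrsim 1$. Their contribution is therefore bounded by the Fourier decay coming from Assumption \ref{ass:localization}, namely $|\mathcal F(\vec v_\eps)(\xi)|\lesssim |\xi|^{-2M}\|\Delta^M\vec v_\eps\|_{L^1}=\cO(\eps^{M-d/4})$. We then need to bound the operator acting on this remainder. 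We do this by integrating by parts in $X$ to produce factors $(k-k')$, using the smoothness of $h_e$ as a map $H^2_{\rm per}\to L^2_{\rm per}$ and the volume of the integration region. This gives a remainder of order $\cO(\eps^{\infty})$. After this step we may replace $(\cU u_\eps)_{k'}$ by $|\Omega|^{-1}\vec u^T\mathcal F(\vec v_\eps)(k')$.

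\textbf{Step 2: split $h_e$ and Taylor expand.} We use the decomposition \eqref{eq:h-decomp}. For the $h_e^\bot$ part we apply the Taylor expansion \eqref{eq:taylor expansion} at $(0,X_0)$, evaluated at $(\frac{k+k'}{2},X)$. The kept terms give exactly $I_2^{\rm eff}$. The remainder is $\cO_{\cB(H^2_{\rm per};L^2_{\rm per})}(\eps^{(m+1)s_2}+\eps^{(m+1)(1-s_1)})$, which is essentially $\eps^{(m+1)/2-\delta}$, where $\delta$ is the tiny loss built into $s_1,s_2$ through $\mathfrak n$.

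For the $J$-band part we use Assumption \ref{ass:bandstructure-nondegenerate}. By \eqref{eq:ass1-1}, each $\widetilde E_j$ equals $\lambda_j^{\rm eff}$ up to the same order. By \eqref{eq:ass1-2}, each $\phi_j$ equals $\vec w^T\vec\alpha_j$ up to $\cO(|k|+|X-X_0|)$. The main term $\sum_j\lambda_j^{\rm eff}|\vec w^T\vec\alpha_j\rangle\langle\vec w^T\vec\alpha_j|$, applied to $\vec u^T\mathcal F(\vec v_\eps)$, reproduces $\vec w^T h^{\rm eff}\mathcal M_{\vec u}\mathcal F(\vec v_\eps)$ by the spectral decomposition \eqref{eq:ass1-3}. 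This gives $I_1^{\rm eff}$.

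The eigenfunction error is of size $\cO(|k|+|X-X_0|)$. Since it multiplies $\lambda^{\rm eff}=\cO(|k|^m+|X-X_0|^m)$, the product is again of order $m+1$. Using instead the projector $P^\parallel$, which is smooth in $(k,X)$ even at band crossings, we avoid any dependence on the non-smooth labelling of the $\phi_j$.

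\textbf{Step 3: estimate the remainders in $L^2$.} Each remainder has the form
\[
\frac{1}{(2\pi\eps)^d}\int e^{-i(k-k')\cdot X/\eps}\chi_\eps\widetilde\chi_\eps R\left(\tfrac{k+k'}{2},X\right)\mathcal F(\vec v_\eps)(k')\,dk'dX,
\]
where $\|R\|\lesssim\eps^{(m+1)/2-\delta}$ pointwise on the support. The difficulty is that a crude $L^1$ bound loses the factor $\eps^{-d}\cdot\mathrm{vol}(\text{supp }X)\cdot\mathrm{vol}(\text{supp }k')$. To avoid this we treat the remainder operator as a Weyl-type operator with a symbol cut off to a box of size $\eps^{1-s_1}\times\eps^{s_2}$. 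We bound its norm by a Schur test, or alternatively by Cauchy–Schwarz in $k'$ combined with $\|\mathcal F(\vec v_\eps)\|_{L^2}\lesssim\eps^{-d/4}\cdot\eps^{d/4}$, using the normalization $\eps^{d/4}$ from $T_\eps$.

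\textbf{Main obstacle.} This is the key estimate. The product of the box volumes is $\eps^{d(1-s_1+s_2)}=\eps^{d-1/(\mathfrak n(m+1))}$, so against the prefactor $\eps^{-d}$ we lose only $\eps^{-1/(\mathfrak n(m+1))}$. The choice $\mathfrak n=8+16/(m+1)$ is what should make the total loss, including $\delta$ and the $\sqrt{|\{|k|<4\eps^{s_2}\}|}\sim\eps^{ds_2/2}$ gain, at most $\eps^{-1/4}$. That yields $\cO(\eps^{m/2+1/4})$. I expect the careful bookkeeping of these exponents, and of the $H^2_{\rm per}$ norm of $\vec u$ through $h_e$, to be the delicate part. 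If the Schur bound turns out too lossy, I would instead integrate by parts in $k'$ to exploit the phase $e^{-i(k-k')\cdot X/\eps}$ together with the smoothness of $\chi_\eps,\widetilde\chi_\eps$.
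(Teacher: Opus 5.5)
Your plan is correct and is essentially the paper's proof. It uses the same chain: $I\to I_1$ by discarding the $G\neq0$ Poisson terms through the decay of $\mathcal F(\vec v_\eps)$; the split \eqref{eq:h-decomp}, with the Taylor remainder of $h_e^\bot$ producing $I_2^{\rm eff}$; the band comparison via \eqref{eq:ass1-1}--\eqref{eq:ass1-2}; and the identity $\sum_j\vec\alpha_j\overline{\vec\alpha}_j^T=\1_{\C^J}$ producing $I_1^{\rm eff}$.

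For the estimates, the paper simply uses the crude volume bound that you yourself show costs only $\eps^{-1/(\mathfrak n(m+1))}$. It combines this with $|\mathcal F(\vec v_\eps)(k')|\lesssim\|\vec v_\eps\|_{L^1}=\cO(\eps^{-d/4})$ and the factor $\eps^{ds_2/2}$ from the measure of $\{|k|<4\eps^{s_2}\}$. So neither the integration by parts in $X$ in your Step 1 nor the Schur test is needed; the integration by parts would not help there anyway, since $|k-k'|$ is not bounded below on $\Supp\widetilde\chi_\eps(k-\cdot)$. Your Cauchy--Schwarz variant also works, provided you derive $\|\mathcal F(\vec v_\eps)\|_{L^2}=\cO(1)$ from Assumption \ref{ass:localization} rather than from the normalization of $T_\eps$, since the lemma is stated for general $\vec v_\eps$.
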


\medskip

\noindent{\bf Step 5. Effective Hamiltonian.} We now focus on $I_1^{\rm eff}(k)$ and $I_2^{\rm eff}(k)$. Arguing as for Step 3. Lemma \ref{lem:3.4}, we obtain the following.
\begin{lemma}\label{lem:3.6'}
For any $0<\eps<1$, we have
\begin{align*}
\left( \int_{\substack{k\in \Omega^*\\ {\rm dist}(k,\mathbb L^*)\geq  4\eps^{s_2}}}\|I_1^{\rm eff}(k)+I_2^{\rm eff}(k)\|_{L^2_{\rm per}}^2dk\right)^{1/2} =\cO(\eps^{\frac{1}{2}m+\frac{1}{4}}).
\end{align*}
\end{lemma}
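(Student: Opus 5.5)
The bound in Lemma \ref{lem:3.6'} concerns $k$ in $\Omega^*$ that stay at distance at least $4\eps^{s_2}$ from $\mathbb L^*$. The plan is to exploit, exactly as in Lemma \ref{lem:3.4}, that the integrands of $I_1^{\rm eff}$ and $I_2^{\rm eff}$ contain $\mathcal F(\vec v_\eps)(k')$ with $|k-k'|\le 2\eps^{s_2}$. Hence $|k'|\ge {\rm dist}(k',\mathbb L^*)\ge 2\eps^{s_2}$, so only the high-frequency tail of $\mathcal F(\vec v_\eps)$ enters. Because $k_0=0$, Assumption \ref{ass:localization} gives $\eps^{d/4}\|\Delta^M\vec v_\eps\|_{L^1}=\cO(\eps^M)$. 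Therefore
\begin{align*}
|\mathcal F(\vec v_\eps)(k')|\le |k'|^{-2M}\|\Delta^M\vec v_\eps\|_{L^1}\lesssim \eps^{M-d/4}\,\eps^{-2Ms_2}=\eps^{M(1-2s_2)-d/4},
\end{align*}
and since $1-2s_2=\frac{1}{\mathfrak n d(m+1)}>0$, choosing $M$ large makes this an arbitrarily high power of $\eps$.

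The first step is to integrate out $X$ explicitly. The $X$-dependence is polynomial times $\chi_\eps(X)$: for $I_1^{\rm eff}$ through $h^{\rm eff}$, a polynomial of degree $m$ in $(\frac{k+k'}{2},X-X_0)$, and for $I_2^{\rm eff}$ through $(X-X_0)^\gamma$. So
\begin{align*}
\frac{1}{(2\pi\eps)^d}\int e^{-i(k-k')\cdot X/\eps}\chi_\eps(X)(X-X_0)^\gamma\,dX
\end{align*}
is, up to the phase $e^{-i(k-k')\cdot X_0/\eps}$, equal to $\eps^{-d}(i\eps\partial_\xi)^\gamma$ of $\eps^{d(1-s_1)}\widehat\chi(\eps^{-s_1}\xi)$ at $\xi=k-k'$. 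Its $L^1_{k'}$ norm is bounded by $\cO(\eps^{|\gamma|(1-s_1)})\lesssim 1$ uniformly in $k$. The resulting kernel $K_\eps(k-k')$ is then integrable with bounded $L^1$ norm. The factors $(\frac{k+k'}{2})^\beta$ are bounded by $1$ on the bounded region in question, and $\mathcal M_{\vec u}$, $\vec w$, $h^\bot_{e,\beta,\gamma,0}\vec u$ are fixed $L^2_{\rm per}$ objects with $\vec u\in H^2_{\rm per}$. The $L^2_{\rm per}$ norm of the integrand is thus bounded by $C|K_\eps(k-k')|\,\widetilde\chi_\eps(k-k')\,|\mathcal F(\vec v_\eps)(k')|$.

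The second step combines the two estimates. Young's inequality (or simply $\sup_{k'}$ times $\|K_\eps\|_{L^1}$) on the bounded set $\Omega^*$ gives
\begin{align*}
\sup_k\|I_1^{\rm eff}(k)+I_2^{\rm eff}(k)\|_{L^2_{\rm per}}\lesssim \sup_{|k'|\ge 2\eps^{s_2}}|\mathcal F(\vec v_\eps)(k')|=\cO(\eps^{N})
\end{align*}
for any $N$. Integrating over the bounded $\Omega^*$ yields the claim, even with room to spare beyond $\eps^{\frac m2+\frac14}$.

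The main obstacle is the careful bookkeeping of the oscillatory $X$-integral against the cutoff $\widetilde\chi_\eps$. One must make sure the derivatives of $\widehat\chi$ generate no negative powers of $\eps$ exceeding those gained, i.e. that $\eps^{-d}\cdot\eps^{d(1-s_1)}\cdot\eps^{d s_1}$ exactly normalizes the $L^1$ norm. If this is delicate, I would instead first integrate in $k'$ by parts using $\Delta^M$ on $\mathcal F(\vec v_\eps)$ as in the proof of Lemma \ref{lem:3.4}, and reuse that lemma's estimates verbatim, replacing $h_e$ by the polynomial symbols.
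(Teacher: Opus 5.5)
Your proposal is correct and matches the paper's argument: the paper reuses the proof of Lemma \ref{lem:3.4}, noting that $|k'|\ge 2\eps^{s_2}$ on the support of $\widetilde\chi_\eps(k-\cdot)$, writing $\mathcal F(\vec v_\eps)(k')=|k'|^{-2M}\mathcal F((-\Delta)^M\vec v_\eps)(k')$, and using the decay in Assumption \ref{ass:localization}. The one difference is that you integrate out $X$ exactly to get a kernel with uniformly bounded $L^1$ norm, whereas the paper (as in your fallback) bounds the $(k',X)$-integral in absolute value and absorbs the resulting loss $\eps^{(s_2-s_1)d}$ by taking $M$ large; your refinement is valid but not needed.
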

Next arguing as in Step 1. and Step 2., we have the approximations
\begin{align*}
    I^{\rm eff}_1(k)\approx   \widetilde{\widetilde{\mathfrak{h}}}^{\rm eff}_{1,\eps}(\vec{u},\vec{v}_\eps)(k),\qquad  I^{\rm eff}_2(k)\approx   \widetilde{\widetilde{\mathfrak{h}}}^{\rm eff}_{2,\eps}(\vec{u},\vec{v}_\eps)(k)
\end{align*}
with
\begin{align*}
    \widetilde{\widetilde{\mathfrak{h}}}^{\rm eff}_{1,\eps}(\vec{u},\vec{v}_\eps)(k):= |\Omega|^{-1}\vec{w}^T(x)\Op(\chi_\eps)\Op(h^{\rm eff})\mathcal{M}_{\vec{u}}\mathcal{F}(\vec{v}_\eps)(k)
\end{align*}
and
\begin{align*}
  \MoveEqLeft \widetilde{\widetilde{\mathfrak{h}}}^{\rm eff}_{2,\eps}(\vec{u},\vec{v}_\eps)(k):=|\Omega|^{-1}\!\!\!\!\!\!\!\!\sum_{\substack{\beta,\gamma\in \N^d\\ |\beta+\gamma|_1\leq m}}\!\!\!\!\frac{1}{\beta!\gamma!}(h^\bot_{e,\beta,\gamma,0} \vec{u})^T(x)\Op(\chi_\eps)\Op((k-k_0)^\beta (X-X_0)^{\gamma})\mathcal{F}(\vec v_{\eps})(k).
\end{align*}
Then,
\begin{lemma}\label{lem:3.6}
For any $0<\eps<1$ and $j\in \{1,2\}$, we have
    \begin{align*}
          \left( \fint_{\Omega^*}\|  I^{\rm eff}_j(k)-\widetilde{\widetilde{\mathfrak{h}}}^{\rm eff}_{j,\eps}(\vec{u},\vec{v}_\eps) \|_{L^2_{\rm per}}^2dk\right)^{1/2}=\cO(\eps^{\frac{1}{2}(m+1)}).
    \end{align*}
\end{lemma}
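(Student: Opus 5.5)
The plan for Lemma \ref{lem:3.6} is to write each $I_j^{\rm eff}(k)$ as a scalar (matrix-valued) Weyl integral acting on $\mathcal{F}(\vec v_\eps)$. The periodic factor $\vec w^T(x)$ (respectively $(h^\bot_{e,\beta,\gamma,0}\vec u)^T(x)$) comes out in front of the integral. Then $\|\cdot\|_{L^2_{\rm per}}$ reduces to a finite sum of products of $\|w_j\|_{L^2_{\rm per}}$ or $\|h^\bot_{e,\beta,\gamma,0}\vec u\|_{L^2_{\rm per}}$, which are finite since $\vec u\in H^2_{\rm per}$, with $|\cdot|$ of a scalar function of $k$. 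So it suffices to bound, in $L^2(\Omega^*)$ (and in fact in $L^2(\R^d_k)$), the difference between two scalar objects:
\begin{align*}
\frac{1}{(2\pi\eps)^d}\int e^{-i\frac{(k-k')\cdot X}{\eps}}\chi_\eps(X)\widetilde\chi_\eps(k-k')\,a\!\left(\tfrac{k+k'}{2},X\right)g(k')\,dk'dX
\end{align*}
and $\Op(\chi_\eps)\Op(a)g(k)$. Here $a$ is either $h^{\rm eff}$ (a homogeneous matrix polynomial of degree $m$) or the monomial $k^\beta(X-X_0)^\gamma$, and $g=\mathcal{M}_{\vec u}\mathcal{F}(\vec v_\eps)$ or $\mathcal{F}(\vec v_\eps)$. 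Since $a$ is a polynomial of degree at most $m\le2$, both quantities are explicit.

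The steps mirror Step 1 and Step 2. First I remove the cutoff $\widetilde\chi_\eps(k-k')$ exactly as in Lemma \ref{lem:3.3}. On the support of $1-\widetilde\chi_\eps$ we have $|k-k'|\ge\eps^{s_2}$, so I integrate by parts in $X$ using $e^{-i(k-k')\cdot X/\eps}=\big(\tfrac{i\eps (k-k')}{|k-k'|^2}\cdot\nabla_X\big)^N e^{-i(k-k')\cdot X/\eps}$. Each derivative falling on $\chi_\eps(X)$ gains $\eps^{1-s_1}$, and each derivative falling on $a$ (polynomial in $X$, localized where $|X-X_0|\lesssim\eps^{1-s_1}$) gains a comparable factor. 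This yields a factor $(\eps^{1-s_1}/\eps^{s_2})^{N}\cdot$ (polynomial growth) $=\eps^{N(1-s_1-s_2)}$, up to the exponents. Since $1-s_1-s_2=0$, I need to redo this using $\eps^{1-s_1}\cdot\eps^{-1}\cdot\eps\,|k-k'|^{-1}$ more carefully: the gain per step is $\eps^{1-(1-s_1)}/|k-k'|\le\eps^{s_1-s_2}$, with $s_1-s_2=\frac1{\mathfrak n d(m+1)}>0$. Taking $N$ large, of order $\mathfrak n d(m+1)^2$, gives $\cO(\eps^{(m+1)/2})$. The polynomial weights in $k'$ coming from $a$ are absorbed by the decay of $\mathcal F(\vec v_\eps)$ furnished by Assumption \ref{ass:localization}, via $\eps^{d/4}\|\Delta^M e^{-ik_0\cdot x}\vec v_\eps\|_{L^1}=\cO(\eps^M)$. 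The $L^2_k$ norm is controlled by a Schur test on the resulting kernel.

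Second, without the cutoff, the integral is exactly $\Op(\chi_\eps a)g$. I then compare $\Op(\chi_\eps a)$ with $\Op(\chi_\eps)\Op(a)$ using the exact Moyal product. Because $a$ is polynomial of degree at most $2$, the expansion $\chi_\eps\#a=\chi_\eps a+\frac{i\eps}{2}\{\chi_\eps,a\}+\cO(\eps^2\partial^2\chi_\eps\partial^2a)$ terminates after two terms. Since $\chi_\eps$ depends only on $X$, each correction involves $\partial_X^\alpha\chi_\eps\,\partial_k^\alpha a$, with $\eps^{|\alpha|}\partial^\alpha_X\chi_\eps=\cO(\eps^{|\alpha|s_1})$ and $\partial_k^\alpha a$ of degree $m-|\alpha|$. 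Moreover, $\partial_X\chi_\eps$ is supported in $\eps^{1-s_1}\le|X-X_0|\le2\eps^{1-s_1}$, i.e.\ in physical space where $|x-\eps^{-1}X_0|\ge\eps^{-s_1}$. There Assumption \ref{ass:localization} gives $\|\vec v_\eps\|_{W^{d+m+2,1}}=\cO(\eps^{(m+1)/2})$. Converting $\Op$ with symbols depending on $k$ polynomially into derivatives of $\vec v_\eps$ (the Fourier side), and using Lemma \ref{lem:Uf-cH}-type bounds of $L^2$ norms by $W^{d+1,1}$ norms, each correction term is $\cO(\eps^{(m+1)/2})$.

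The main obstacle is the first step: the phase gain per integration by parts is only $\eps^{s_1-s_2}$, a tiny power. This forces many integrations by parts, and one must check that derivatives hitting the polynomial $a(\frac{k+k'}{2},X)$ and the weight $|k-k'|^{-1}$ do not destroy the gain. I would handle this by splitting dyadically in $|k-k'|$ and bounding the kernel via Young's inequality, paying only polynomial factors in $k'$ that the Fourier decay of $\vec v_\eps$ absorbs.
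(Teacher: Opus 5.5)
Your proposal is correct and follows essentially the same route as the paper: you first drop the cutoff $\widetilde\chi_\eps(k-k')$ by integrating by parts in $X$ exactly as in Lemma~\ref{lem:3.3}, with net gain $\eps^{s_1-s_2}$ per step and therefore many steps. You then compare $\Op(\chi_\eps a)$ with $\Op(\chi_\eps)\Op(a)$ through the terminating Moyal expansion, whose correction terms carry $\partial_X^\gamma\chi_\eps$ supported on the annulus $\eps^{-s_1}\le|x-\eps^{-1}X_0|\le 2\eps^{-s_1}$, where Assumption~\ref{ass:localization} gives $\cO(\eps^{\frac12(m+1)})$. The only difference is cosmetic: the paper bounds pointwise for $k\in\Omega^*$, where $|\tfrac{k+k'}{2}|^m\lesssim 1+|k-k'|^m$ is absorbed by $|k-k'|^{-2M'}$, so your Schur/Young argument on all of $\R^d_k$ (with polynomial weights moved onto $\mathcal{F}(\vec v_\eps)$) is unnecessary, though also valid.
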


\medskip

\noindent{\bf Step 6. Conclusion.} From Lemmas \ref{lem:3.2}-\ref{lem:3.6}, for $\eps$ small enough,
\begin{align}\label{eq:6.9}
   \left\|\Op(h_e)(\cU\Phi_\eps)_{\bullet}-\widetilde{\widetilde{\mathfrak{h}}}^{\rm eff}_{1,\eps}(\vec{u},\vec{v}_\eps)-\widetilde{\widetilde{\mathfrak{h}}}^{\rm eff}_{2,\eps}(\vec{u},\vec{v}_\eps)\right\|_\cH=\cO(\eps^{\frac{1}{2}m+\frac{1}{4}}).
\end{align}
Now,
\begin{align*}
  \MoveEqLeft  |\Omega|\mathcal{F}^{-1}_k\big(\widetilde{\widetilde{\mathfrak{h}}}^{\rm eff}_{1,\eps}(\vec{u},\vec{v}_\eps)\big)(x)\\
  &=\vec{w}^T\mathcal{F}^{-1}\Big(\Op(\chi_\eps)\Op(h^{\rm eff})\mathcal{M}_{\vec{u}}\mathcal{F}(\vec{v}_\eps)\Big)(x)\\
  &=\chi_\eps(\eps x) \vec{w}^T\mathcal{F}^{-1}\Big(\Op(h^{\rm eff})\mathcal{M}_{\vec{u}}\mathcal{F}(\vec{v}_\eps)\Big)(x)
\end{align*}
where in the last equation, we used the fact that
\begin{align*}
   \Op(\chi_\eps)(i\eps \nabla_k)= \chi_\eps(i\eps\nabla_k).
\end{align*}
Thus
\begin{align*}
|\Omega|\mathcal{F}^{-1}_k\big(\widetilde{\widetilde{\mathfrak{h}}}^{\rm eff}_{1,\eps}(\vec{u},\vec{v}_\eps)\big)(x)=\Phi_\eps(\mathfrak{h}_{1,\eps}^{\rm eff}(\vec{u}\otimes\vec{v}_\eps))(x).
\end{align*}
Analogously,
\begin{align*}
|\Omega|\mathcal{F}^{-1}_k\big(\widetilde{\widetilde{\mathfrak{h}}}^{\rm eff}_{2,\eps}(\vec{u},\vec{v}_\eps)\big)(x)=\Phi_\eps(\mathfrak{h}_{2,\eps}^{\rm eff}(\vec{u}\otimes\vec{v}_\eps))(x).
\end{align*}
Thus under Assumption \ref{ass:localization}, using Lemma \ref{lem:Uf-cH} again, for $j=1,2$
\begin{align*}
  \left\||\Omega|\cU\Big( \mathcal{F}^{-1}_k\big(\widetilde{\widetilde{\mathfrak{h}}}^{\rm eff}_{j,\eps}(\vec{u},\vec{v}_\eps)\big) \Big)_\bullet -\widetilde{\widetilde{\mathfrak{h}}}^{\rm eff}_{j,\eps}(\vec{u},\vec{v}_\eps)\right\|_\cH=\cO(\eps^{\frac12(m+1)}).
\end{align*}
Combining this with \eqref{eq:6.9},
\begin{align*}
       \left\|\Op(h_e)(\cU\Phi_\eps(\vec{u}\otimes\vec{v}_\eps))_{\bullet}-\cU\Big(\Phi_\eps(\mathfrak{h}_{\eps}^{\rm eff}(\vec{u}\otimes\vec{v}_\eps))\Big)_\bullet\right\|_\cH=\cO(\eps^{\frac12m+\frac{1}{4}}).
\end{align*}
As a result,
\begin{align*}
       \left\|(H_\eps-e_0)\Phi_\eps(\vec{u}\otimes\vec{v}_\eps)-\Phi_\eps(\mathfrak{h}_{\eps}^{\rm eff}(\vec{u}\otimes\vec{v}_\eps))\right\|_{L^2(\R^d;\C^n)}=\cO(\eps^{\frac12m+\frac{1}{4}}).
\end{align*}
This proves the theorem.
\end{proof}

We now prove Lemmas \ref{lem:3.2}-\ref{lem:3.6}.

\subsection{Proof of Lemma \ref{lem:3.2}}\label{sec:3.2}

Note that $k\mapsto h(k,\bullet)$ is a polynomial of degree at most $2$ and $X\mapsto \chi_\eps(X)$ is independent of $k$. According to Moyal product for Weyl quantization, 
\begin{align*}
    \MoveEqLeft \Op(h_e)\Op(\chi_\eps)=\Op(h_e\chi_\eps)\\
     &-\frac{i \eps}{2}\sum_{1\leq j\leq d}\Op\big(\partial_{k_j} h_e \cdot \partial_{X_j} \chi_\eps\big)-\frac{\eps^2}{8}\sum_{1\leq j,\ell\leq d}\Op(\partial_{k_j}\partial_{k_\ell} h_e\partial_{X_j}\partial_{X_\ell}\chi_\eps)
\end{align*}
where in the last term, we use the fact that $(\partial_{k_j}\partial_{k_\ell} h_e)$ is a constant independent of $k$ and $X$. Thus,
\begin{align}\label{eq:hphi-hchiu}
   \MoveEqLeft  \|\Op(h_e)(\cU\Phi_\eps)-\Op(h_e \chi_\eps)(\cU u_\eps)\|_{\cH}\notag\\
   &\lesssim  \eps\sum_{1\leq j\leq d}\|\Op\big(\partial_{k_j} h_e \cdot \partial_{X_j} \chi_\eps\big)(\cU u_\eps)\|_{\cH}+\eps^2\sum_{1\leq j,\ell\leq d}\|\Op (\partial_{X_j}\partial_{X_\ell}\chi_\eps)(\cU u_\eps)\|_{\cH}.
\end{align}

We first study the last term of \eqref{eq:hphi-hchiu}. Note that
\begin{align*}
    \eps^2\Op (\partial_{X_j}\partial_{X_\ell}\chi_\eps)(\cU u_\eps)=\eps^{2s_1}\left(\cU\Big((\partial_{j}\partial_\ell \chi)(\eps^{s_1} x-\eps^{s_1-1}X_0)\vec{u}^T\vec{v}_\eps  \Big)\right)
\end{align*}
By Assumption \ref{ass:localization} and Lemma \ref{lem:Uf-cH} with $n=d+1$,
\begin{align*}
\MoveEqLeft\eps^{2s_1}\Big\|\cU\Big((\partial_{j}\partial_\ell \chi)(\eps^{s_1} x-\eps^{s_1-1}X_0)\vec{u}^T\vec{v}_\eps \Big)\Big\|_{\cH}\\
   & \lesssim \eps^{2s_1}\int_{\R^d} \left|(\partial_{j}\partial_\ell \chi)(\eps^{s_1} x-\eps^{s_1-1}X_0)v_\eps\right|dx\\
   &\quad+ \eps^{2s_1}\int_{\R^d} \left|(-\Delta)^{(d+1)/2}[\partial_{j}\partial_\ell \chi)(\eps^{s_1} x-\eps^{s_1-1}X_0)\vec{v}_\eps]\right|dx\\
   &\lesssim \eps^{2s_1}\|\vec{v}_\eps\|_{W^{d+1,1}(\R^d\setminus B_{\eps^{-s_1}}(\eps^{-1}X_0))}=\cO(\eps^{\frac12(m+1)})
\end{align*}
where we used the fact that $\Supp((\partial_{j}\partial_\ell \chi)(\eps^{s_1}\bullet-\eps^{s_1-1}X_0))\cap B_{\eps^{-s_1}}(\eps^{-1}X_0)=\emptyset$. Thus,
\begin{align}\label{eq:partial2chi u}
     \MoveEqLeft  \eps^2\|\Op (\partial_{X_j}\partial_{X_\ell}\chi_\eps)(\cU u_\eps)\|_\cH=\cO(\eps^{\frac12(m+1)}).
\end{align}

\medskip

Now consider the first term on the right-hand side of \eqref{eq:hphi-hchiu}. By Moyal product,
\begin{align*}
    \Op(\partial_{k_j}h_e \partial_{X_j}\chi_\eps)=\Op(\partial_{k_j}h_e)\Op( \partial_{X_j}\chi_\eps)+\frac{i\eps}{2}\Op(\nabla_k\partial_{k_j}h_e \cdot \nabla_{X}\partial_{X_j}\chi_\eps)
\end{align*}
Arguing as for \eqref{eq:partial2chi u}, we infer
\begin{align*}
     \MoveEqLeft  \eps^2\|\Op(\nabla_k\partial_{k_j}h_e \cdot \nabla_{X}\partial_{X_j}\chi_\eps)(\cU u_\eps)\|_\cH=\cO(\eps^{\frac{1}{2}(m+1)}).
\end{align*}
For the term $\Op(\partial_{k_j}h_e)\Op( \partial_{X_j}\chi_\eps)$,
\begin{align*}
 \MoveEqLeft   \eps \Op(\partial_{k_j}h_e)\Op( \partial_{X_j}\chi_\eps)(\cU u_\eps)\\
 &=\eps^{s_1}\cU\Big((\partial_{j}T)(-i\nabla_x+\bA(x,\eps x))\big[ (\partial_{j}\chi)(\eps^{s_1}x -\eps^{s_1-1}X_0)\vec{u}^T \vec{v}_\eps \big]\Big).
\end{align*}
Then,
\begin{align*}
  \MoveEqLeft \left\|\cU\Big((\partial_{j}T)(-i\nabla_x+\bA(x,\eps x))\big[ (\partial_{j}\chi)(\eps^{s_1}x -\eps^{s_1-1}X_0) \vec{u}^T\vec{v}_\eps \big]\Big)\right\|_{\cH}\\
   &=\Big\|(\partial_{j}T)(-i\nabla_x+\bA(x,\eps x))\big[ (\partial_{j}\chi)(\eps^{s_1}x -\eps^{s_1-1}X_0) \vec{u}^T\vec{v}_\eps \big]\Big\|_{L^2(\R^d)}\\
   &\leq \Big\|(1-\Delta_x)^{1/2}\big[ (\partial_{j}\chi)(\eps^{s_1}x -\eps^{s_1-1}X_0) \vec{u}^T\vec{v}_\eps \big]\Big\|_{L^2(\R^d)}\\
   &\quad+\|\bA(x,\eps x)(\partial_{j}\chi)(\eps^{s_1}x -\eps^{s_1-1}X_0) \vec{u}^T\vec{v}_\eps\|_{L^2(\R^d)}.
\end{align*}
Using Lemma \ref{lem:Uf-cH} again and arguing as above  with $\vec{u}\in H^2_{\rm per}$,
\begin{align*}
  \MoveEqLeft  \eps^{s_1}\left\|\cU\Big((\partial_{\xi_j}T)(-i\nabla_x+\bA(x,\eps x))\big[ (\partial_{j}\chi)(\eps^{s_1}x -\eps^{s_1-1}X_0) \vec{u}^T\vec{v}_\eps \big]\Big)\right\|_{\cH}\\
  &\lesssim \eps^{s_1}\|\vec{v}_\eps\|_{W^{d+2,1}(\R^d\setminus B_{\eps^{-s_1}}(\eps^{-1}X_0))}\\
  &\quad+\eps^{s_1}\sup_{y\in \R^d}\|\bA(y,\eps\bullet)\vec{v}_\eps\|_{W^{d+1,1}\big(B_{2\eps^{-s_1}}(\eps^{-1}X_0)\setminus B_{\eps^{-s_1}}(\eps^{-1}X_0)\big)}\\
  &\lesssim \eps^{s_1}\|\vec{v}_\eps\|_{W^{d+2,1}(\R^d\setminus B_{\eps^{-s_1}}(\eps^{-1}X_0))}=\cO(\eps^{\frac{1}{2}(m+1)}).
\end{align*}
where we used
\begin{align*}
\|\bA(x,X)\|_{W^{d+2,\infty}\big(\R^{d}\times B_{2\eps^{1-s_1}}(X_0)\big)}<\infty
\end{align*}
since $\Supp(\chi)\subset B_{2}(0)$, $\bA(x,X)\in C^\infty(\R^d\times \R^d)$ and $x\mapsto \bA(x,\cdot)$ is $\mathbb L$-periodic. Thus,
\begin{align*}
   \eps \|  \Op(\partial_{k_j}h_e \partial_{X_j}\chi_\eps)\cU(u_\eps)\|_{\cH}=\cO(\eps^{\frac{1}{2}(m+1)}).
\end{align*}
This and \eqref{eq:partial2chi u} show that\begin{align*}
        \|\Op(h_e)\cU(\Phi_\eps)-\Op(h_e \chi_\eps)\cU(u_\eps)\|_{\cH}=\cO(\eps^{\frac{1}{2}(m+1)}).
    \end{align*}
Thus the proof of Lemma \ref{lem:3.2} is completed.

\subsection{Proof of Lemma \ref{lem:3.3}}\label{sec:3.3}
As $(-\Delta_X)e^{-i\frac{(k-k')}{\eps}\cdot X} = \frac{|k-k'|^2}{\eps^2} e^{-i\frac{(k-k')}{\eps}\cdot X} $, by integration by parts, for $M\in \N$,
\begin{align*}
 II(k)&=\frac{1}{(2\pi \eps)^d}\int_{\R^d\times \R^d} \frac{\eps^{2M}}{|k-k'|^{2M}} e^{-i\frac{(k-k')\cdot X}{\eps}}\Big(1-\widetilde{\chi}_\eps(k-k')\Big) \\
     &\qquad\qquad\times (-\Delta_X)^M\left[\chi_\eps(X)h_e\left(\frac{k+k'}{2},X\right)(\cU u_\eps)_{k'}\right] dk'dX.
\end{align*}
For any $f\in \cS(\R^d)$,
\begin{align*}
    \left(-i\nabla_x+\frac{k+k'}{2} \right)(\cU f)_{k'}= (\cU(-i\nabla_x f))_{k'} + \frac{k-k'}{2} (\cU f)_{k'}.
\end{align*}
As $\bA(x,X), V(x,X)$ are $\mathbb L$-periodic w.r.t. $x$ and are smooth w.r.t. $x$ and $X$,
\begin{align*}
 \MoveEqLeft  \left\|(-\Delta_X)^M\left[\chi_\eps(X)h_e\left(\frac{k+k'}{2},X\right)(\cU u_\eps)_{k'}\right]\right\|_{L^2_{\rm per}}\\
   &\lesssim \eps^{2(s_1-1)M}(1+|k-k'|^2)\left(\|(\cU u_\eps)_{k'}\|_{L^2_{\rm per}}+\|(\cU (-\Delta u_\eps))_{k'}\|_{L^2_{\rm per}}\right)\1_{B_{2\eps^{1-s_1}}(X_0)}(X)
\end{align*}
where we used the fact that
\begin{align*}
    \Supp(\chi_\eps)\subset B_{2\eps^{1-s_1}}(X_0)
\end{align*}
and for any $\beta\in \N^d$,
\begin{align*}
    |\partial_X^\beta \chi_\eps|\lesssim \eps^{(s_1-1)|\beta|_1}.
\end{align*}
Thus, for any $k\in \Omega^*$ and $M\geq \frac{d}{2}$, by Lemma \ref{lem:Uf-cH} with $n=d+1$,
\begin{align*}
  \MoveEqLeft  \|II(k)\|_{L^2_{\rm per}}\\
  &\lesssim \eps^{2Ms_1-d}\int_{\substack{|k-k'|\geq \eps^{s_2}\\ |X-X_0|\leq 2\eps^{1-s_1}}} \frac{1+|k-k'|^2}{|k-k'|^{2M}}  \left(\|(\cU u_\eps)_{k'}\|_{L^2_{\rm per}}+\|(\cU (-\Delta u_\eps))_{k'}\|_{L^2_{\rm per}}\right)\;dk' dX\\
    &\lesssim \eps^{(2M-d)(s_1-s_2)} \sup_{k'\in \Omega^*}\left(\|(\cU u_\eps)_{k'}\|_{L^2_{\rm per}}+\|(\cU (-\Delta u_\eps))_{k'}\|_{L^2_{\rm per}}\right)\\
    &\lesssim \eps^{(2M-d)(s_1-s_2)}\|\vec{v}_\eps\|_{W^{d+3,1}(\R^d)}\lesssim \eps^{(2M-d)(s_1-s_2)-\frac{d}{4}}
\end{align*}
where in the last estimate we used Assumption \ref{ass:localization}. Note that
\begin{align}\label{s_1-s_2}
    s_1-s_2=\frac{1}{\mathfrak{n} d(m+1)}>0.
\end{align}
Choosing $M$ large enough such that $(2M-d)(s_1-s_2)-\frac{d }{4}\geq \frac{1}{2}(m+1)$, i.e.,
\begin{align*}
    M\geq \frac{d}{2}+\frac{2m+d+2}{4(s_1-s_2)},
\end{align*}
then
\begin{align}
    \|II\|_{\cH}=\cO(\eps^{\frac{1}{2}(m+1)}).
\end{align}
This proves Lemma \ref{lem:3.3}.

\subsection{Proof of Lemma \ref{lem:3.4}}\label{sec:3.4} 
Consider $k\in \Omega^*$ with ${\rm dist}(k,\mathbb{L}^*)\geq 4\eps^{s_2}$. By \eqref{poisson},
\begin{align*}
   \MoveEqLeft I(k)=\frac{1}{(2\pi \eps)^d|\Omega|}\int_{\R^d\times \R^d}  e^{-i\frac{(k-k')\cdot X}{\eps}}\chi_\eps(X)\widetilde{\chi}_\eps(k-k') \\
   &\quad \times h_e\left(\frac{k+k'}{2},X\right)\sum_{G\in \mathbb{L}^*} e^{iG\cdot x}\vec{u}^T(x)\mathcal{F}(\vec{v}_\eps)\left(G+k'\right)dk'dX\\
   &=\frac{1}{(2\pi \eps)^d |\Omega|}\int_{\R^d\times \R^d}  e^{-i\frac{(k-k')\cdot X}{\eps}}\chi_\eps(X)\widetilde{\chi}_\eps(k-k') \\
   &\quad \times \sum_{G\in \mathbb{L}^*} e^{iG\cdot x} h_e\left(\frac{k+k'}{2}+G,X\right) \vec{u}^T(x)\mathcal{F}(\vec{v}_\eps)\left(G+k'\right)dk'dX
\end{align*}
where in the last equation, we used the Gauge invariance of the operator $T(-i\nabla_x+\bA(x,X))$. 

For $k'\in \Supp(\widetilde{\chi}_\eps(k-\cdot))$ and ${\rm dist}(k,\mathbb{L}^*)\geq 4\eps^{s_2}$, we have ${\rm dist}(k',\mathbb{L}^*)\geq 2\eps^{s_2}$. Then
\begin{align}\label{eq:v-G}
   \mathcal{F}(\vec{v}_\eps)\left(G+k'\right)=\frac{1}{|G+k'|^{2M}}\mathcal{F}\left((-\Delta)^M \vec{v}_\eps\right)\left(G+k'\right).
\end{align}
Hence for $k\in \Omega^*$ with ${\rm dist}(k,\mathbb{L}^*)\geq 4\eps^{s_2}$, using \eqref{eq:v-G} for $k_0=0$ and Assumption \ref{ass:localization}, 
\begin{align*}
    \|I(k)\|_{L^2_{\rm per}}&\lesssim \eps^{(s_2-s_1)d} \sup_{\substack{k'\in \Supp(\widetilde{\chi}_\eps(k-\cdot))\\ {\rm dist}(k',\mathbb{L}^*)\geq 2\eps^{s_2}}}\sum_{G\in\mathbb{L}^*} \frac{ (1+|G|^2)}{|G+k'|^{2M}}\left|\mathcal{F}\left((-\Delta)^M\vec{v}_\eps\right)\left(G+k'\right)\right| \left\|\vec{u}\right\|_{H^2_{\rm per}}\\
    &\lesssim  \eps^{(s_2-s_1)d} \sup_{\substack{k'\in \Supp(\widetilde{\chi}_\eps(k-\cdot))\\ {\rm dist}(k',\mathbb{L}^*)\geq 2\eps^{s_2}}} \sum_{G\in\mathbb{L}^*} \frac{\eps^{M-\frac{d}{4}}(1+|G+k'|^2) }{|G+k'|^{2M}}
\end{align*}
where in the first inequality we used the fact that for $X\in \Supp(\chi_\eps)$ and $k'\in \Supp(\widetilde{\chi}_\eps(k-\cdot))$ with $k\in \Omega^*$,
\begin{align*}
    \left\|h_e\left(\frac{k+k'}{2}+G,X\right) u\right\|_{L^2_{\rm per}}\lesssim  (1+|G|^2)\|u\|_{H^2_{\rm per}},
\end{align*} 
and in the second we used
\begin{align*}
  \sup_{k'\in \R^d}  \left|\mathcal{F}\left((-\Delta)^M\vec{v}_\eps\right)\left(k'\right)\right|\lesssim \|(-\Delta)^M\vec{v}_\eps\|_{L^1(\R^d)}\lesssim\eps^{M-\frac{d}{4}}.
\end{align*}

By the definition of $\Omega^*$, there exists a constant $C$ such that for any $G\in \mathbb L^*\setminus\{0\}$,
 \begin{align}\label{eq:distance-Omega-G}
    {\rm dist}(G,\Omega^*)\geq C |G|>0.
 \end{align}
Thus we consider the case $G=0$ and $G\in \mathbb L^*\setminus\{0\}$ separately, for $2M> d+2$, 
\begin{align*}
     \|I(k)\|_{L^2_{\rm per}}&\lesssim \eps^{\frac{2M+(2s_2-2s_1-1/2)d}{2}}\left(\sup_{\substack{k'\in \Supp(\widetilde{\chi}_\eps(k-\cdot))\\ |k'|\geq 2\eps^{s_2}}}  \frac{(1+|k'|^2) }{|k'|^{2M}} + \sup_{\substack{k'\in \Omega^*}} \sum_{G\in\mathbb{L}^*\setminus\{0\}} \frac{(1+|G+k'|^2) }{|G+k'|^{2M}}\right)\\
     &\lesssim \eps^{\frac{2M+(2s_2-2s_1-1/2)d}{2}}\left(\sup_{|k'|\geq 2\eps^{s_2}}\frac{(1+|k'|^2)}{|k'|^{2M}}+ \sum_{G\in \mathbb L^*\setminus\{0\}} \frac{(1+|G|^2)}{|G|^{2M}}\right)\\
     &=\cO(\eps^{\frac{2M (1-2s_2)+(2s_2-2s_1-1/2)d}{2}}).
\end{align*}
Note that
\begin{align*}
    \frac{1}{2}-s_2=\frac{1}{2\mathfrak{n} d(m+1)}>0.
\end{align*}
Then choosing $M$ such that $2M (1-2s_2)+(2s_2-2s_1-1/2)d\geq (m+1)$, i.e.,
\begin{align*}
    M&\geq \frac{(m+1)-(2s_2-2s_1-1/2)d}{(2-4s_2)},
\end{align*}
gives
 \begin{align*}
       \left( \int_{\substack{k\in \Omega^*\\ {\rm dist}(k-k_0,\mathbb L^*)\geq  4\eps^{s_2}}}\|I(k)\|_{L^2_{\rm per}}^2dk\right)^{1/2}=\cO(\eps^{\frac{1}{2}(m+1) }).
    \end{align*}
Hence Lemma \ref{lem:3.4}.

\subsection{Proof of Lemma \ref{lem:3.5}}\label{sec:3.5}
We now consider the case $|k|< 4\eps^{s_2}$. For any $k'\in \Supp(\widetilde{\chi}_\eps(k-\cdot))$, we have
\begin{align*}
    |k'|\leq 6\eps^{s_2}.
\end{align*} 
To prove this lemma, we split the proof into the following four steps.

\medskip

\noindent{\bf Step 1. Reduction of $(\cU u_\eps)_{k'}(x)$.} We replace $(\cU u_\eps)_{k'}(x)$ by $\vec{u}^T(x)\mathcal{F}(\vec{v}_\eps)(k')$ in $I(k)$. Let
\begin{align*}
    I_1(k):=\frac{1}{(2\pi\eps)^d |\Omega|}\int_{\R^d\times \R^d}  e^{-i\frac{(k-k')\cdot X}{\eps}}\chi_\eps(X)\widetilde{\chi}_\eps(k-k') h_e\left(\frac{k+k'}{2},X\right) \vec{u}^T\mathcal{F}(\vec{v}_\eps)\left(k'\right)\;dk'dX.
\end{align*}
For $|k|< 4\eps^{s_2}$, using \eqref{poisson} and \eqref{eq:v-G} again, and arguing as for Lemma \ref{lem:3.4} with same $M$,
\begin{align*}
 \MoveEqLeft   \left\|I(k)-I_1(k)\right\|_{L^2_{\rm per}}\\
 &\lesssim \sum_{G\in\mathbb L^*\setminus\{0\}}\eps^{-d}\int_{\R^d\times \R^d}  \chi_\eps(X)\widetilde{\chi}_\eps(k-k') |\mathcal{F}(\vec{v}_\eps)\left(k'+G\right)|\left\|h_e\left(\frac{k+k'}{2}+G,X\right) \vec{u}\right\|_{L^2_{\rm per}}\;dk'dX \\
  &\lesssim \eps^{-d}\sum_{G\in\mathbb L^*\setminus\{0\}}\int_{\R^d\times \R^d}  \chi_\eps(X)\widetilde{\chi}_\eps(k-k') \left|\mathcal{F}(\vec{v}_\eps)\left(k'+G\right)\right|(1+|G|^2) \left\|\vec{u}\right\|_{H^2_{\rm per}}\;dk'dX \\
 &\lesssim \eps^{(s_2-s_1)d}\sup_{|k'|\leq 6\eps^{s_2}}\sum_{G\in\mathbb{L}^*\setminus\{0\}} \frac{(1+|G|^2)}{|G+k'|^{2M}}\left|\mathcal{F}\left((-\Delta)^M\vec{v}_\eps\right)\left(G+k'\right)\right| \\
 &=\cO(\eps^{\frac12(m+1)})
\end{align*}
where in the last estimate we used the fact that for any $G\in \mathbb L^*\setminus \{0\}$,
\begin{align*}
   \inf_{|k'|\leq 6\eps^{s_2}} |G+k'|\geq C|G|>0.
\end{align*}
Thus,
 \begin{align}\label{eq:I-I1}
       \left( \int_{|k|<  4\eps^{s_2}}\|I(k)-I_1(k)\|_{L^2_{\rm per}}^2dk\right)^{1/2} =\cO(\eps^{\frac12(m+1)}).
    \end{align}

\medskip

\noindent{\bf Step 2. Reduction of $h_e^\bot$.} We approximate $h(k,X)$. Note that $X\in \Supp(\chi_\eps(X))$ implies
\begin{align*}
    |X-X_0|\leq 2\eps^{1-s_1},
\end{align*}
and $|k|<4\eps^{s_2}$ and $k'\in \Supp(\widetilde{\chi}_\eps(k-\cdot))$ imply that
\begin{align*}
    \left|\frac{k+k'}{2}\right|\leq 5\eps^{s_2}.
\end{align*}
Using the decomposition \eqref{eq:h-decomp}, write
\begin{align*}
     I_1(k)=I_{1,1}(k)+I_{1,2}(k)
\end{align*}
where $ I_{1,1}(k)$ corresponds to the first term on the right hand side of \eqref{eq:h-decomp}:
\begin{align*}
    I_{1,1}(k)&:=\frac{1}{(2\pi\eps)^d|\Omega| }\sum_{j=1}^J\int_{\R^d\times \R^d}  e^{-i\frac{(k-k')\cdot X}{\eps}}\chi_\eps(X)\widetilde{\chi}_\eps(k-k')  \widetilde{\phi}_{j}\left(\frac{k+k'}{2},k',X\right)  \;dk'dX
\end{align*}
with
\begin{align}\label{eq:phi-tilde-k-k'-X}
 \MoveEqLeft \widetilde{\phi}_{j}\left(k,k',X\right) :=  (E_{j}\left(k,X\right)-e_0)\left<\phi_{j}\left(k,X\right) ,\vec{u}^T\mathcal{F}(\vec{v}_\eps)(k')\right>_{L^2_{\rm per}}\phi_{j}\left(k,X\right);
\end{align}
and $ I_{1,2}(k)$ corresponds to the second term on the right hand side of \eqref{eq:h-decomp}:
\begin{align*}
     I_{1,2}(k):=\frac{1}{(2\pi\eps)^d |\Omega|}\int_{\R^d\times \R^d}  e^{-i\frac{(k-k')\cdot X}{\eps}}\chi_\eps(X)\widetilde{\chi}_\eps(k-k') h_e^\bot\left(\frac{k+k'}{2},X\right) \vec{u}^T\mathcal{F}(\vec{v}_\eps)\left(k'\right)\;dk'dX.
\end{align*}
Using \eqref{eq:I2-eff} and \eqref{eq:taylor expansion},
\begin{align*}
\MoveEqLeft\|I_1(k)-I_{1,1}(k)-I_2^{\rm eff}(k)\|_{L^2_{\rm per}}=\|I_{1,2}(k)-I_2^{\rm eff}(k)\|_{L^2_{\rm per}} \\
&\lesssim \eps^{-d}\int_{\R^d\times \R^d}\chi_\eps(X)\widetilde{\chi}_\eps(k-k') (|k+k'|^{m+1}+|X-X_0|^{m+1})\left|\mathcal{F}(\vec{v}_\eps)(k')\right| dk'dX \\
&=\cO(\eps^{(s_2-s_1)d+(m+1)\min\{1-s_1,s_2\}-\frac{d}{4}}).
\end{align*}
where we also used the fact that, by Assumption \ref{ass:localization},
\begin{align*}
    \left|\mathcal{F}(\vec{v}_\eps)(k')\right|\lesssim \|\vec{v}_\eps\|_{L^1(\R^d)}=\cO(\eps^{-\frac{d}{4}}).
\end{align*}
By the definition of $s_1$ and $s_2$ in Assumption \ref{ass:bandstructure-nondegenerate},
\begin{align*}
\MoveEqLeft (s_2-s_1)d+(m+1)\min\{1-s_1,s_2\}+(2s_2-1)\frac{d}{4}\\
&\geq  -\frac{1}{\mathfrak{n}(m+1)} +\frac{1}{2}(m+1)- \frac{1}{2\mathfrak{n}d}- \frac{1}{4\mathfrak{n}(m+1)}\geq \frac{1}{2}m+\frac{1}{4}.
\end{align*}
Thus,
\begin{align*}
    \|I_1(k)-I_{1,1}(k)-I_2^{\rm eff}(k)\|_{L^2_{\rm per}}=\cO(\eps^{\frac{1}{2}m+\frac{1}{4} -\frac{s_2}{2}d }).
\end{align*}
Therefore,
\begin{align}\label{eq:eq:I1-I11-I2-eff}
    \MoveEqLeft\left( \int_{|k|<  4\eps^{s_2}}\|I_1(k)-I_{1,1}(k)-I_2^{\rm eff}(k)\|_{L^2_{\rm per}}^2dk\right)^{1/2}=\cO(\eps^{\frac{1}{2} m+\frac{1}{4}}).
\end{align}

\medskip

\noindent{\bf Step 3. Reduction of the band around $e_0$.} We  now approximate $I_{1,1}(k)$ by $\widetilde{I}_{1}^{\rm eff}(k)$, defined by
\begin{align*}
 \MoveEqLeft   \widetilde{I}_{1}^{\rm eff}(k):= \frac{1}{(2 \pi\eps)^d|\Omega|}\int_{\R^d\times \R^d}\;dk'dX\;\;\;  e^{-i\frac{(k-k')\cdot X}{\eps}} \chi_\eps(X)\widetilde{\chi}_\eps(k-k')\\
   &\times \sum_{1\leq j\leq J}\vec{w}^T(x)\vec{\alpha}_j\left(\tfrac{k+k'}{2},X\right)\left<\vec{w}^T  \Big(h^{\rm eff}\left(\tfrac{k+k'}{2},X\right) \vec{\alpha}_j\left(\tfrac{k+k'}{2},X\right)\Big) , \vec{u}^T\mathcal{F}(\vec{v}_\eps)(k')  \right>_{L^2_{\rm per}}.
\end{align*}

By Assumption \ref{ass:bandstructure-nondegenerate}, for $|X-X_0|\leq 2\eps^{1-s_1}$ and $ \left|k\right|<4\eps^{s_2}$,
\begin{align*}
   \widetilde{E}_{j}(k,X)=E_{j}(k,X)-e_0= \lambda^{\rm eff}_{j}(k,X)+\cO(|k|^{m+1}+|X-X_0|^{m+1})
\end{align*}
and
\begin{align*}
      h^{\rm eff}(k,X)\vec{\alpha}_j(k,X)= \lambda^{\rm eff}_j(k,X)\vec{\alpha}_j(k,X)
\end{align*}
with
\begin{align*}
    \|\phi_{j}(k,X)-\vec{w}^T\, \vec{\alpha}_j(k,X)\|_{L^2_{\rm per}}=\cO(|k|+|X-X_0|).
\end{align*}
Moreover, under Assumption \ref{ass:bandstructure-nondegenerate}, as $f^{\rm eff}$ is a homogeneous function of degree $m$,
\begin{align*}
    |\lambda_j^{\rm eff}(k,X)| \lesssim | h^{\rm eff}(k,X)|_2\lesssim |k|^{m}+|X-X_0|^{m}
\end{align*}
and by \eqref{eq:ass1-1},
\begin{align*}
    |\widetilde{E}_{j}\left(k,X\right)|\lesssim |k-k_0|^{m}+|X-X_0|^{m}.
\end{align*}
Thus, for $|X-X_0|\leq 2\eps^{1-s_1}$ and $ \left|k\right|<4\eps^{s_2}$, and for $1\leq j\leq J$,
\begin{align*}
  \MoveEqLeft \Big|\widetilde{E}_{j}\left(k,X\right)  \left<\phi_{j}\left(k,X\right),\vec{u}^T\mathcal{F}(\vec{v}_\eps)(k')\right>_{L^2_{\rm per}}\\
  &\quad -\lambda^{\rm eff}_j(k,X) \left<\vec{w}^T \vec{\alpha}_j(k,X),\vec{u}^T\mathcal{F}(\vec{v}_\eps)(k')\right>_{L^2_{\rm per}}\Big|\lesssim (|k|^{m +1}+|X-X_0|^{m+1}) |\mathcal{F}(\vec{v}_\eps)(k')|.
\end{align*}
Define the total error term by
\begin{align*}
e(k,k',X):&=\sum_{j=1}^J\widetilde{\phi}_{j}\left(k,k',X\right)-\sum_{j=1}^J\left<\vec{w}^T  \big(h^{\rm eff}\vec{\alpha}_j\big)(k,X),\vec{u}^T\mathcal{F}(\vec{v}_\eps)(k')\right>_{L^2_{\rm per}} \vec{w}^T \vec{\alpha}_j(k,X)\\
&=\sum_{j=1}^J\widetilde{E}_{j}\left(k,X\right)  \left<\phi_{j}\left(k,X\right),\vec{u}^T\mathcal{F}(\vec{v}_\eps)(k')\right>_{L^2_{\rm per}}\phi_{j}(k,X)\\
&\quad-\sum_{j=1}^J\left<\vec{w}^T  \big(h^{\rm eff}\vec{\alpha}_j\big)(k,X),\vec{u}^T\mathcal{F}(\vec{v}_\eps)(k')\right>_{L^2_{\rm per}} \vec{w}^T \vec{\alpha}_j(k,X) .
\end{align*}
Using above estimate and by \eqref{eq:ass1-2} in addition,
\begin{align*}
\|e(k,k',X)\|_{L^2_{\rm per}} \lesssim (|k|^{m +1}+|X-X_0|^{m+1}) |\mathcal{F}(\vec{v}_\eps)(k')|
\end{align*}
where we also used 
\begin{align*}
    h^{\rm eff}(k,X)\vec{\alpha}_j(k,X)= \lambda^{\rm eff}_j(k,X)\vec{\alpha}_j(k,X).
\end{align*}
Analogous to Step. 2 in this proof, this and Assumption \ref{ass:localization} give
\begin{align}\label{eq:I11-I1-tilde-eff}
   \MoveEqLeft    \left( \int_{|k|< 4\eps^{s_2}}\|I_{1,1}(k)-\widetilde{I}_1^{\rm eff}(k)\|_{L^2_{\rm per}}^2dk\right)^{1/2}=\cO(\eps^{\frac{m}{2}+\frac{1}{4}}).
\end{align}

\medskip

\noindent{\bf Step. 4. Study of $\widetilde{I}_{1}^{\rm eff}(k)$.} We now show
\begin{align*}
     \widetilde{I}_{1}^{\rm eff}(k)= I_{1}^{\rm eff}(k).
\end{align*}
Indeed, $\vec{w}^T  \big(h^{\rm eff}(k,X)\vec{\alpha}_j(k,X))=\sum_{1\leq \ell\leq J}w_\ell (h^{\rm eff}\vec{\alpha}_j)_\ell$, and
\begin{align*}
  \MoveEqLeft  \left<\vec{w}^T  \big(h^{\rm eff}(k,X)\vec{\alpha}_j(k,X)\big),\vec{u}^T\mathcal{F}(\vec{v}_\eps)(k')\right>_{L^2_{\rm per}} \\
  &=  \sum_{1\leq \ell,\ell'\leq J}\left<w_{\ell'},u_\ell\right>_{L^2_{\rm per}}  \overline{\big(h^{\rm eff}(k,X)\vec{\alpha}_j(k,X)\big)}_{\ell'}\;  \mathcal{F}(v_{\ell,\eps})(k')\\
  &=\overline{\big(h^{\rm eff}(k,X)\vec{\alpha}_j(k,X)\big)}^T \mathcal{M}_{\vec{u}}\mathcal{F}(\vec{v}_\eps)(k')\\
  &=\overline{\vec{\alpha}}_j^T(k,X)h^{\rm eff}(k,X)\mathcal{M}_{\vec{u}}\mathcal{F}(\vec{v}_\eps)(k').
\end{align*}
where we recall that $\mathcal{M}_{\vec{u}}$ is defined by \eqref{eq:M-u} and $h^{\rm eff}$ is Hermitian, i.e., $\overline{h^{\rm eff}}^T=h^{\rm eff}$.

According to Assumption \ref{ass:bandstructure-nondegenerate}, $(\vec{\alpha}_j)_{1\leq j\leq J}$ forms an orthonormal basis of $\C^J$, then
\begin{align*}
   \sum_{1\leq j\leq J}\vec{\alpha}_j \overline{\vec{\alpha}}_j^T=\1_{\C^J}.
\end{align*}
Thus,
\begin{align*}
\MoveEqLeft  \sum_{1\leq j\leq J}\vec{\alpha}_j(k,X) \left<\vec{w}^T  \big(h^{\rm eff}(k,X)\vec{\alpha}_j(k,X)\big),\vec{u}^T\mathcal{F}(\vec{v}_\eps)(k')\right>_{L^2_{\rm per}}\\
  &= \left(\sum_{1\leq j\leq J}\vec{\alpha}_j\overline{\vec{\alpha}}_j^T\right)(k,X)h^{\rm eff}(k,X)\mathcal{M}_{\vec{u}}\mathcal{F}(\vec{v}_\eps)(k')= h^{\rm eff}(k,X)\mathcal{M}_{\vec{u}}\mathcal{F}(\vec{v}_\eps)(k').
\end{align*}
As a result,
\begin{align*}
    \widetilde{I}_{1}^{\rm eff}(k)&= \frac{1}{(2 \pi\eps)^d |\Omega|}\int_{\R^d\times \R^d}e^{-i\frac{(k-k')\cdot X}{\eps}} \chi_\eps(X)\widetilde{\chi}_\eps(k-k')\\
    &\qquad\qquad \times \vec{w}^T(x) h^{\rm eff}\left(\tfrac{k+k'}{2},X\right)\mathcal{M}_{\vec{u}}\mathcal{F}(\vec{v}_\eps)(k')\;dk'dX.
\end{align*}
This is exactly \(I_1^{\mathrm{eff}}(k)\) from \eqref{eq:I1-eff}, i.e.,
\begin{align*}
      \widetilde{I}_{1}^{\rm eff}(k)=  I_{1}^{\rm eff}(k).
\end{align*}
Combining \eqref{eq:I-I1}, \eqref{eq:eq:I1-I11-I2-eff} and \eqref{eq:I11-I1-tilde-eff}, we obtain
  \begin{align*}
       \left( \int_{|k|< 4\eps^{s_2}}\|I(k)-I_1^{\rm eff}(k)-I_2^{\rm eff}(k)\|_{L^2_{\rm per}}^2dk\right)^{1/2}=\cO(\eps^{\frac{1}{2} m+\frac{1}{4}})
    \end{align*}
This proves Lemma \ref{lem:3.5}.

\subsection{Proof of Lemma \ref{lem:3.6'}}\label{sec:3.6'}

 The proof is essentially the same as for Lemma \ref{lem:3.4} in Section \ref{sec:3.4}. As $|k|\geq 4\eps^{s_2}$ and $k'\in \Supp(\widetilde{\chi}_\eps(k-\cdot))$, we have $|k'|\geq 2\eps^{s_2}$. Replacing \eqref{eq:v-G} by
 \begin{align*}
    \mathcal{F}(\vec{v}_\eps)(k')=\frac{1}{|k'|^{2M}} \mathcal{F}((-\Delta)^{M}\vec{v}_\eps)(k').
\end{align*}
and following the same argument as in Lemma \ref{lem:3.4} yields the lemma.

\subsection{Proof of Lemma \ref{lem:3.6}}\label{sec:3.6}
We only prove the estimate for
    \begin{align*}
          \left( \fint_{\Omega^*}\|  I^{\rm eff}_1(k)-\widetilde{\widetilde{\mathfrak{h}}}^{\rm eff}_{1,\eps}(\vec{u},\vec{v}_\eps) \|_{L^2_{\rm per}}^2dk\right)^{1/2}=\cO(\eps^{\frac{m+1}{2}}).
    \end{align*}
The other term can be studied in the same manner. 

Let
\begin{align*}
   \widetilde{\widetilde{I}}_1^{\rm eff}(k):&=\frac{1}{(2 \pi\eps)^d |\Omega|}\int_{\R^d\times \R^d}  e^{-i\frac{(k-k')\cdot X}{\eps}} \chi_\eps(X) \vec{w}^T(x) h^{\rm eff}\left(\tfrac{k+k'}{2},X\right) \mathcal{M}_{\vec{u}}\mathcal{F}(\vec{v}_\eps)(k')  \;dk'dX.
\end{align*}
Analogously to the proof of Lemma \ref{lem:3.3} with same $M$, by integration by parts, for $M'=M+m$
\begin{align*}
 \MoveEqLeft \left\| \widetilde{\widetilde{I}}_1^{\rm eff}(k)-  I_1^{\rm eff}(k)\right\|_{L^2_{\rm per}}\\
 &\lesssim\eps^{-d}\left|\int_{\R^d\times \R^d}  e^{-i\frac{(k-k')\cdot X}{\eps}} \chi_\eps(X)(1-\widetilde{\chi}_\eps(k-k')) h^{\rm eff}\left(\frac{k+k'}{2},X\right)\mathcal{F}(\vec v_\eps)\left(k'\right)\;dk'dX\right|\\
 &\lesssim \eps^{2M'-d}\int_{\R^d\times \R^d}   \frac{ (1-\widetilde{\chi}_\eps(k-k'))}{|k-k'|^{2M'}}\left|(-\Delta_X)^{M'}\left[\chi_\eps(X)h^{\rm eff}\left(\frac{k+k'}{2},X\right)\right]\right|\left|\mathcal{F}(\vec{v}_\eps)(k')\right|\;dk'dX\\
 &\lesssim \eps^{2s_1M'-\frac{5d}{4}}\int_{\substack{|k-k'|\geq \eps^{s_2}\\ |X-X_0|\leq 2\eps^{1-s_1}}} \frac{1+|k-k'|^m}{|k-k'|^{2M'}}\;dk'dX=\cO(\eps^{\frac{1}{2}(m+1)}).
\end{align*}
Hence,
\begin{align*}
    \left(\fint_{\Omega^*}\left\| \widetilde{\widetilde{I}}_1^{\rm eff}(k)-  I_1^{\rm eff}(k)\right\|_{L^2_{\rm per}}^2dk\right)^{1/2}=\cO(\eps^{\frac{1}{2}(m+1)}).
\end{align*}

Concerning $ \widetilde{\widetilde{I}}_1^{\rm eff}(k)$, we observe that
\begin{align*}
    \widetilde{\widetilde{I}}_1^{\rm eff}(k)=|\Omega|^{-1} \vec{w}^T(x)\Op(h^{\rm eff}\chi_\eps)\mathcal{M}_{\vec u}\mathcal{F}(\vec v_\eps)(k).
\end{align*}
Under Assumption \ref{ass:bandstructure-nondegenerate}, $h^{\rm eff}$ is a matrix-valued polynomial function  of the degree at most $m$. Arguing as for Lemma \ref{lem:3.2},
\begin{align*}
     \widetilde{\widetilde{I}}_1^{\rm eff}(k)-\widetilde{\widetilde{\mathfrak{h}}}^{\rm eff}_{1,\eps}(\vec{u},\vec{v}_\eps)(k)=|\Omega|^{-1} \vec{w}^T(x)\left(\Op(h^{\rm eff}\chi_\eps)-\Op(\chi_\eps)\Op(h^{\rm eff})\right)\mathcal{M}_{\vec{u}}\mathcal{F}(\vec v_\eps)(k)
\end{align*}
and $\left(\Op(h^{\rm eff}\chi_\eps)-\Op(\chi_\eps)\Op(h^{\rm eff})\right)\mathcal{M}_{\vec{u}}\mathcal{F}(\vec v_\eps)(k)$ is a combination of terms of the form
\begin{align*}
&\eps^{|\gamma|_1}
\Op(\partial_k^\gamma h^{\rm eff})
\Op(\partial_X^\gamma\chi_\eps)
\mathcal M_{\vec u}\mathcal F(\vec v_\eps)\\
&\quad=
\eps^{s_1|\gamma|_1}
\Op(\partial_k^\gamma h^{\rm eff})
\mathcal F\left[
(\partial^\gamma\chi)
(\eps^{s_1}\bullet-\eps^{s_1-1}X_0)
\mathcal M_{\vec u}\vec v_\eps
\right].
\end{align*}
for some $\gamma:=(\gamma_1,\cdots,\gamma_d)\in \N^{d}$ with $1\leq |\gamma|_1\leq m$.

Since $\partial_k^\gamma h^{\rm eff}$ is a polynomial of degree $m-|\gamma|_1$, for any function $g\in \mathcal{S}(\R^d)$,
\begin{align*}
  \|\Op(\partial_k^\gamma h^{\rm eff}) g\|_{L^2(\R^d)}\lesssim \||k|^{m-|\gamma|_1} g\|_{L^2(\R^d)}+\|(i\eps\nabla_k-X_0)^{m-|\gamma|_1} g\|_{L^2(\R^d)}.
\end{align*}
Then,
\begin{align*}
\MoveEqLeft   \left\| \eps^{|\gamma|_1} \Op(\partial_k^\gamma h^{\rm eff})\Op(\partial^\gamma \chi_{\eps})\mathcal{M}_{\vec{u}}\mathcal{F}(\vec v_\eps)(k)\right\|_{L^2(\R^d)}\\
&\lesssim \eps^{|\gamma|_1s_1}\|(-\Delta)^{\frac{m-|\gamma|_1}{2}}\left((\partial^\gamma \chi)(\eps^{s_1}\bullet-\eps^{s_1-1}X_0 )\vec v_\eps\right)\|_{L^2(\R^d)}\\
&\quad+\eps^{|\gamma|_1s_1}\|(\eps x-X_0)^{m-|\gamma|_1}\left((\partial^\gamma \chi)(\eps^{s_1}\bullet-\eps^{s_1-1}X_0 )\vec v_\eps\right)\|_{L^2(\R^d)}\\
&\lesssim \eps^{|\gamma|_1s_1}\|\vec v_\eps\|_{H^{m-|\gamma|_1}(\R^d\setminus B_{\eps^{-s_1}}(\eps^{-1}X_0))}\lesssim \eps^{|\gamma|_1s_1} \|\vec v_\eps\|_{W^{d+1+m-|\gamma|_1,1}(\R^d\setminus B_{\eps^{-s_1}}(\eps^{-1}X_0))}=\cO(\eps^{\frac{1}{2}(m+1)})
\end{align*}
where we used the fact that
\begin{align*}
\Supp(\partial^\gamma\chi(\eps^{s_1}\bullet-\eps^{s_1-1}X_0))\subset B_{2\eps^{-s_1}}(\eps^{-1}X_0)\setminus  B_{\eps^{-s_1}}(\eps^{-1}X_0).
\end{align*}
Thus
\begin{align*}
     \MoveEqLeft     \left( \fint_{\Omega^*}\|  I^{\rm eff}_1(k)-\widetilde{\widetilde{\mathfrak{h}}}^{\rm eff}_{1,\eps}(\vec{u},\vec{v}_\eps) \|_{L^2_{\rm per}}^2dk\right)^{1/2}\\
     &\lesssim \|\vec{w}\|_{L^2_{\rm per}}\left( \int_{\R^d} \left|\left(\Op(h^{\rm eff}\chi_\eps)-\Op(\chi_\eps)\Op(h^{\rm eff})\right)\mathcal{M}_{\vec{u}}\mathcal{F}(\vec v_\eps)(k)\right|^2dk \right)^{1/2}\\
     &=\cO(\eps^{\frac{1}{2}(m+1)}).
    \end{align*}
This completes the proof of Lemma \ref{lem:3.6}.

\section{Application: quantum harmonic oscillator effective Hamiltonian}\label{sec:quantum-harmonic-oscillator}
In this section, we consider the case $m=2$ relevant to quantum harmonic oscillator operator. In this case, we need the following assumption.

\begin{assumption}\label{ass:harmonic-oscillator}
Let $\j\in \N^+$. We assume that  the operator $H_\eps$ or equivalently the family of operators $(h(k,X))_{k,X}$ satisfies the following property:
\begin{itemize}
    \item The eigenvalue $e_0=E_{\j}(k_0,X_0)$ is a non-degenerate eigenvalue of the operator $h(k_0,X_0)$ associated with the eigenfunction $w\in L^2_{\rm per}$;
    \item For the mapping $(k,X)\mapsto E_\j(k,X)$,
    \begin{align*}
        \nabla E_\j(k_0,X_0)=0,\qquad \nabla:=(\nabla_k,\nabla_X),
    \end{align*}
    and the Hessian matrix
    \begin{align*}
        \nabla^2 E_\j(k_0,X_0)=\begin{pmatrix}
           ( \partial_{k_i}\partial_{k_j} E_\j(k_0,X_0))_{1\leq i,j\leq d} & ( \partial_{k_i}\partial_{X_j} E_\j(k_0,X_0))_{1\leq i,j\leq d}\\ ( \partial_{X_i}\partial_{k_j} E_\j(k_0,X_0))_{1\leq i,j\leq d}&  ( \partial_{X_i}\partial_{X_j} E_\j(k_0,X_0))_{1\leq i,j\leq d}
        \end{pmatrix}
    \end{align*}
    is either strictly positive or strictly negative definite.
\end{itemize}
\end{assumption}
Then,
\begin{theorem}\label{th:harmonic-oscillator}
We assume that $H_\eps$ satisfies Assumption \ref{ass:harmonic-oscillator}. Depending on the sign of the Hessian matrix $ \nabla^2 E_\j(k_0,X_0)$, let
$( v_j,\mu_j)$ be the $j$-th  positive or negative eigenpair of the general quantum harmonic oscillator operator with an energy shift:
\begin{align*}
  \mathfrak{h}:=\frac{1}{2}\begin{pmatrix}-i\nabla_x\\ x
  \end{pmatrix}^T  \nabla^2 E_\j(k_0,X_0) \begin{pmatrix}-i\nabla_x\\ x
      \end{pmatrix} + \sum_{\beta=0,\;|\gamma|_1=1} \left<w, {\rm Im}(P_{\beta,\gamma,0}^\bot h_{e,0} P_{\gamma,\beta,0}^\bot) w\right>_{L^2_{\rm per}},
\end{align*}
and let
\begin{align*}
    \Phi_{j,\eps}:=\Phi_\eps\Big((U^{(0)}_\eps+\sqrt{\eps} U^{(1)}_\eps+\eps U^{(2)}_\eps)(w\otimes v_j)\Big)
\end{align*}
where $U_\eps^{(0)}$, $U_\eps^{(1)}$ and $U^{(2)}_\eps$ are defined by \eqref{eq:U0-eps}, \eqref{eq:U1-eps} and \eqref{eq:U2-epsilon} respectively, and $w$ is defined as in Assumption \ref{ass:harmonic-oscillator}. Then  for $\eps$ small enough,
\begin{align}\label{eq:harmonic-oscillator}
    \|(H_\eps-e_0-\eps \mu_j)\Phi_{j,\eps}\|_{L^2(\R^d;\C^n)}=\cO(\eps^{\frac{5}{4}})
\end{align}
with
\begin{align*}
    \left\|\Phi_{j,\eps}\right\|_{L^2(\R^d;\C^n)}=\frac{1}{|\Omega|^{1/2}}+\cO(\sqrt{\eps}).
\end{align*}
\end{theorem}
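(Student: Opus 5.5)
Theorem \ref{th:harmonic-oscillator} will follow from Theorem \ref{th:m} with $m=2$ and $J=1$, once we check Assumptions \ref{ass:bandstructure-nondegenerate} and \ref{ass:m}.

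\textbf{Step 1: Assumption \ref{ass:bandstructure-nondegenerate}.} Since $e_0$ is a simple isolated eigenvalue of $h(k_0,X_0)$, analytic perturbation theory makes $(k,X)\mapsto E_\j(k,X)$ and a normalized eigenfunction $\phi_\j(k,X)$ smooth near $(k_0,X_0)$. We fix the phase of $\phi_\j$ so that $\phi_\j(k_0,X_0)=w$. Taylor's formula, together with $\nabla E_\j(k_0,X_0)=0$, gives \eqref{eq:ass1-1} with
\begin{align*}
\lambda^{\rm eff}(k,X)=\tfrac12 (k-k_0,X-X_0)\,\nabla^2E_\j(k_0,X_0)\,(k-k_0,X-X_0)^T .
\end{align*}
This is homogeneous of degree $m=2$, and we take $h^{\rm eff}=\lambda^{\rm eff}$ (a $1\times1$ matrix) and $\alpha\equiv 1$. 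Smoothness of $\phi_\j$ then gives \eqref{eq:ass1-2}.

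\textbf{Step 2: the effective operator and $\widetilde{\mathcal M}$.} By \eqref{eq:F-Opf-F} with $\eps=1$, the operator $\mathcal F^{-1}{\rm Op}_1(f_2^{\rm eff})\mathcal F$ is the Weyl quantization of the quadratic form in $(k,X)$. Under the Fourier conjugation, $k\mapsto -i\nabla_x$ and $X\mapsto x$. The Weyl symmetrization of the mixed terms $k_iX_j$ yields exactly the symmetric form $\tfrac12 (-i\nabla,x)^T\nabla^2E_\j(-i\nabla,x)$ in the statement. The matrix $\widetilde{\mathcal M}$ from \eqref{eq:M-tilde} reduces for $J=1$ to the scalar shift written in the theorem; here we use $h^\bot_{e,0}=h_{e,0}$. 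Hence $\mathfrak h$ is precisely the operator of Assumption \ref{ass:m}.

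\textbf{Step 3: Assumption \ref{ass:m}.} If the Hessian is positive definite, then by a linear symplectic change of variables (Williamson's theorem) the quadratic form becomes $\sum_i \omega_i(\xi_i^2+y_i^2)/2$ with all $\omega_i>0$. Its Weyl quantization is unitarily equivalent, via a metaplectic operator, to a sum of decoupled harmonic oscillators. Its spectrum is therefore discrete and its eigenfunctions are polynomials times Gaussians, composed with the metaplectic map. These remain Schwartz functions with Gaussian decay $e^{-c|y|^2}$. So \eqref{eq:v2-m} holds, and because $1/2-s_1<0$, the tail estimate \eqref{eq:v3-m} on $\R^d\setminus B_{\eps^{1/2-s_1}}(0)$ is $\cO(e^{-c\eps^{1-2s_1}})$, far better than $\eps^{3/2}$. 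The negative-definite case is handled the same way after replacing $\nabla^2E_\j$ by its negative, so $\mu_j$ are the negative eigenvalues. Adding the constant shift changes neither the eigenfunctions nor these properties.

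\textbf{Step 4: conclusion and main obstacle.} Applying Theorem \ref{th:m} with $m=2$ gives the error $\cO(\eps^{5/4})$ and the normalization. I expect the main care to be in Step 3: we must rigorously justify Williamson diagonalization and the fact that metaplectic conjugation preserves the Gaussian decay needed for the weighted $W^{d+6,1}$ tail bounds, uniformly in the polynomial weights $y^\gamma$. A lesser point is the phase choice in Step 1, which must ensure that $\phi_\j$ is smooth and equals $w$ at $(k_0,X_0)$.
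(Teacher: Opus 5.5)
Your proposal is correct and follows essentially the paper's proof: you use perturbation theory and Taylor expansion to verify Assumption \ref{ass:bandstructure-nondegenerate} with $m=2$, $J=1$ and the quadratic $f^{\rm eff}_2$, you use the smoothness and decay of the oscillator eigenfunctions to verify Assumption \ref{ass:m}, and you conclude from Theorem \ref{th:m}. Your extra checks only spell out what the paper asserts briefly: the Weyl symmetrization of the mixed terms, the reduction of $\widetilde{\mathcal M}$ to the stated scalar shift for $J=1$, and the Gaussian decay obtained via Williamson/metaplectic conjugation (the same tool as in the paper's remark on the multi-dimensional oscillator).
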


\begin{proof}
    Note that
    \begin{align*}
        (k,X)\mapsto h(k,X)
    \end{align*}
    is a smooth operator-valued function. Thus according to perturbation theory (see e.g., \cite[Ch. XII]{Reed-Simon-1978-operator}), as $e_0$ is a non-degenerate eigenvalue of $h(k_0,X_0)$, we know that in a small neighborhood of $(k_0,X_0)$,
    \begin{itemize}
        \item  $(k,X)\mapsto E_{\j}(k,X)$ is smooth, \begin{align*}
     E_{\j-1}(k,X)<   E_{\j}(k,X)<E_{\j+1}(k,X)
    \end{align*}
    and
    \begin{align*}
        E_{\j}(k,X)=e_0+\frac{1}{2}\begin{pmatrix}
        k-k_0\\X-X_0
    \end{pmatrix}^T\nabla^2 E_\j(k_0,X_0)\begin{pmatrix}
        k-k_0\\X-X_0
    \end{pmatrix}+\cO(|k-k_0|^3+|X-X_0|^3);
    \end{align*}
    \item $(k,X)\mapsto \phi_{\j}(k,X)$ is smooth, and
    \begin{align*}
        \|\phi_{\j}(k,X)-\phi_{\j}(k_0,X_0)\|_{L^2_{\rm per}}=\cO(|k-k_0|+|X-X_0|).
    \end{align*}
    \end{itemize}
Thus in this case, Assumption \ref{ass:bandstructure-nondegenerate} is fulfilled with $m=2$, $J=1$ and
\begin{align}\label{eq:f-quadratic}
    f^{\rm eff}_{m}=\frac{1}{2}\begin{pmatrix}
        k\\X
    \end{pmatrix}^T\nabla^2 E_\j(k_0,X_0)\begin{pmatrix}
        k\\X
    \end{pmatrix}.
\end{align}
As an eigenfunction of quantum harmonic oscillator $\mathfrak{h}$, $v_j\in C^\infty(\R^d)$ and decays exponentially: for some $\kappa>0$ and any given $j\in \N^+$,
\begin{align*}
    e^{\kappa|x|}v_j(x)\in L^1(\R^d)\cap L^\infty(\R^d).
\end{align*}
Thus \eqref{eq:v2-m} and \eqref{eq:v3-m} with $m=2$ are satisfied. Hence this theorem follows from Theorem \ref{th:m}.
\end{proof}
\begin{remark}[General multi-dimensional quantum harmonic oscillator operator]\label{rem:harmonic-oscillator}
    We now briefly explain why $ \mathfrak{h}$ defined in Theorem \ref{th:harmonic-oscillator} is a multi-dimensional quantum harmonic oscillator operator. For simplicity, assume $A:=   \nabla^2 E_\j(k_0,X_0)$ is strictly positive.
    
    According to the Williamson theorem, there exists a symplectic matrix $S\in{\rm Sp}(2d,\R)$ such that
    \begin{align*}
        S^T A S=\begin{pmatrix}
            \Omega & 0\\0&\Omega
        \end{pmatrix},
    \end{align*}
    where 
    \begin{align*}
        \Omega={\rm diag}(\omega_1,\cdots, \omega_d),\qquad \omega_j>0,\quad j=1,\cdots,d.
    \end{align*}
    Thus by \eqref{eq:f-quadratic} below,
    \begin{align*}
         f^{\rm eff}_{m}=\frac{1}{2}\begin{pmatrix}
        k\\X
    \end{pmatrix}^TA\begin{pmatrix}
        k\\X
    \end{pmatrix}=\frac{1}{2}\left(S^{-1}\begin{pmatrix}
        k\\X
    \end{pmatrix}\right)^T \begin{pmatrix}
            \Omega & 0\\0&\Omega
        \end{pmatrix}\left(S^{-1}\begin{pmatrix}
        k\\X
    \end{pmatrix}\right).
    \end{align*}
    Inserting this decomposition into the Weyl quantization of the symbol $ f^{\rm eff}_{m}$ and using the following change of variable
     \begin{align*}
        \begin{pmatrix}
            k'\\X'
        \end{pmatrix}:=S^{-1} \begin{pmatrix}
            k\\X
        \end{pmatrix},
    \end{align*}
    one can obtain a unitary operator $U_S$ in $L^2(\R^d)$ such that
    \begin{align*}
       U^*_S {\rm Op}_1( f^{\rm eff}_{m})(k,i\nabla_k)U_S= \frac{1}{2}\sum_{j=1}^d \omega_j(-\partial_{k_j}^2+k_j^2)
    \end{align*}
    with $k:=(k_1,\cdots,k_d)$. Then
    \begin{align*}
         U^*\mathfrak{h} U=\frac{1}{2}\sum_{j=1}^d \omega_j(-\partial_{x_j}^2+x_j^2) + \sum_{\beta=0,\;|\gamma|_1=1} \left<w, {\rm Im}(P_{\beta,\gamma,0}^\bot h_{e,0} P_{\gamma,\beta,0}^\bot) w\right>_{L^2_{\rm per}}.
    \end{align*}
    with the unitary operator $U:=\mathcal{F}^{-1}U_S\mathcal{F}$ on $L^2(\R^d)$. Indeed, above argument can be viewed as an application of the Stone–von Neumann theorem for quantum harmonic oscillator operator.
\end{remark}

\subsection{Construction of aperiodic crystals from periodic crystals}\label{sec:4.1}
Assumption \ref{ass:harmonic-oscillator} can be satisfied in many different setting. Here we give a simple example to illustrate when it holds. This example explains how to construct aperiodic crystals possessing localized approximate eigenfunction in $L^2(\R^d;\C^n)$ from a periodic crystal.

\medskip

Let 
\begin{align*}
    H_0=-\frac{1}{2}\Delta+V_{\rm per}(x),\qquad x\in \R^d,
\end{align*}
be a $\mathbb L$-periodic operator with $\mathbb L$-periodic potential $V_{\rm per}\in C^\infty(\R^d;\R)$. According to \eqref{op-h0}, $H_0$ can be decomposed into the family of operators $h(k)$ on $L^2_{\rm per}$ defined by
\begin{align}\label{eq:4.3}
    \Omega^*\ni k\mapsto h_0(k)=\frac{1}{2}(-i\nabla_x+k)^2+V_{\rm per}(x).
\end{align}
Let $E_j(k)$ be the eigenvalue of $h(k)$ with $E_1(k)\leq E_2(k)\leq \cdots$. 

The following assumption can be easily satisfied by many periodic operators.
\begin{assumption}\label{ass:4.4}
 We assume that for the periodic operator $H_0$, there exists a point $k_0\in \Omega^*$ and a band structure $\j\in \N^+$ such that 
\begin{itemize}
    \item $E_\j(k_0)$ is a non-degenerate eigenvalue of $h(k_0)$. Thus by perturbation theory (see e.g., \cite[Ch. XII]{Reed-Simon-1978-operator}), $k\mapsto E_\j(k)$ is smooth in a small neighborhood of $k_0$ .
    \item $k_0$ is a strictly local nondegenerate minimal or maximal point of the mapping $k\mapsto E_\j(k)$. Hence $ \nabla_k E_\j(k_0)=0$ and the Hessian matrix $ \nabla_k^2 E_\j(k_0)$ is strictly positive or strictly negative.
\end{itemize}
\end{assumption}
Then,
\begin{corollary}\label{cor:4.5}
   Assume that $H_0$ satisfies Assumption \ref{ass:4.4}. Choose $\widetilde{V}(X)\in C^\infty(\R^d;\R)$ with a strictly local nondegenerate minimum (resp. maximum) at $X_0$ when $ \nabla_k^2 E_\j(k_0)$ is strictly positive (resp. negative). Then
\begin{align*}
    H_\eps:=H_0+\widetilde{V}(\eps x)
\end{align*}
is an operator satisfying Assumption \ref{ass:harmonic-oscillator} around $(k_0,X_0)$. Consequently, Theorem \ref{th:harmonic-oscillator} holds.
\end{corollary}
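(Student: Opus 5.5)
The plan is to verify the two bullets of Assumption \ref{ass:harmonic-oscillator} directly for $h(k,X)=\frac12(-i\nabla_x+k)^2+V_{\rm per}(x)+\widetilde V(X)=h_0(k)+\widetilde V(X)$. The key observation is that the slow potential depends only on $X$. It therefore acts on $L^2_{\rm per}$ as a scalar multiple of the identity, and the Bloch fiber operator splits additively. Once both bullets are checked, Theorem \ref{th:harmonic-oscillator} applies verbatim.

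\textbf{Spectral splitting.} For every $(k,X)$ the spectrum of $h(k,X)$ is $\{E_j(k)+\widetilde V(X)\}_j$, with the same eigenfunctions as $h_0(k)$. Hence the $\j$-th eigenvalue is
\begin{align*}
E_\j(k,X)=E_\j(k)+\widetilde V(X),
\end{align*}
and the eigenfunction $\phi_\j(k,X)=\phi_\j(k)$ is independent of $X$. Set $e_0:=E_\j(k_0)+\widetilde V(X_0)$. Then $e_0$ is a non-degenerate eigenvalue of $h(k_0,X_0)$, because adding the constant $\widetilde V(X_0)$ preserves multiplicities and $E_\j(k_0)$ is non-degenerate for $h_0(k_0)$ by Assumption \ref{ass:4.4}. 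By analytic perturbation theory, $k\mapsto E_\j(k)$ is smooth near $k_0$, and $\widetilde V\in C^\infty$. Therefore $(k,X)\mapsto E_\j(k,X)$ is smooth near $(k_0,X_0)$.

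\textbf{Critical point and Hessian.} First,
\begin{align*}
\nabla_k E_\j(k_0,X_0)=\nabla_kE_\j(k_0)=0,\qquad \nabla_X E_\j(k_0,X_0)=\nabla\widetilde V(X_0)=0,
\end{align*}
since $k_0$ and $X_0$ are local extrema. The mixed derivatives $\partial_{k_i}\partial_{X_j}E_\j$ vanish identically, so the Hessian is block diagonal:
\begin{align*}
\nabla^2E_\j(k_0,X_0)=\begin{pmatrix}\nabla_k^2E_\j(k_0)&0\\0&\nabla^2\widetilde V(X_0)\end{pmatrix}.
\end{align*}
A block diagonal symmetric matrix is positive (resp. negative) definite if and only if each block is. The first block is definite by Assumption \ref{ass:4.4}. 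By the choice of $\widetilde V$, the second block is definite with the same sign: a nondegenerate minimum when $\nabla_k^2E_\j(k_0)>0$, a nondegenerate maximum otherwise. Thus Assumption \ref{ass:harmonic-oscillator} holds around $(k_0,X_0)$, and Theorem \ref{th:harmonic-oscillator} yields the conclusion.

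\textbf{Main obstacle.} The only delicate point is interpretive. ``Strictly local nondegenerate minimum'' must be read as $\nabla\widetilde V(X_0)=0$ together with a strictly positive definite Hessian $\nabla^2\widetilde V(X_0)$, not merely a strict minimum, which could be degenerate. Under that reading the argument above is complete.

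As a byproduct, because $P^\bot$ does not depend on $X$, we have $P^\bot_{0,\gamma,0}=0$ for $|\gamma|_1=1$. Consequently the energy shift $\widetilde{\mathcal M}$ vanishes in this example, although this is not needed for the corollary.
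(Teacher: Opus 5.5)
Your proposal is correct and follows the same route as the paper. The paper simply notes that $h(k,X)=h_0(k)+\widetilde V(X)$, with $\widetilde V(X)$ acting as a scalar on $L^2_{\rm per}$, so Assumption~\ref{ass:harmonic-oscillator} holds ``obviously''; you fill in the details it omits: the shift $E_j(k,X)=E_j(k)+\widetilde V(X)$ preserves ordering and multiplicity, the gradient vanishes, and the Hessian is block-diagonal with two definite blocks of the same sign. Your side remark is also correct: $P^\bot$ does not depend on $X$, so $\widetilde{\mathcal M}=0$ in this example.
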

\begin{remark}
 In Corollary \ref{cor:4.5}, the potential $\widetilde{V}(X)$ can be a global function such as $|X|^2$, but it may also be a local function such as $|X|^2\chi(|X|)$.
\end{remark}

\begin{proof}[Proof of Corollary \ref{cor:4.5}]
    According to \eqref{op-h}, it suffices to consider the family of operators
    \begin{align*}
        h(k,X)=h_0(k)+\widetilde{V}(X).
    \end{align*}
Obviously, $(h(k,X))_{k,X}$ satisfies Assumption \ref{ass:harmonic-oscillator}, since $\widetilde{V}(X)$ is a scalar constant for any given $X\in \R^d$.
\end{proof}

\section{Application: Landau-Schr\"odinger effective Hamiltonian}\label{sec:standard-quantum-hall}
In this section, we study $H_\eps$ relevant to the standard quantum Hall effect with Landau-Schr\"odinger effective Hamiltonian, which correspond to the case $m=2$ and $d=2$. To avoid ambiguity, we replace $x,k,X\in \R^2$ by $\bx:=(x_1,x_2)\in \R^2$, $\bk:=(k_1,k_2)\in\R^2$ and $\bX:=(X_1,X_2)\in \R^2$ in this section.

We need the following assumption.
\begin{assumption}\label{ass:quantum-hall}
Let $d=2$ and $\j\in \N^+$. We assume that the operator $H_\eps$ or equivalently the family of operators $(h(\bk,\bX))_{\bk,\bX}$ satisfies the following property:
\begin{itemize}
    \item The eigenvalue $e_0=E_{\j}(\bk_0,\bX_0)$ is a non-degenerate eigenvalue of the operator $h(\bk_0,\bX_0)$ associated with the eigenfunction $w\in L^2_{\rm per}$;
    \item In a small neighborhood of $(\bk_0,\bX_0)=(k_{0,1},k_{0,2},X_{0,1},X_{0,2})$, 
    \begin{align*}
        E_{\j}(\bk,\bX)&=e_0+ \begin{pmatrix}
            (k_1-k_{0,1})+\frac{B}{2}(X_2-X_{0,2})\\ (k_2-k_{0,2}) -\frac{B}{2}(X_1-X_{0,1})
        \end{pmatrix}^TA \begin{pmatrix}
            (k_1-k_{0,1})+\frac{B}{2}(X_2-X_{0,2})\\ (k_2-k_{0,2}) -\frac{B}{2}(X_1-X_{0,1})
        \end{pmatrix}\\
        &\quad+\cO(|\bk-\bk_0|^3+|\bX-\bX_0|^3)
    \end{align*}
    where $B\neq 0$, $A$ is a strictly positive or strictly negative real symmetric matrix.
\end{itemize}
\end{assumption}
Then,
\begin{theorem}\label{th:quantum-hall}
We assume that $H_\eps$ satisfies Assumption \ref{ass:quantum-hall}. Let $U_\eps^{(0)}$, $U_\eps^{(1)}$ and $U^{(2)}_\eps$ be defined by \eqref{eq:U0-eps}, \eqref{eq:U1-eps} and \eqref{eq:U2-epsilon} respectively, and let $w$ be defined as in Assumption \ref{ass:quantum-hall}. 

Let
$\mu_j$ be the $j$-th Landau level, indexed from $j=0$ and without counting its infinite multiplicity, of the general Landau-Schr\"odinger operator together with an energy shift
\begin{align*}
  \mathfrak{h}:=\begin{pmatrix}
            -i\partial_{x_1}+\frac{B}{2}x_2\\ -i\partial_{x_2} -\frac{B}{2}x_1
        \end{pmatrix}^TA \begin{pmatrix}
            -i\partial_{x_1}+\frac{B}{2}x_2\\ -i\partial_{x_2} -\frac{B}{2}x_1
        \end{pmatrix} + \sum_{\substack{\beta,\gamma\in \N^2\\\beta=0,\;|\gamma|_1=1}} \left<w, {\rm Im}(P_{\beta,\gamma,0}^\bot h_{e,0} P_{\gamma,\beta,0}^\bot) w\right>_{L^2_{\rm per}}.
\end{align*}
Then there exist some normalized eigenfunctions $v_j\in \cS(\R^2)$ of $\mathfrak{h}$ with eigenvalue $\mu_j$ such that, for $\eps$ small enough, the functions
\begin{align*}
    \Phi_{j,\eps}:=\Phi_\eps\Big((U^{(0)}_\eps+\sqrt{\eps} U^{(1)}_\eps+\eps U^{(2)}_\eps)(w\otimes v_j)\Big),
\end{align*}
satisfy
\begin{align*}
    \|(H_\eps-e_0-\eps \mu_j)\Phi_{j,\eps}\|_{L^2(\R^2;\C^n)}=\cO(\eps^{\frac{5}{4}})
\end{align*}
with
\begin{align*}
    \left\|\Phi_{j,\eps}\right\|_{L^2(\R^2;\C^n)}=\frac{1}{|\Omega|^{1/2}}+\cO(\sqrt{\eps}).
\end{align*}
\end{theorem}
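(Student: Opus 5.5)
The plan is to reduce Theorem \ref{th:quantum-hall} to Theorem \ref{th:m} with $m=2$, $J=1$, exactly as in the proof of Theorem \ref{th:harmonic-oscillator}. The work splits into two parts: verify Assumption \ref{ass:bandstructure-nondegenerate}, then choose eigenfunctions of $\mathfrak h$ that satisfy the localization Assumption \ref{ass:m}.

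\emph{Verifying Assumption \ref{ass:bandstructure-nondegenerate}.} Since $e_0$ is a simple isolated eigenvalue of $h(\bk_0,\bX_0)$, analytic perturbation theory \cite[Ch. XII]{Reed-Simon-1978-operator} shows that $(\bk,\bX)\mapsto E_\j(\bk,\bX)$ and $(\bk,\bX)\mapsto\phi_\j(\bk,\bX)$ are smooth near $(\bk_0,\bX_0)$. In particular $\|\phi_\j(\bk,\bX)-w\|_{L^2_{\rm per}}=\cO(|\bk-\bk_0|+|\bX-\bX_0|)$. I take
\begin{align*}
f_2^{\rm eff}(\bk,\bX)=\begin{pmatrix} k_1+\frac B2 X_2\\ k_2-\frac B2 X_1\end{pmatrix}^TA\begin{pmatrix} k_1+\frac B2 X_2\\ k_2-\frac B2 X_1\end{pmatrix},
\end{align*}
together with $\alpha_1\equiv1$ and $\lambda^{\rm eff}_1=f_2^{\rm eff}(\bk-\bk_0,\bX-\bX_0)$. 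This is a homogeneous polynomial of degree $2$, and the expansion required in Assumption \ref{ass:bandstructure-nondegenerate} is exactly the one given in Assumption \ref{ass:quantum-hall}.

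\emph{Identifying $\mathfrak h$.} Next I check that $\mathcal F^{-1}{\rm Op}_1(f_2^{\rm eff})\mathcal F$ equals the magnetic operator appearing in $\mathfrak h$. The Weyl quantization of a quadratic polynomial in $(\bk,\bX)$ is the symmetrized polynomial in $(\bk,i\nabla_\bk)$. Conjugating by $\mathcal F$ sends $\bk\mapsto -i\nabla_x$ and $i\nabla_\bk\mapsto x$. The symmetric ordering reproduces $\Pi^TA\Pi$ with $\Pi=(-i\partial_{x_1}+\frac B2x_2,\,-i\partial_{x_2}-\frac B2x_1)$, because $A$ is symmetric. The components of $\Pi$ do not commute, but the Weyl symbol of $\Pi_i\Pi_j+\Pi_j\Pi_i$ is $2\pi_i\pi_j$, so no extra constant appears. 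Since $J=1$, $\widetilde{\mathcal M}$ is the scalar shift written in $\mathfrak h$. Hence $\mathfrak h$ in Assumption \ref{ass:m} coincides with the operator in the statement.

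\emph{Choosing localized eigenfunctions (main obstacle).} Each Landau level $\mu_j$ has infinite multiplicity, and a generic eigenfunction need not decay. So, unlike the harmonic oscillator case, decay of eigenfunctions is not automatic, and we must select suitable ones.

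Assume $A>0$; the negative case follows by replacing $\mathfrak h$ with $-\mathfrak h$. Write $A=Q^TDQ$ with $Q$ orthogonal and $D={\rm diag}(a_1,a_2)$. The rotation $x\mapsto Qx$ preserves the symmetric-gauge form of $\Pi$. A further symplectic dilation $x\mapsto (a_1/a_2)^{1/4}$-rescaling, implemented by a metaplectic unitary as in Remark \ref{rem:harmonic-oscillator}, reduces $\mathfrak h$ to $\sqrt{a_1a_2}\,(\Pi_1^2+\Pi_2^2)$ plus a constant. This is the isotropic Landau Hamiltonian.

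Its eigenfunctions in the symmetric gauge are explicit, of the form $z^p\,L_j^{(p)}(|B||z|^2/2)\,e^{-|B||z|^2/4}$. In particular each level contains eigenfunctions that are polynomials times Gaussians. I pull such a function back through the composed unitary, which is a linear symplectic change of variables and therefore preserves the Gaussian-times-polynomial class. The resulting $v_j$ satisfies $|\partial^\alpha(y^\gamma v_j)|\lesssim e^{-c|y|^2}$.

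\emph{Concluding.} The bound \eqref{eq:v2-m} is then immediate. For \eqref{eq:v3-m}, the $W^{d+6,1}$ norm on $\R^2\setminus B_{\eps^{1/2-s_1}}(0)$ is $\cO(e^{-c\eps^{1-2s_1}})$. Since $1-2s_1<0$, the radius $\eps^{1/2-s_1}$ tends to infinity, so this bound is $\cO(\eps^{3/2})$. Theorem \ref{th:m} with $m=2$ then yields both the residual estimate $\cO(\eps^{5/4})$ and the normalization estimate.
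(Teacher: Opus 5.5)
Your proposal is correct and follows the paper's route. The paper also verifies Assumption \ref{ass:bandstructure-nondegenerate} with $m=2$, $J=1$ as in Theorem \ref{th:harmonic-oscillator}, then uses decaying eigenfunctions of $\mathfrak h$ in each Landau level to satisfy Assumption \ref{ass:m} and applies Theorem \ref{th:m}; your Weyl-symbol identification and your reduction to the isotropic Landau Hamiltonian with Gaussian-times-polynomial eigenfunctions (rotation in $SO(2)$ plus anisotropic dilation) spell out details that the paper's one-line proof only asserts.
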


\begin{proof}
 The proof is essentially the same as for Theorem \ref{th:harmonic-oscillator}, since $\mathfrak{h}$ has some exponentially decaying eigenfunctions $v_j$ at the Landau level $\mu_j$.
\end{proof}

\begin{remark}
    Analogously to Remark \ref{rem:harmonic-oscillator},it is straightforward to see that the effective Hamiltonian $ \mathfrak{h}$ defined in Theorem \ref{th:quantum-hall} is a Landau-Schr\"odinger operator.
\end{remark}

\subsection{Approximate eigenpairs in some periodic crystals}
Now we consider the periodic operator defined in Section \ref{sec:4.1}:
\begin{align*}
    H_0=-\frac{1}{2}\Delta_\bx+V_{\rm per}(\bx),\qquad \bx\in \R^2.
\end{align*}
Then,
\begin{corollary}\label{cor:quantumhall}
   Assume that $H_0$ satisfies Assumption \ref{ass:4.4}. Then after adding a weak perpendicular magnetic field,
\begin{align*}
    H_\eps:=\frac{1}{2}\left(-i\partial_{x_1}+\frac{B}{2}\eps x_2\right)^2+\frac{1}{2}\left(-i\partial_{x_2}-\frac{B}{2}\eps x_1\right)^2+V_{\rm per}(\bx)
\end{align*}
is an operator satisfying Assumption \ref{ass:quantum-hall} around $(\bk_0,\bX_0=0)$. Furthermore, Theorem \ref{th:quantum-hall} holds with
\begin{align*}
  \mathfrak{h}:=\begin{pmatrix}
            -i\partial_{x_1}+\frac{B}{2}x_2\\ -i\partial_{x_2} -\frac{B}{2}x_1
        \end{pmatrix}^TA \begin{pmatrix}
            -i\partial_{x_1}+\frac{B}{2}x_2\\ -i\partial_{x_2} -\frac{B}{2}x_1
        \end{pmatrix} + \sum_{\beta=0,\;|\gamma|_1=1} \left<w, {\rm Im}(P_{\beta,\gamma,0}^\bot h_{e,0} P_{\gamma,\beta,0}^\bot) w\right>_{L^2_{\rm per}}
\end{align*}
where $A=\frac{1}{2}\nabla_k^2 E_\j(k_0)$.
\end{corollary}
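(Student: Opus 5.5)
The plan is to check that the two hypotheses of Assumption \ref{ass:quantum-hall} hold for the family $h(\bk,\bX)$ attached to $H_\eps$, and then quote Theorem \ref{th:quantum-hall}.

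First I would identify the symbol. Here $V(\bx,\bX)=V_{\rm per}(\bx)$ and $\bA(\bx,\bX)=(\frac B2 X_2,-\frac B2 X_1)^T$. The vector potential is constant in $\bx$, hence trivially $\mathbb L$-periodic, and smooth in $\bX$. Taking $T$ to be the Schr\"odinger kinetic term with the factor $\frac12$ (the harmless rescaling of the earlier convention), the fiber operator is
\begin{align*}
 h(\bk,\bX)=\tfrac12\big(-i\nabla_\bx+\bk+\bA(\bX)\big)^2+V_{\rm per}(\bx)=h_0\big(\bk+\bA(\bX)\big),
\end{align*}
where $h_0$ is defined in \eqref{eq:4.3}. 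This identity holds because $\bA(\bX)$ is a constant vector in $\bx$, so it acts as a pure shift of quasi-momentum.

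From this the spectral data follow directly. The eigenvalues and eigenfunctions are $E_\j(\bk,\bX)=E_\j(\bk+\bA(\bX))$ and $\phi_\j(\bk,\bX)=\phi_\j(\bk+\bA(\bX))$. At $(\bk_0,\bX_0=0)$ we have $\bA(0)=0$, so $e_0=E_\j(\bk_0)$. By Assumption \ref{ass:4.4}, $e_0$ is a non-degenerate eigenvalue with eigenfunction $w$, which gives the first item of Assumption \ref{ass:quantum-hall}.

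For the second item, I would Taylor expand $E_\j$ at $\bk_0$, using that $E_\j$ is smooth near $\bk_0$ with $\nabla E_\j(\bk_0)=0$:
\begin{align*}
 E_\j(\bk')=e_0+(\bk'-\bk_0)^T A(\bk'-\bk_0)+\cO(|\bk'-\bk_0|^3),\qquad A=\tfrac12\nabla_k^2E_\j(\bk_0).
\end{align*}
I would then substitute $\bk'=\bk+\bA(\bX)$. This gives exactly the vector $(k_1-k_{0,1}+\frac B2X_2,\;k_2-k_{0,2}-\frac B2X_1)$ appearing in Assumption \ref{ass:quantum-hall}. The cubic remainder is $\cO(|\bk-\bk_0|^3+|\bX|^3)$, since $|\bA(\bX)|\lesssim|\bX|$. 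Finally, $A$ is strictly definite with the sign given by Assumption \ref{ass:4.4}, and $B\neq0$ by hypothesis.

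Theorem \ref{th:quantum-hall} then applies directly, and its $\mathfrak h$ is the displayed Landau--Schr\"odinger operator with this $A$, together with the energy-shift term built from $w$ and $P^\bot$. The main point to check carefully is that the remaining ingredients of Assumption \ref{ass:bandstructure-nondegenerate}, which Theorem \ref{th:quantum-hall} uses implicitly, also hold. These are smoothness of $P^\bot$ and the eigenfunction estimate \eqref{eq:ass1-2}. Both follow from analytic perturbation theory for the isolated simple eigenvalue $E_\j(\bk_0)$ of $h_0$, composed with the smooth map $(\bk,\bX)\mapsto\bk+\bA(\bX)$. The only other thing to confirm is that the factor $\frac12$ in the kinetic term only rescales $T$ and does not affect any of the earlier arguments.
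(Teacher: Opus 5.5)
Your proposal is correct and follows the paper's own argument: identify $h(\bk,\bX)=h_0\big(k_1+\frac B2X_2,\,k_2-\frac B2X_1\big)$, read off Assumption \ref{ass:quantum-hall} from Assumption \ref{ass:4.4}, and invoke Theorem \ref{th:quantum-hall}. The paper dismisses the verification as ``clearly''; you carry out the details explicitly, namely the Taylor expansion with $A=\frac12\nabla_k^2E_\j(k_0)$, the cubic remainder bound via $|\bA(\bX)|\lesssim|\bX|$, and the factor $\frac12$ in the kinetic term.
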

\begin{proof}
    According to \eqref{op-h}, it suffices to consider the family of operators
    \begin{align*}
        h(\bk,\bX)=\frac{1}{2}\left(-i\partial_{x_1}+k_1+\frac{B}{2}X_2\right)^2+\frac{1}{2}\left(-i\partial_{x_2}+k_2-\frac{B}{2}X_1\right)^2+V_{\rm per}(\bx).
    \end{align*}
Clearly, according to the definition of $h_0$ in \eqref{eq:4.3},
\begin{align*}
h(\bk,\bX)=h_0\left(k_1+\frac{B}{2}X_2,k_2-\frac{B}{2}X_1\right)  
\end{align*}
 satisfies Assumption \ref{ass:quantum-hall}. Thus the corollary follows.
    
\end{proof}

\subsection{Explanation of the energy shift}\label{sec:shift-of-energy}
We now explain the role of the energy shift:
\begin{align*}
    \sum_{\beta=0,\;|\gamma|_1=1}\left<w,{\rm Im}(P_{\beta,\gamma,0}^\bot h_{e,0} P_{\gamma,\beta,0}^\bot) w\right>_{L^2_{\rm per}}.
\end{align*}
Here we consider a massive Landau–Dirac operator and show that this energy shift is precisely the Zeeman effect in that model.

More precisely, we consider the massive Landau–Dirac operator
\begin{align*}
    H_\eps:={\pmb\sigma}\cdot (-i\nabla_x-\bA(\eps \bx))+\sigma_3
\end{align*}
where
\begin{align*}
    \bA(\bX):=\frac{B}{2}\begin{pmatrix}
    -X_2\\
    X_1
\end{pmatrix},\qquad B>0,
\end{align*}
and the Pauli matrices $\pmb \sigma:=(\sigma_1,\sigma_2)$ and $\sigma_3$ are
\begin{align*}
    \sigma_1:=\begin{pmatrix}
        0&1\\1&0
    \end{pmatrix},\qquad  \sigma_2:=\begin{pmatrix}
        0&-i\\i&0
    \end{pmatrix},\qquad  \sigma_3:=\begin{pmatrix}
        1&0\\0&-1
    \end{pmatrix}.
\end{align*}
It is well known that the spectrum of $H_\eps$ is made up of eigenvalues of infinite multiplicity with the so-called Landau-Dirac levels (see e.g., \cite{Thaller-Dirac-1992}):
\begin{align}\label{eq:eigenvalue-LD}
    \mu_{\eps,j}:=\begin{cases}
        \sqrt{2B\eps j+1},\qquad j\in \N\\
        -\sqrt{2B\eps |j|+1},\qquad -j\in \N^+.
    \end{cases}
\end{align}

For simplicity, we consider only the positive eigenvalues $\mu_{\eps,j}$ around energy level $e_0=1$ for $\eps$ small enough. Using Theorem \ref{th:quantum-hall}, we have the following.
\begin{corollary}\label{cor:massive-LD}
    Let $j\in \N$ and $e_0=1$. Then for $\eps$ small enough,
\begin{align*}
    \mu_{\eps,j}=1+ \eps\mu_{j}\left(H^{\rm S}-\frac{B}{2}\right)+\cO(\eps^{\frac{5}{4}}).
\end{align*}
where $\mu_{j}\left(H^{\rm S}-\frac{B}{2}\right)$ is the $j$-th Landau level of the operator $H^{\rm S}-\frac{B}{2}$ and
$ H^{\rm S}$ is the Landau-Schr\"odinger operator in standard quantum hall effect:
\begin{align*}
    H^{\rm S}:=   \frac{1}{2}\left(-i\nabla-\bA(x)\right)^2
\end{align*}
with eigenvalues 
\begin{align*}
     \mu_{j}^{\rm S}:=B\left(j+\frac{1}{2}\right),\qquad j\in \N.
\end{align*}
In this case, the energy shift is
\begin{align*}
   \sum_{\beta=0,\;|\gamma|_1=1}{\rm Im} \left<w, P_{\beta,\gamma,0}^\bot h_{e,0} P_{\gamma,\beta,0}^\bot w\right>_{L^2_{\rm per}}=-\frac{B}{2}.
\end{align*}
\end{corollary}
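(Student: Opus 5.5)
The plan is to apply Theorem \ref{th:quantum-hall} (with $m=2$, $J=1$) at $(\bk_0,\bX_0)=(0,0)$ and $e_0=1$. We then compare the resulting quasimode energies with the known Landau--Dirac levels \eqref{eq:eigenvalue-LD}. Separately, we compute the energy shift $\widetilde{\mathcal M}$ by hand.

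\textbf{Step 1: the symbol and Assumption \ref{ass:quantum-hall}.} Fix any lattice $\mathbb L\subset\R^2$; the coefficients do not depend on $x$, so they are trivially periodic. The symbol is $h(\bk,\bX)=\pmb\sigma\cdot(-i\nabla_x+\bk-\bA(\bX))+\sigma_3$ on $L^2_{\rm per}(\Omega;\C^2)$. It decouples over Fourier modes $e^{iG\cdot x}$, $G\in\mathbb L^*$, into $2\times 2$ blocks $\pmb\sigma\cdot(\bk+G-\bA(\bX))+\sigma_3$ with eigenvalues $\pm\sqrt{|\bk+G-\bA(\bX)|^2+1}$. Hence $e_0=1$ is a simple eigenvalue of $h(0,0)$, isolated from the rest of the spectrum, with eigenfunction $w=|\Omega|^{-1/2}(1,0)^T$ (the $G=0$ mode). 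Put $p:=\bk-\bA(\bX)=(k_1+\tfrac B2X_2,\;k_2-\tfrac B2X_1)$. Then
\[
E(\bk,\bX)=\sqrt{|p|^2+1}=1+\tfrac12|p|^2+\cO(|p|^4),
\]
which is exactly the expansion in Assumption \ref{ass:quantum-hall} with $A=\tfrac12\1_{2\times2}$. The leading part of $\mathfrak h$ is therefore $H^{\rm S}=\tfrac12(-i\nabla-\bA(x))^2$, after identifying the sign conventions of the vector potential in Theorem \ref{th:quantum-hall}.

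\textbf{Step 2: the energy shift.} The derivatives of $P^\bot=1-P^\parallel$ at $(0,0)$ only act inside the $G=0$ block. There $P^\parallel(p)=\tfrac12\bigl(1+(\pmb\sigma\cdot p+\sigma_3)/E\bigr)$, so at $p=0$ we get $\partial P^\parallel=\tfrac12\pmb\sigma\cdot\partial p$. This gives
\[
P^\bot_{e_i,0,0}=-\tfrac12\sigma_i,\qquad P^\bot_{0,e_1,0}=-\tfrac B4\sigma_2,\qquad P^\bot_{0,e_2,0}=\tfrac B4\sigma_1 ,
\]
with signs fixed by $\partial_{X_1}p=(0,-\tfrac B2)$ and $\partial_{X_2}p=(\tfrac B2,0)$. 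On the range of $P_0^\bot$ within this block, which is the spin-down component, $h_{e,0}=-1-1=-2$. We insert these into \eqref{eq:M-tilde}, i.e. sum over $\gamma=e_1,e_2$ the quantity
\[
\langle w,{\rm Im}(P^\bot_{0,\gamma,0}h_{e,0}P^\bot_{\gamma,0,0})w\rangle .
\]
Each term reduces to matrix elements $\langle\uparrow|\sigma_a|\downarrow\rangle\langle\downarrow|\sigma_b|\uparrow\rangle$, and the sum should come out to $-\tfrac B2$. I expect this sign bookkeeping, together with matching the $\bA$ convention of Theorem \ref{th:quantum-hall}, to be the main place where errors creep in. I would double-check it against the squared Dirac operator identity $(\pmb\sigma\cdot\pi)^2=\pi^2-B\sigma_3$, whose Zeeman term on the upper spinor is exactly $-\tfrac B2$ after dividing by $2e_0=2$. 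This is the analogy noted in Remark \ref{rem:zeeman}.

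\textbf{Step 3: conclusion.} By Theorem \ref{th:quantum-hall}, for each Landau level $\mu_j(H^{\rm S}-\tfrac B2)=Bj$ there is $\Phi_{j,\eps}$ with $\|\Phi_{j,\eps}\|=|\Omega|^{-1/2}+\cO(\sqrt\eps)$ and $\|(H_\eps-1-\eps Bj)\Phi_{j,\eps}\|=\cO(\eps^{5/4})$. Since $H_\eps$ is self-adjoint, this yields
\[
{\rm dist}\bigl(1+\eps Bj,\sigma(H_\eps)\bigr)=\cO(\eps^{5/4}).
\]
The positive levels $\mu_{\eps,i}=1+\eps Bi+\cO(\eps^2)$ are separated by gaps of order $\eps B\gg\eps^{5/4}$, so the nearest spectral point must be $\mu_{\eps,j}$. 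This gives $\mu_{\eps,j}=1+\eps\mu_j(H^{\rm S}-\tfrac B2)+\cO(\eps^{5/4})$, consistent with the exact formula \eqref{eq:eigenvalue-LD}.
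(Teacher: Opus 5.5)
Your route is the paper's: check Assumption~\ref{ass:quantum-hall} at $(\bk_0,\bX_0)=(0,0)$ with $A=\tfrac12\1_{2\times 2}$, compute the first derivatives of $P^\bot$ on $w$, evaluate $\widetilde{\mathcal M}$, invoke Theorem~\ref{th:quantum-hall}, and compare with \eqref{eq:eigenvalue-LD}.

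The gap is in Step~2. It is the only non-routine content of the corollary, since the eigenvalue asymptotics alone follow from expanding \eqref{eq:eigenvalue-LD}, even with error $\cO(\eps^2)$. You leave it at ``should come out to $-\tfrac B2$'', and the values you list do not give that. From $P^\bot=1-P^\parallel$ and $\partial P^\parallel=\tfrac12\pmb{\sigma}\cdot\partial p$ at $p=0$ you correctly get $P^\bot_{e_i,0,0}=-\tfrac12\sigma_i$. The same rule with $\partial_{X_1}p=(0,-\tfrac B2)$ and $\partial_{X_2}p=(\tfrac B2,0)$ gives
\[
P^\bot_{0,e_1,0}=+\tfrac B4\sigma_2,\qquad P^\bot_{0,e_2,0}=-\tfrac B4\sigma_1 .
\]
So you dropped the minus sign coming from $1-P^\parallel$ in the $X$-derivatives only. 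The corrected values agree with the paper's $P^\bot_{0,\gamma,0}w^+=-\tfrac{1}{2|\Omega|^{1/2}}(0,\partial_X^\gamma\hat{D}(0,0))^T$. With your signs every summand of \eqref{eq:M-tilde} flips, and the total is $+\tfrac B2$. That contradicts both the statement and the exact Landau--Dirac levels.

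With the corrected derivatives the check is short. Write $w=|\Omega|^{-1/2}\uparrow$ and use $h_{e,0}=-2$ on the spin-down $G=0$ mode, together with $\sigma_1\uparrow=\downarrow$, $\sigma_2\uparrow=i\downarrow$ and $\sigma_2\downarrow=-i\uparrow$. One finds
\[
P^\bot_{0,e_j,0}\,h_{e,0}\,P^\bot_{e_j,0,0}\,w=-\tfrac{iB}{4}\,w,\qquad P^\bot_{e_j,0,0}\,h_{e,0}\,P^\bot_{0,e_j,0}\,w=\tfrac{iB}{4}\,w,\qquad j=1,2.
\]
Each $\gamma=e_j$ therefore contributes $\tfrac{1}{2i}\bigl(-\tfrac{iB}{4}-\tfrac{iB}{4}\bigr)=-\tfrac B4$, and the sum is $-\tfrac B2$. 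This is consistent with your Pauli cross-check. Once this is fixed, the rest of your argument is fine, including the identification of the nearest level via the $\eps B$ gaps, which the paper leaves implicit.
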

Corollary \ref{cor:massive-LD} can be easily verified: using \eqref{eq:eigenvalue-LD}, for any $j\in \N$ and $\eps$ small enough,
\begin{align*}
    \mu_{\eps,j}=1+ \eps\left( \mu_{j}^{\rm S}-\frac{B}{2}\right)+\cO(\eps^2)=1+\eps\mu_{j}\left(H^{\rm S}-\frac{B}{2}\right)+\cO(\eps^2).
\end{align*}
In this case, the energy shift $-\frac{B}{2}$ is a consequence of the Zeeman effect in the following Pauli operator:
\begin{align}\label{eq:zeeman}
   \frac{1}{2} \mathcal{D}^* \mathcal{D}-H^{\rm S}=-\frac{B}{2}
    \end{align}
    or equivalently
  \begin{align}\label{eq:zeeman'}
   H^{\rm S}-\frac{1}{2} \mathcal{D}\mathcal{D}^* =-\frac{B}{2}
    \end{align}  
where 
\begin{align*}
    \mathcal{D}=\left(-i\partial_{x_1}+\frac{B}{2}x_2\right)+i\left(-i\partial_{x_2}-\frac{B}{2}x_1\right).
\end{align*}

\begin{proof}
 Recall that
\begin{theorem}\cite[Theorem 5.9]{Hislop-Sigal-introduction-1996}\label{th:spectrum}
    Let $A$ be self-adjoint. If for some $\delta>0$ there exists some $u\in D(A)\setminus\{0\}$ such that
    \begin{align*}
        \|(A-\lambda) u\|\leq \delta \|u\|,
    \end{align*}
    then $\sigma(A)\cap [\lambda-\delta,\lambda+\delta]\neq \emptyset$, that is, $A$ has spectrum inside $[\lambda-\delta,\lambda+\delta]$.
\end{theorem}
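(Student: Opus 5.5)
We argue by contradiction, using only the spectral theorem for the self-adjoint operator $A$ (equivalently, the standard resolvent bound). Suppose that $\sigma(A)\cap[\lambda-\delta,\lambda+\delta]=\emptyset$. In particular $\lambda\in\rho(A)$, so $(A-\lambda)^{-1}$ is a bounded operator from the Hilbert space onto $D(A)$.

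The key quantitative input is the identity, valid for self-adjoint $A$,
\begin{align*}
    \|(A-\lambda)^{-1}\|=\frac{1}{{\rm dist}(\lambda,\sigma(A))}.
\end{align*}
It follows from the spectral theorem: $(A-\lambda)^{-1}$ is unitarily equivalent to multiplication by $(a-\lambda)^{-1}$ on $L^2(\sigma(A),d\mu)$, whose sup norm over the spectrum is $1/{\rm dist}(\lambda,\sigma(A))$. We also need the strict inequality ${\rm dist}(\lambda,\sigma(A))>\delta$. Since $\sigma(A)$ is closed and disjoint from the compact interval $[\lambda-\delta,\lambda+\delta]$, the distance between the two sets is positive. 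Hence every point of $\sigma(A)$ lies at distance at least $\delta+\eta$ from $\lambda$ for some $\eta>0$. If $\sigma(A)$ is empty, the same conclusion holds trivially.

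Now take the given $u\in D(A)\setminus\{0\}$ and write $u=(A-\lambda)^{-1}(A-\lambda)u$. This gives
\begin{align*}
    \|u\|\leq \frac{1}{\delta+\eta}\,\|(A-\lambda)u\|\leq \frac{\delta}{\delta+\eta}\,\|u\|<\|u\|,
\end{align*}
which contradicts $u\neq 0$. Therefore $\sigma(A)\cap[\lambda-\delta,\lambda+\delta]\neq\emptyset$.

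The only point needing care is the strict gap $\eta>0$. It is precisely what excludes the borderline equality case and makes the final inequality strict, and it comes from the closedness of the spectrum together with the compactness of the interval. Everything else is routine functional calculus.
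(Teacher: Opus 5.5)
Your proof is correct: it is the standard resolvent-norm argument, using $\|(A-\lambda)^{-1}\|=1/{\rm dist}(\lambda,\sigma(A))$ and the fact that this distance strictly exceeds $\delta$. The paper gives no proof of its own; it simply cites Hislop--Sigal, where the result is treated in essentially this way (from memory, not checked against the text).
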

Then it suffices to show that 
    for $\eps$ small enough,
  \begin{align*}
    \|(H_\eps-e_0-\eps \mu_j)\Phi_\eps\|_{L^2(\R^2;\C^2)}=\cO(\eps^{\frac{5}{4}})
\end{align*}
with
\[
\Phi_\eps=\Phi_\eps\Big((U^{(0)}_\eps+\sqrt{\eps} U^{(1)}_\eps+\eps U^{(2)}_\eps)(w\otimes v_j)\Big)
\]
and
\begin{align*}
   \|\Phi_\eps\|_{L^2(\R^2)}=\frac{1}{|\Omega|^{1/2}}+\cO(\sqrt{\eps})
\end{align*}
where
\begin{align*}
    w=\frac{1}{|\Omega|^{1/2}}\begin{pmatrix}
        1\\  0
    \end{pmatrix}
\end{align*}
and $(v_j,\mu_j)$ being the $j$-th eigenpair of the Landau-Schr\"odinger operator with an energy shift
\begin{align*}
     \mathfrak{h}(m=2)= \frac{1}{2}\left(-i\partial_{x_1}+\frac{1}{2} Bx_2\right)^2+\frac{1}{2}\left(-i\partial_{x_2}-\frac{1}{2} Bx_1\right)^2-\frac{1}{2}B.
\end{align*}

For the operator $H_\eps$, by \eqref{op-h}, the corresponding symbol is 
\begin{align*}
    h(\bk,\bX)={\pmb\sigma}\cdot(-i\nabla_\bx+\bk-\bA(\bX))+\sigma_3.
\end{align*}
We are particularly interested in some eigenvalues near $\pm 1$ around $\bk_0=\bX_0=0$, that is
\begin{align*}
    E^\pm(\bk,\bX)=\pm\sqrt{\big|\bk-\bA(\bX)\big|^2+1}.
\end{align*}
Then
\begin{align*}
    w^+=\frac{1}{|\Omega|^{1/2}}\begin{pmatrix}
        1\\  0
    \end{pmatrix},\qquad  w^-=\frac{1}{|\Omega|^{1/2}}\begin{pmatrix}
        0\\ 1
    \end{pmatrix}
\end{align*}
are the normalized eigenfunctions of $h(0,0)=\sigma_3$ with eigenvalues $\pm 1$.

In this case, as the eigenvalues $E^+(\bk,\bX)$ and $E^-(\bk,\bX)$ are split around the point $(\bk_0,\bX_0)$, we focus on the case $e_0:=1$, the other case is analogous. In this case, 
\begin{align*}
    w=w^+.
\end{align*}
Let
\begin{align*}
    \hat{D}(\bk,\bX):=\left(k_1+\frac{1}{2}BX_2\right)+i\left(k_2-\frac{1}{2}BX_1\right).
\end{align*}
Then the targeted eigenvalue is
\begin{align*}
    E^+(\bk,\bX)=\sqrt{\big|\bk-\bA(\bX)\big|^2+1}=e_0+\frac{1}{2}|\hat{D}(\bk,\bX)|^2+\cO(|\bk|^3+|\bX|^3),
\end{align*}
and the corresponding normalized eigenfunction of $h(\bk,\bX)$ is
\begin{align*}
    \phi^+(\bk,\bX)&=\frac{c(k,X)}{|\Omega|^{1/2}}\left(\frac{1}{2}+\frac{h(\bk,\bX)}{2|h(\bk,\bX)|}\right)\begin{pmatrix}
       1\\0
    \end{pmatrix} \\
    &=\frac{c(\bk,\bX)}{2|\Omega|^{1/2}}\begin{pmatrix}
        1\\0
    \end{pmatrix}+\frac{c(\bk,\bX)}{2|\Omega|^{1/2}}\begin{pmatrix}
        \frac{1}{\sqrt{1+|\hat{D}(\bk,\bX)|^2}}\\
           \frac{\hat{D}(k,\bX)}{{\sqrt{1+|\hat{D}(\bk,\bX)|^2}}}    \end{pmatrix}\\
           &=\frac{1}{2|\Omega|^{1/2}}\begin{pmatrix}
               2-\frac{1}{4}|\hat{D}(\bk,\bX)|^2\\\hat{D}(\bk,\bX)
           \end{pmatrix} +\cO(|\bX|^3+|\bk|^3)
\end{align*}
with the normalized factor
\begin{align*}
    c(\bk,\bX)=\frac{1}{\sqrt{\frac{1}{2}+\frac{1}{2\sqrt{1+|\hat{D}(\bk,\bX)|^2}}}}=1+\frac{1}{8}|\hat{D}(\bk,\bX)|^2+\cO(|k|^3+|X|^3).
\end{align*}
Thus Assumption \ref{ass:bandstructure-nondegenerate} is satisfied  with $m=2$, $J=1$,  the scalar function
\begin{align*}
    \vec{\alpha}(\bk,\bX)=1,
\end{align*}
and a scalar homogeneous polynomial function
\begin{align*}
     f^{\rm eff}_m(\bk,\bX)= \frac{1}{2}|\hat{D}(\bk,\bX)|^2.
\end{align*}

In this case, 
\begin{align*}
    P^+(\bk,\bX)=\left| \phi^+(\bk,\bX)\right>\left< \phi^+(\bk,\bX)\right|,
\end{align*}
and for $|\beta|_1=0$ and $|\gamma|_1=1$,
\begin{align*}
    P_{\beta,\gamma,0}^\bot w^+=-\frac{1}{2|\Omega|^{1/2}}\begin{pmatrix}
        0\\ \partial_X^\gamma\hat{D}(0,0)
    \end{pmatrix},\qquad  P_{\gamma,\beta,0}^\bot w^+=-\frac{1}{2|\Omega|^{1/2}}\begin{pmatrix}
        0\\ \partial_k^\gamma\hat{D}(0,0)
    \end{pmatrix}.
\end{align*}
A direct calculation then gives the energy shift:
\begin{align*}
    \sum_{\beta=0,\;|\gamma|_1=1} \left<w^+, {\rm Im}(P_{\beta,\gamma,0}^\bot h_{e,0} P_{\gamma,\beta,0}^\bot) w^+\right>_{L^2_{\rm per}}=-\frac{1}{2}B.
\end{align*}
Thus according to Assumption \ref{ass:m} and Theorem \ref{th:m}, the effective Hamiltonian is Landau-Schr\"odinger operator with energy shift
\begin{align*}
    \mathfrak{h}(m=2)= \frac{1}{2}\left(-i\partial_{x_1}+\frac{1}{2} Bx_2\right)^2+\frac{1}{2}\left(-i\partial_{x_2}-\frac{1}{2} Bx_1\right)^2-\frac{1}{2}B.
\end{align*}
Therefore the corollary follows from Theorem \ref{th:quantum-hall} by choosing $(v_j,\mu_j)$ as the $j$-th eigenpair of the operator $  \mathfrak{h}(m=2)$.
\end{proof}

\section{Application: Landau-Dirac effective Hamiltonian} \label{sec:quantum-spin-hall}

In this section, we consider the application of Theorem \ref{th:m} for $m=1$. More precisely, we are interested in its application to Landau-Dirac operator relevant to honeycomb materials in $d=2$ within a magnetic field. We retain the notation $\bx$, $\bk$, and $\bX$ introduced in the preceding
section.

We need the following assumption.
\begin{assumption}\label{ass:conic}
Let $d=2$ and $\j\in \N^+$. We assume that the operator $H_\eps$ or equivalently the family of operators $(h(\bk,\bX))_{\bk,\bX}$ satisfies Assumption \ref{ass:bandstructure-nondegenerate} with the following property:
\begin{itemize}
    \item The eigenvalue $e_0=E_{\j}(\bk_0,\bX_0)$ is a degenerate eigenvalue of the operator $h(\bk_0,\bX_0)$ with multiplicity $J=2$;
    \item The normalized functions $w^+,w^-$ are orthogonal, and they are  eigenfunctions of $h(\bk_0,\bX_0)$ associated with eigenvalue $e_0$. We set
    \begin{align*}
        \vec{w}=(w^+,w^-)^T;
    \end{align*}
    \item The  homogeneous matrix-valued function $f^{\rm eff}_{m}$ of degree $m=1$ satisfies 
    \begin{align*}
        f^{\rm eff}_{m}(\bk,\bX)=
            \lambda_\#\left(\left(k_1+\frac{B}{2}X_2\right)\sigma_1-\left(k_2-\frac{B}{2}X_1\right)\sigma_2\right)
    \end{align*}
    for some constant $B\neq 0$ and $\lambda_\#\neq 0$.
    
\end{itemize}
\end{assumption}

Then,
\begin{theorem}\label{th:quantum-spin-hall}
We assume that $H_\eps$ satisfies Assumption \ref{ass:conic}. Let $U_\eps^{(0)}$ and $U_\eps^{(1)}$ be defined by \eqref{eq:U0-eps} and \eqref{eq:U1-eps} respectively, $\vec w$ be defined as in Assumption \ref{ass:conic}. 

Let $\mu_j$ be the $j$-th nonnegative relativistic Landau level for $j\in \N$ or $j$-th negative relativistic Landau level for $j\in -\N^+$ of the Landau-Dirac operator
\begin{align}\label{eq:Landau-Dirac}
  \mathfrak{h}:=\lambda_\# \left(\sigma_1\left(-i\partial_{x_1}+\frac{B}{2}x_2\right)-\sigma_2\left(-i\partial_{x_2}-\frac{B}{2}x_1\right)\right),\qquad \lambda_\#\neq 0.
\end{align}
Then there exist some eigenfunctions $\vec{v}_j\in \cS(\R^2)$ of $\mathfrak{h}$ with eigenvalue $\mu_j$ such that, for $\eps$ small enough, the functions
\begin{align*}
    \Phi_{j,\eps}:=\Phi_\eps\Big(\big(U^{(0)}_\eps+\sqrt{\eps} U^{(1)}_\eps\big)(\vec{w}\otimes \vec{v}_j)\Big)
\end{align*}
satisfy
\begin{align*}
    \|(H_\eps-e_0-\sqrt\eps \mu_j)\Phi_{j,\eps}\|_{L^2(\R^2;\C^n)}=\cO(\eps^{\frac{3}{4}})
\end{align*}
with
\begin{align*}
    \left\|\Phi_{j,\eps}\right\|_{L^2(\R^2;\C^n)}=\frac{1}{|\Omega|^{1/2}}+\cO(\sqrt{\eps}).
\end{align*}
\end{theorem}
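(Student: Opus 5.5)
The plan is to deduce Theorem \ref{th:quantum-spin-hall} directly from Theorem \ref{th:m} with $m=1$ and $J=2$. Assumption \ref{ass:conic} already contains Assumption \ref{ass:bandstructure-nondegenerate}, with
\[
f_1^{\rm eff}(\bk,\bX)=\lambda_\#\Big(\big(k_1+\tfrac{B}{2}X_2\big)\sigma_1-\big(k_2-\tfrac{B}{2}X_1\big)\sigma_2\Big).
\]
Consequently, the only thing left to verify is Assumption \ref{ass:m}: we need an eigenpair $(\vec v_j,\mu_j)$ of $\mathfrak h=\mathcal F^{-1}{\rm Op}_1(f_1^{\rm eff})\mathcal F$ satisfying the localization conditions \eqref{eq:v2-m} and \eqref{eq:v3-m}. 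For $m=1$ there is no $\widetilde{\mathcal M}$ correction, so no energy shift has to be computed.

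\textbf{Step 1: identify $\mathfrak h$.} Since $f_1^{\rm eff}$ is affine in $(\bk,\bX)$, its Weyl quantization involves no ordering ambiguity. Under the convention \eqref{eq:F-Opf-F}, $k_j$ becomes $-i\partial_{x_j}$ and $X_j$ becomes $x_j$ once one conjugates by $\mathcal F$, with signs fixed by the Fourier convention \eqref{eq:Fourier-transform}. This shows that $\mathfrak h$ is exactly the operator \eqref{eq:Landau-Dirac}. I would record this identification explicitly, because the relative signs of the $B$-terms decide which of the two components carries the zero mode.

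\textbf{Step 2: spectrum and eigenfunctions.} Set $\Pi_1=-i\partial_{x_1}+\frac B2x_2$ and $\Pi_2=-i\partial_{x_2}-\frac B2x_1$, so that $[\Pi_1,\Pi_2]=iB$. Then $\mathfrak h=\lambda_\#\begin{pmatrix}0&a\\ a^*&0\end{pmatrix}$ with $a=\Pi_1+i\Pi_2$, and $aa^*$, $a^*a$ are Landau--Schr\"odinger operators with levels $2|B|n$ that differ by $2|B|$. Squaring gives
\[
\mathfrak h^2=\lambda_\#^2\,{\rm diag}(aa^*,a^*a),
\]
so $\mu_j={\rm sgn}(j)\,\lambda_\#\sqrt{2|B||j|}$. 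I would construct explicit eigenvectors as follows. Take $\vec v_0=(0,\psi_0)^T$ or $(\psi_0,0)^T$ depending on the sign of $B$, where $\psi_0\in\ker a$ (or $\ker a^*$) is a Gaussian, for instance $e^{-|B||x|^2/4}$. For $j\neq0$, take $\vec v_j=(\psi_{|j|-1},\pm\psi_{|j|})^T$ (or the reflected version) built from Hermite-type functions obtained by applying the creation operator to the Gaussian. These are polynomials times a Gaussian, hence lie in $\cS(\R^2)$, and after normalization they are genuine eigenfunctions.

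\textbf{Step 3: localization.} Every $y^\gamma\vec v_j$ and every $\partial^\gamma\vec v_j$ is again a polynomial times $e^{-c|y|^2}$, so \eqref{eq:v2-m} holds. For \eqref{eq:v3-m} with $m=1$, the $W^{d+4,1}$ norm of $y^\gamma\vec v_j$ outside the ball $B_{\eps^{1/2-s_1}}(0)$ is bounded by $C\exp(-c\,\eps^{1-2s_1})$. Since $1-2s_1<0$, the radius $\eps^{1/2-s_1}$ tends to infinity, and this bound is $O(\eps^N)$ for every $N$, in particular $O(\eps)$.

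Theorem \ref{th:m} with $m=1$ then gives the stated error $O(\eps^{3/4})$ and the normalization $|\Omega|^{-1/2}+O(\sqrt\eps)$. The only real obstacle I anticipate is Step 1, namely getting the signs right, which determines which component supports the zero mode. The remaining steps are routine.
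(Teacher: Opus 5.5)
Your proposal is correct and follows the paper's route: apply Theorem \ref{th:m} with $m=1$, $J=2$, and check Assumption \ref{ass:m} using smooth, rapidly decaying eigenfunctions of the Landau--Dirac operator. The paper simply cites Thaller for these eigenfunctions, whereas you build them explicitly as polynomials times Gaussians via $a=\Pi_1+i\Pi_2$ and $a^*$, which works; one cosmetic fix is that the levels should read $\mu_j={\rm sgn}(j)\,|\lambda_\#|\sqrt{2|B||j|}$, so that they match the statement's sign labeling when $\lambda_\#<0$.
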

\begin{proof}
     The proof is essentially the same as for Theorem \ref{th:harmonic-oscillator}. In addition, we refer to \cite[Ch. 7.1]{Thaller-Dirac-1992} for suitable eigenfunctions of the Landau-Dirac operator which are smooth and decay exponentially. Thus Assumption \ref{ass:m} is satisfied.
\end{proof}

\subsection{Approximate eigenpairs in 2D honeycomb materials}\label{sec:6.1}
In this part, we show that Assumption \ref{ass:conic} can be fulfilled by the 2D honeycomb materials with constant magnetic field $B$. 

Before going further, we first define the honeycomb potential with $d=2$. Let
\begin{align*}
    \ba_1=a\begin{pmatrix}
        \frac{\sqrt{3}}{2}\\ \frac{1}{2}
    \end{pmatrix},\qquad \ba_2=a\begin{pmatrix}
        \frac{\sqrt{3}}{2}\\-\frac{1}{2}
    \end{pmatrix},\qquad a>0,
\end{align*}
and
\begin{align*}
    \bk_1=q\begin{pmatrix}
        \frac{1}{2}\\
        \frac{\sqrt{3}}{2}
    \end{pmatrix},\qquad \bk_2=q\begin{pmatrix}
        \frac{1}{2}\\ -\frac{\sqrt{3}}{2}
    \end{pmatrix},\qquad q=\frac{4\pi}{a\sqrt{3}}.
\end{align*}
Then the lattice and dual lattice in graphene are respectively
\begin{align*}
    \mathbb L=\ba_1\mathbb Z+\ba_2\mathbb Z,\qquad \mathbb L^*=\bk_1\mathbb Z+\bk_2\mathbb Z.
\end{align*}
The vertices of $\Omega^*$ are given by
\begin{align*}
    \bK_1:=\frac{1}{3}(\bk_1-\bk_2),\qquad \bK_2:=-\bK_1.
\end{align*}
Concerning the potential $V_{\rm per}$, we define the rotation matrix $\mathcal{R}$ by
\begin{align*}
    \mathcal{R}:=\begin{pmatrix}
        -\frac{1}{2} & \frac{\sqrt{3}}{2}\\ -\frac{\sqrt{3}}{2} &-\frac{1}{2}
    \end{pmatrix}.
\end{align*}
It rotates a vector in $\R^2$ clockwise by $\frac{2\pi}{3}$. According to \cite[Proposition 2.3]{Fefferman-Weinstein-Honeycomb-2012} a honeycomb lattice potential is defined by
\begin{align*}
    V_{\rm hon}(\bx)=V_0+\sum_{G\in \mathbb L^*}V_{\bG} \cos(\bG\cdot \bx)
\end{align*}
with
\begin{align*}
    V_\bG=V_{\mathcal{R}\bG}=V_{\mathcal{R}^2\bG}.
\end{align*}

Now we define the honeycomb Hamiltonian
\begin{align*}
    H_{\rm hon}=-\Delta+\eta V_{\rm hon}(\bx).
\end{align*}
According to \eqref{op-h0}, it suffices to consider the operator 
\[
h_{\rm hon}(\bk)=(-i\nabla_\bx+\bk)^2+\eta V_{\rm hon}(\bx)
\]
on $L^2_{\rm per}$. Let $(\phi_{{\rm hon},j}(\bk),E_{{\rm hon},j}(\bk))$ be an eigenpair of the operator $h_{\rm hon}(\bk)$ with 
\[
E_{{\rm hon},1}\leq E_{{\rm hon},2}\leq \cdots.
\]
Then,
\begin{proposition}\cite{Fefferman-Weinstein-Wavepackets-2014}\label{prop:graphene}
 Let $\bK\in \{\bK_1, \bK_2\}$, $\bG_1=\bk_1+\bk_2\in \mathbb L^*$, and $V_{\bG_1}>0$.  
 Then there exists a countable and closed set $\widetilde{\mathcal{C}}\subset \R$ such that for all $\eta\not\in \widetilde{\mathcal{C}}$, there exists $\j\in \N^+$ such that 
 \begin{itemize}
     \item $e_0=E_{{\rm hon},\j}(\bK)$ is an eigenvalue of multiplicity $J=2$;
     \item for $|k-K|$ small enough and $j=1,2$,
\begin{align*}
    E_{{\rm hon},\j+j-1}(\bk)-e_0=(-1)^j|\lambda_\#||\bk-\bK|+\cO(|\bk-\bK|^2)
\end{align*}
with some constant $\lambda_\#\neq 0$;
\item for $j=1,2$,
\begin{align}\label{eq:ass1-2-conic}
    \left\|\phi_{{\rm hon},\j+j-1}(\bk)-\vec{w}^T\vec \alpha_{{\rm hon},j}(\bk) \right\|_{L^2_{\rm per}}=\cO(|\bk-\bK|)
\end{align}
where 
\begin{align*}
    \vec{w}=(\phi_{{\rm hon},\j}(\bK), \phi_{{\rm hon},\j+1}(\bK))^T
\end{align*}
and $\alpha_j(k)$ satisfies
\begin{align*}
     h_{{\rm hon}}^{\rm eff}(\bk)\vec \alpha_{{\rm hon},j}(\bk)=(-1)^{j}|\lambda_\#||\bk-\bK| \vec\alpha_{{\rm hon},j}(\bk)
\end{align*}
with
\begin{align*}
    h_{{\rm hon}}^{\rm eff}(\bk)=\lambda_\#((k_1-K_1)\sigma_1-(k_2-K_2)\sigma_2).
\end{align*}
 \end{itemize}
\end{proposition}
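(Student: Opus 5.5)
This proposition is the classical Fefferman--Weinstein conical (Dirac point) result, so I would not re-derive it from scratch. Instead I would reduce it to the statements of \cite{Fefferman-Weinstein-Honeycomb-2012,Fefferman-Weinstein-Wavepackets-2014} and add only the degenerate perturbation theory needed to match the form of Assumption \ref{ass:bandstructure-nondegenerate}.

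\textbf{Step 1: the double eigenvalue at $\bK$.} I would use the symmetries of $h_{\rm hon}(\bK)$: the $\mathcal R$-rotation (acting through $\tau_{\bK}$-twisted periodicity), inversion, and complex conjugation. These split $L^2_{\bK}$ into the $\mathcal R$-eigenspaces with eigenvalues $1,\tau,\bar\tau$.

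For $\eta$ small, the double degeneracy comes from the three plane waves $e^{i\bK\cdot x}, e^{i\mathcal R\bK\cdot x}, e^{i\mathcal R^2\bK\cdot x}$. The potential term with $V_{\bG_1}>0$ splits this triple into a simple eigenvalue and a double one, and the double eigenvalue lies in the $\tau\oplus\bar\tau$ sectors. Fefferman--Weinstein show that the double degeneracy and the genericity condition $\lambda_\#\neq0$ are real-analytic conditions in $\eta$. Since both hold near $\eta=0$, they can fail only on a discrete set. Taking the closure gives a countable closed exceptional set $\widetilde{\mathcal C}$.

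\textbf{Step 2: local expansion near $\bK$.} For $|\bk-\bK|$ small, I would perform a Lyapunov--Schmidt (Schur complement) reduction onto ${\rm span}\{w^+,w^-\}$, where $w^\pm$ are chosen in the $\tau$ and $\bar\tau$ sectors. The reduced $2\times2$ matrix is
\[
e_0+\bigl(\langle w^a,2(\bk-\bK)\cdot(-i\nabla+\bK)w^b\rangle\bigr)_{a,b}+\cO(|\bk-\bK|^2).
\]
Symmetry forces the diagonal entries of this linear term to vanish. It also forces the off-diagonal entry to be $\lambda_\#\,(\kappa_1+i\kappa_2)$-type with $\kappa=\bk-\bK$, after a phase choice of $w^\pm$. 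Up to conjugation by a fixed unitary, this is $\lambda_\#(\kappa_1\sigma_1-\kappa_2\sigma_2)$.

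The eigenvalues of $h^{\rm eff}_{\rm hon}$ are $\pm|\lambda_\#||\kappa|$. Weyl's inequality together with the Schur complement bound then gives the stated expansion with an $\cO(|\kappa|^2)$ error.

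\textbf{Step 3: eigenvector estimate \eqref{eq:ass1-2-conic}.} This is the main obstacle. The eigenvectors $\vec\alpha_{{\rm hon},j}(\bk)$ of $h^{\rm eff}_{\rm hon}$ depend only on the direction of $\kappa$, so they are not continuous at $\bK$. I would not compare with $\phi(\bK)$. Instead I would compare the exact eigenvector with the perturbed one at the same $\bk$, using three facts:
\begin{itemize}
\item the reduced matrix equals $e_0+h^{\rm eff}_{\rm hon}(\bk)+\cO(|\kappa|^2)$;
\item the gap between its two eigenvalues is $2|\lambda_\#||\kappa|$;
\item by Davis--Kahan, the angle between eigenvectors is therefore $\cO(|\kappa|^2/|\kappa|)=\cO(|\kappa|)$.
\end{itemize}
The eigenvector of $h_{\rm hon}(\bk)$ differs from its projection onto ${\rm span}\{w^\pm\}$ by $\cO(|\kappa|)$, via the reduced resolvent on the complement, which is bounded thanks to the isolation gap. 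Combining the two bounds yields \eqref{eq:ass1-2-conic}, after fixing the phase of $\phi_{{\rm hon},\j+j-1}(\bk)$ to match $\vec w^T\vec\alpha_{{\rm hon},j}(\bk)$.
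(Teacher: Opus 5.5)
Your proposal is correct, but it does not follow the paper's route. The paper's proof is a pure citation. All three bullets are quoted from \cite[Theorem 3.2]{Fefferman-Weinstein-Wavepackets-2014}, and \eqref{eq:ass1-2-conic} is read off from the explicit expansion of the Bloch eigenfunctions $p_\pm(\bk)$ in \cite[Eq.~(3.13)]{Fefferman-Weinstein-Wavepackets-2014}.

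You use Fefferman--Weinstein only for the existence of the Dirac point: the exact double eigenvalue for $\eta\notin\widetilde{\mathcal C}$, $\lambda_\#\neq0$, and the symmetry-forced form of the first-order $2\times2$ matrix. You then re-derive the conical expansion and the eigenvector bound yourself. The tools are a Schur-complement reduction, Weyl's inequality, and Davis--Kahan, comparing a perturbation of size $\cO(|\bk-\bK|^2)$ against a gap of $2|\lambda_\#||\bk-\bK|$.

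The two routes buy different things.
\begin{itemize}
\item The citation is shorter and comes with explicit leading eigenvectors.
\item Your argument is self-contained for the last two bullets and applies verbatim to any isolated double eigenvalue whose first-order splitting matrix has eigenvalues $\pm c|\bk-\bK|$ with $c>0$. It is therefore the reusable lemma one would need to verify Assumption~\ref{ass:bandstructure-nondegenerate} in other conical settings.
\item It also makes explicit two normalizations the statement leaves implicit. First, $w^\pm$ must be rephased so that $\lambda_\#$ is real; otherwise $\lambda_\#\big((k_1-K_1)\sigma_1-(k_2-K_2)\sigma_2\big)$ is not Hermitian. Second, $\phi_{{\rm hon},\j+j-1}(\bk)$ needs a $\bk$-dependent phase, because $\vec\alpha_{{\rm hon},j}(\bk)$ depends only on the direction of $\bk-\bK$.
\end{itemize}

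Two small wording points in your Step~1. The good conditions hold for $0<|\eta|$ small, not at $\eta=0$, where the eigenvalue is triple. Also, no closure step is needed: the exceptional set is a locally finite union of zero sets of nontrivial real-analytic functions, so it is already closed. In general, taking the closure of a countable discrete set need not preserve countability.
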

\begin{proof}
The proposition can be found in \citep[Theorem 3.2]{Fefferman-Weinstein-Wavepackets-2014}, and its proof is based on \cite{Fefferman-Weinstein-Honeycomb-2012}. In particular, \eqref{eq:ass1-2-conic} follows from \cite[Eq. (3.13)]{Fefferman-Weinstein-Wavepackets-2014} where $p_\pm(\bk)$ there are precisely the functions  $\phi_{{\rm hon},\j+j-1}(\bk)$ with $j=1,2$.
\end{proof}

As a result,
\begin{corollary}\label{cor:quantumspinhall}
Let $\bk_0\in \{\bK_1,\bK_2\}$, $\bX_0=0$ and let $\eta$ be a constant chosen as in Proposition \ref{prop:graphene}. Let
    \begin{align*}
        H_\eps=(-i\nabla_\bx-\bA (\eps \bx))^2+\eta V_{\rm hon}(\bx).
    \end{align*}
    with
    \begin{align*}
        \bA(\bX)=\frac{B}{2}\begin{pmatrix}
            -X_2\\X_1
        \end{pmatrix},\qquad B\in \R\setminus\{0\}.
    \end{align*}
    Then $H_\eps$ is an operator satisfying Assumption \ref{ass:conic} around $(k_0,X_0)$. Furthermore, Theorem \ref{th:quantum-spin-hall} holds with
    \begin{align*}
  \mathfrak{h}:=\lambda_\# \left(\sigma_1\left(-i\partial_{x_1}+\frac{B}{2}x_2\right)-\sigma_2\left(-i\partial_{x_2}-\frac{B}{2}x_1\right)\right).
\end{align*}
\end{corollary}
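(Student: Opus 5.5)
The plan is to reduce Corollary \ref{cor:quantumspinhall} to Proposition \ref{prop:graphene} via the same gauge-shift trick used in Corollary \ref{cor:quantumhall}.

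\textbf{Step 1: write the Bloch symbol as a shifted fiber.} By \eqref{op-h}, the Bloch symbol of $H_\eps$ is
\begin{align*}
h(\bk,\bX)=(-i\nabla_\bx+\bk-\bA(\bX))^2+\eta V_{\rm hon}(\bx)=h_{\rm hon}\big(\bk-\bA(\bX)\big),
\end{align*}
where $\bk-\bA(\bX)=\big(k_1+\tfrac{B}{2}X_2,\;k_2-\tfrac{B}{2}X_1\big)$. Thus every spectral quantity of $h(\bk,\bX)$ is the corresponding quantity of $h_{\rm hon}$ evaluated at the shifted momentum $\bk-\bA(\bX)$. Since $\bA(\bX)=0$ at $\bX_0=0$, we have $h(\bk_0,0)=h_{\rm hon}(\bK)$.

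\textbf{Step 2: transfer the data of Proposition \ref{prop:graphene}.} For $\eta\notin\widetilde{\mathcal C}$, the value $e_0=E_{{\rm hon},\j}(\bK)$ is an eigenvalue of multiplicity $J=2$. Its eigenvectors $w^\pm=\phi_{{\rm hon},\j},\phi_{{\rm hon},\j+1}$ at $\bK$ are orthonormal, so the first two items of Assumption \ref{ass:conic} hold.

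Next, set $\kappa:=\bk-\bA(\bX)-\bK$. This is linear in $(\bk-\bk_0,\bX)$, so $|\kappa|\lesssim|\bk-\bk_0|+|\bX|$. Hence the eigenvalue expansion gives \eqref{eq:ass1-1} with $m=1$, with an error $\cO(|\kappa|^2)=\cO(|\bk-\bk_0|^2+|\bX|^2)$. The eigenfunction estimate \eqref{eq:ass1-2-conic} gives \eqref{eq:ass1-2} in the same way.

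The effective symbol is
\begin{align*}
h^{\rm eff}(\bk,\bX)=\lambda_\#\big((k_1-K_1+\tfrac B2X_2)\sigma_1-(k_2-K_2-\tfrac B2X_1)\sigma_2\big)=f_1^{\rm eff}(\bk-\bk_0,\bX-\bX_0).
\end{align*}
This is a Hermitian, homogeneous, degree-one function with exactly the form required in Assumption \ref{ass:conic}. Its eigenvalues $\pm|\lambda_\#||\kappa|$ match $\lambda_j^{\rm eff}$, and its eigenvectors are $\vec\alpha_{{\rm hon},j}(\bk-\bA(\bX))$, which gives \eqref{eq:ass1-3}.

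One remaining hypothesis is that the rank of the spectral projector onto $(e_1,e_2)$ stays constant on a neighborhood $\mathcal N$. This follows from the isolation of the pair of eigenvalues at $\bK$ and the continuity of $\kappa$ in $(\bk,\bX)$.

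\textbf{Step 3: conclude.} With Assumption \ref{ass:conic} verified, Theorem \ref{th:quantum-spin-hall} applies directly and yields the stated $\mathfrak h$.

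\textbf{Main obstacle.} The one point needing care is the sign and ordering convention in Step 2. The eigenvalues $E_1\le E_2$ are labelled in increasing order, while Proposition \ref{prop:graphene} attaches the sign $(-1)^j$ to $j$. I must check that the branch carrying the minus sign is the lower band at every point, including along the whole crossing region. Since both $h^{\rm eff}$ and the ordered true bands are sorted consistently by $\pm|\lambda_\#||\kappa|$, I expect this matching to go through.
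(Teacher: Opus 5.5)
Your proposal is correct and follows the paper's own argument: the paper also rests on the identity $h(\bk,\bX)=h_{\rm hon}(\bk-\bA(\bX))$. It sets $\phi_j(\bk,\bX)=\phi_{{\rm hon},j}(\bk-\bA(\bX))$ and $\vec\alpha_j(\bk,\bX)=\vec\alpha_{{\rm hon},j}(\bk-\bA(\bX))$, reads off $f_1^{\rm eff}$, and invokes Theorem \ref{th:quantum-spin-hall}. Your extra checks (the linear bound on $\kappa$, constant rank of the projector) are harmless refinements, and the ordering worry is already settled by the $(-1)^j$ labelling in Proposition \ref{prop:graphene}, which gives the minus sign to the lower index.
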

\begin{proof}
 This is a consequence of the fact that 
    \begin{align*}
        h(\bk,\bX)= h_{\rm hon}(\bk-\bA(\bX)).
    \end{align*}
    Let
    \begin{align*}
    \vec w=\Big(\phi_{{\rm hon},\j}(\bk_0-\bA(\bX_0)),\phi_{{\rm hon},\j+1}(\bk_0-\bA(\bX_0))\Big)^T.
\end{align*}
Then $(\phi_j(\bk,\bX),E_j(\bk,\bX))=(\phi_{{\rm hon},j}(\bk-\bA(\bX)),E_{{\rm hon},j}(\bk-\bA(\bX)))$ is an eigenpair of $h(\bk,\bX)$. Thus
\begin{align*}
    \vec{\alpha}_{j}(\bk,\bX)=\vec{\alpha}_{{\rm hon},j}(\bk-\bA(\bX))
\end{align*}
and
\begin{align*}
    f^{\rm eff}_1(\bk,\bX)=\lambda_\#\left(\sigma_1\left(k_1+\frac{B}{2}X_2\right)-\sigma_2\left(k_2-\frac{B}{2}X_1\right)\right).
\end{align*}
Thus Assumption \ref{ass:conic} is satisfied with $\bk_0\in \{\bK_1,\bK_2\}$ and $\bX_0=0$. The corollary then follows from Theorem \ref{th:quantum-spin-hall}.
\end{proof}

\section{Application: almost flat-band}\label{sec:almost-flat-band}
In this section, we consider the almost flat-band property of some commensurate crystals. As explained in Introduction, these commensurate systems are used in physics to approximate incommensurate systems via the supercell approach. Here we show that under the same assumptions as in Theorem \ref{th:m}, there exists almost flat-band structure of these commensurate supercells.

We consider the Hamiltonian
\begin{align*}
    H_\eps:=T(-i\nabla_x+\bA(x,\eps x))+V(x,\eps x).
\end{align*}
Recall that $x\mapsto\bA(x,\cdot)$ and $x\mapsto V(x,\cdot)$ are $\mathbb L$-periodic with $ \mathbb L=\sum_{j=1}^d a_j \Z$. Concerning commensurate/incommensurate system, we also need periodic assumptions on the $X$ variable. Without loss of generality, we consider the following simple case.
\begin{assumption}\label{ass:periodicpotentials}We assume that
    $X\mapsto\bA(\cdot,X)$ and $X\mapsto V(\cdot,X)$ are $\mathbb L$ periodic.
\end{assumption}
In this case, if $\eps\in \R\setminus \mathbb Q$ is irrational, then $H_\eps$ is an incommensurate Hamiltonian; if $\eps\in \mathbb Q$ is rational, then we can write $\eps=\frac{p}{q}$ with $p,q\in \mathbb Z$ and $q\neq 0$, and $H_\eps$ is a commensurate Hamiltonian with $ \mathbb L_q$-periodicity, where
\begin{align*}
 \mathbb L_q  = q\mathbb L:=\{q R;\;R\in\mathbb L\}.
\end{align*}

Concerning $\eps\in  \R\setminus \mathbb Q$, we can split it as
\begin{align*}
    \eps=\widetilde{\eps}+\eps_n
\end{align*}
where 
\begin{itemize}
    \item $\widetilde{\eps}\in \mathbb Q$ is fixed such that $\eps-\widetilde{\eps}$ is small enough;
    \item $\eps_n \in \mathbb Q$ is a sequence such that $\eps_n \to \eps-\widetilde{\eps}$. 
\end{itemize}
In physics, to study $H_\eps$, the supercell approach is used: we use periodic system $H_{\widetilde{\eps}+\eps_n}$ to approach $H_\eps$.

In this section, without loss of generality , we assume $\widetilde{\eps}=0$ and study the Hamiltonian $H_{\eps_n}$. If $\widetilde{\eps}\neq 0$, we replace $H_0$ by $H_{\widetilde{\eps}}$, and regard  $\mathbb L$ as the lattice for $H_{\widetilde{\eps}}$.

\subsection{Commensurate system}
In this subsection, we recall some basic concepts for commensurate system.

Let $\eps=\frac{p}{q}\in \mathbb Q$ with $p,q\in \mathbb Z$ and $q\neq 0$. Then
\begin{align*}
    x\mapsto \bA(x,\eps x),\qquad x\mapsto V(x,\eps x)
\end{align*}
are $\mathbb L_q$-periodic. Thus $H_\eps$ is $ \mathbb L_q$-periodic. The corresponding dual lattice is
\begin{align*}
    \mathbb L^*_q:=q^{-1}\mathbb L^*,
\end{align*}
the unit cell $\Omega_q$ and first Brillouin zone $\Omega_q^*$ are
\begin{align}\label{eq:unit-cell-q}
    \Omega_q= q\Omega,\qquad \Omega_q^*=q^{-1}\Omega^*.
\end{align}

We now define the Bloch transform $\cU_q$ on $\mathbb L_q$:
\begin{align*}
&\forall u \in C^\infty_{\rm c}(\R^d;\C), \quad (\cU_q u)_k(x) = \sum_{R_q \in \mathbb{L}_q} u(x+R_q) e^{-ik \cdot (x+R_q)}
\end{align*}
Then analogous to \eqref{op-h0}, we have the decomposition
\begin{align}\label{eq:Bloch-transform-q}
    H_\eps:=\cU^{-1}_q\left( \fint_{\Omega_q^*}^\oplus h_q(k)dk \right)\cU_q
\end{align}
with
\begin{align}\label{eq:h-q}
    h_q(k):=T(-i\nabla_x+k+\bA(x,\eps x))+V(x,\eps x).
\end{align}
The operator $h_q(k)$ is acting on $ L^2_{q,\rm per}$, defined by
\begin{align*}
    L^2_{q,\rm per}:=\{f \in L^2_{\rm loc}(\mathbb{R}^d) \; |  \; \forall R_q\in\mathbb{L}_q, 
\; f(x-R_q)= f(x) \mbox{ for a.a. } x \in \R^d\}.
\end{align*}
We now study the almost-flat band property of the band structure of the mapping $k\mapsto h_q(k)$. 

\subsection{Almost flat-band}
The main result of this section is the following.
\begin{theorem}\label{th:almost-flat}
Assume that 
\begin{itemize}
    \item the potentials $\bA$ and $V$ satisfy Assumption \ref{ass:periodicpotentials};
    \item the operator $H_\eps$ satisfies  Assumption \ref{ass:bandstructure-nondegenerate};
    \item there exists an eigenpair $(\vec{v}_*,\mu_*)$ satisfying  Assumption \ref{ass:m} with $m=1$ or Assumption \ref{ass:m} with $m=2$.
\end{itemize}
Let $\eps:=\frac{p}{q}\in \mathbb Q$ be small enough with $p,q\in \N$ and $q\neq 0$. Then there exists a family of approximate Bloch eigenfunctions $\Psi_\eps(k) \in L^2_{\rm per}(\Omega_q;\C^n)$ and $\mu \in \R$ such that,
\begin{align}\label{eq:eigenfunction-flatband}
  \sup_{k\in \Omega_q^*} \| (h_q(k)-(e_0+\eps^{\frac{m}{2}}\mu))\Psi_\eps(k)\|_{L^2_{q,\rm per}}=\cO(\eps^{\frac{m}{2}+\frac{1}{4}}).
\end{align}
Furthermore there exists an almost flat band of $h_q(k)$ in the following sense:
\begin{align}\label{eq:flatband}
   \sup_{k\in \Omega^*_q} {\rm dist}\Big(e_0+\eps^{\frac{m}{2}}\mu,\sigma(h_q(k))\Big)=\cO(\eps^{\frac{m}{2}+\frac{1}{4}}).
\end{align}
\end{theorem}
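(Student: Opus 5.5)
We use the localized quasimode of Theorem \ref{th:m} and periodize it over the supercell lattice $\mathbb L_q$, twisting by a Bloch phase, to obtain Bloch quasimodes for every $k\in\Omega_q^*$. Let $\Theta_\eps:=\sum_{j=0}^m \eps^{j/2}\Phi_\eps(U^{(j)}_\eps(\vec w\otimes\vec v_*))$ and set $\mu:=\mu_*$. By Theorem \ref{th:m}, $\|(H_\eps-e_0-\eps^{m/2}\mu)\Theta_\eps\|_{L^2}=\cO(\eps^{m/2+1/4})$ and $\|\Theta_\eps\|_{L^2}=|\Omega|^{-1/2}+\cO(\sqrt\eps)$. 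Because of the cutoff $\chi(\eps^{s_1}x-\eps^{s_1-1}X_0)$ in \eqref{eq:Phi-eps}, $\Theta_\eps$ is supported in the ball $B_{2\eps^{-s_1}}(\eps^{-1}X_0)$. Since $s_1<1$ and $\eps=p/q$, so that $q\ge\eps^{-1}$, this ball has diameter $4\eps^{-s_1}\ll q\,{\rm dist}(0,\mathbb L\setminus\{0\})$ for $\eps$ small. Hence its translates by distinct elements of $\mathbb L_q$ are pairwise disjoint.

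For $k\in\Omega_q^*$ we define
\[
\Psi_\eps(k)(x):=\sum_{R_q\in\mathbb L_q}e^{-ik\cdot(x+R_q)}\Theta_\eps(x+R_q),
\]
which is $(\cU_q\Theta_\eps)_k$. Only one term of the sum is nonzero at each point, so $\Psi_\eps(k)\in L^2_{q,\rm per}$ and $\|\Psi_\eps(k)\|_{L^2_{q,\rm per}}=\|\Theta_\eps\|_{L^2(\R^d)}$ exactly, for every $k$. The point is that the disjointness of supports gives this identity pointwise in $k$. We then use \eqref{eq:Bloch-transform-q}: $H_\eps$ commutes with $\mathbb L_q$-translations under Assumption \ref{ass:periodicpotentials} and $\eps=p/q$, and it is a differential operator. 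Therefore $h_q(k)\Psi_\eps(k)=(\cU_q H_\eps\Theta_\eps)_k$, and this is again a sum of translates of $e^{-ik\cdot x}H_\eps\Theta_\eps$. Since $H_\eps$ is local, $H_\eps\Theta_\eps$ is supported in the same ball, so the translates stay disjoint. It follows that
\[
\|(h_q(k)-e_0-\eps^{m/2}\mu)\Psi_\eps(k)\|_{L^2_{q,\rm per}}=\|(H_\eps-e_0-\eps^{m/2}\mu)\Theta_\eps\|_{L^2(\R^d)}
\]
uniformly in $k$. This gives \eqref{eq:eigenfunction-flatband}, with constants independent of $k$. Normalizing by $\|\Psi_\eps(k)\|=|\Omega|^{-1/2}+\cO(\sqrt\eps)$, which is bounded below, keeps the same order.

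For \eqref{eq:flatband} we note that each $h_q(k)$ is self-adjoint on $L^2_{q,\rm per}$. Applying Theorem \ref{th:spectrum} with $A=h_q(k)$, $\lambda=e_0+\eps^{m/2}\mu$, $u=\Psi_\eps(k)$ and $\delta=C\eps^{m/2+1/4}$, where $C$ is uniform in $k$, gives ${\rm dist}(\lambda,\sigma(h_q(k)))\le\delta$ for all $k$. Taking the supremum over $\Omega_q^*$ yields \eqref{eq:flatband}.

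The main point to check is the support and disjointness claim. It requires that the cutoff radius $\eps^{-s_1}$ is $o(q)$; this holds because $s_1<1$ and $q\ge 1/\eps$. We must also check that Theorem \ref{th:m} applies under the extra periodicity in $X$, which is harmless because Assumption \ref{ass:bandstructure-nondegenerate} is local near $(k_0,X_0)$. Finally, the product formula for $\Phi_\eps$ must be compatible with locality of $H_\eps$ (differential, not pseudodifferential, in $x$). If $\eps=p/q$ is not in lowest terms nothing changes, since only $q\ge 1/\eps$ is used.
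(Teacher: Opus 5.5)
Your proposal is correct and follows the paper's proof essentially step for step. You periodize the quasimode of Theorem \ref{th:m} via $\cU_q$. You use that the $\mathbb L_q$-translates of the support ball $B_{2\eps^{-s_1}}(\eps^{-1}X_0)$ are disjoint, since $q\ge\eps^{-1}\gg\eps^{-s_1}$. This gives $\|\Psi_\eps(k)\|_{L^2_{q,\rm per}}=\|\Theta_\eps\|_{L^2}$ and the residual identity uniformly in $k$, and you conclude with Theorem \ref{th:spectrum}.

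Your explicit remark that locality of $H_\eps$ keeps $H_\eps\Theta_\eps$ inside the same ball makes precise a step the paper uses only implicitly. Likewise, writing $\Theta_\eps$ with the sum up to $m$ handles $m=1$ and $m=2$ at once, where the paper treats $m=1$ and says $m=2$ is the same.
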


\begin{proof}
Without loss of generality, we assume $H_\eps$ satisfies Assumption \ref{ass:m} with $m=1$. The case  Assumption \ref{ass:m} with $m=2$ can be proved in the same manner.

Let 
\begin{align*}
\Phi_\eps(x):=\Phi_\eps\Big((U_\eps^{(0)}+\sqrt\eps U_\eps^{(1)}\big)(\vec{w}\otimes \vec{v}_*)\Big)(x)
\end{align*}
be given as in Theorem \ref{th:m}. We first claim that for any $k\in \Omega_q^*$,
\begin{align*}
    \Psi_\eps(k,x):=(\cU_q \Phi_\eps)_k(x)=\sum_{R_q \in \mathbb{L}_q} \Phi_\eps(x+R_q) e^{-ik \cdot (x+R_q)} \in L^2_{q,\rm per}
\end{align*}
satisfies \eqref{eq:eigenfunction-flatband}. Let $\cT_{R}$ be a translation operator defined by
\begin{align*}
    \cT_{R} u(x)=u(x- R),\qquad u\in L^2(\R^d).
\end{align*}
According to the periodicity of $H_\eps$, for any $R_q\in \mathbb L_q$,
\begin{align*}
    [H_\eps,\cT_{R_q}]=0.
\end{align*}
Then by Theorem \ref{th:m},
\begin{align*}
    \|(H_\eps-e_0-\eps^{\frac{m}{2}}\mu) \cT_{R_q}\Phi_{\eps}\|_{L^2(\R^d;\C^n)}=\|\cT_{R_q} (H_\eps-e_0-\eps^{\frac{m}{2}}\mu) \Phi_{\eps}\|_{L^2(\R^d;\C^n)}=\cO(\eps^{\frac{m}{2}+\frac{1}{4}}).
\end{align*}
Next, by the definition \eqref{eq:Phi-eps} of $\Phi_\eps(\cdot)$, we have
\begin{align*}
    \Supp(\Phi_\eps)\subset B_{2\eps^{-s_1}}(\eps^{-1}X_0).
\end{align*}
Thus for any $R_q\neq R_q'$ and $R_q,R_q'\in \mathbb L_q$,
\begin{align}\label{eq:7.6}
    \Supp(\Phi_\eps(\cdot-R_q)) \cap  \Supp(\Phi_\eps(\cdot-R_q'))=\emptyset
\end{align}
since for $\eps$ small enough,
\begin{align*}
    |R_q'-R_q|\geq q\,{\rm dist}(0,\mathbb L\setminus\{0\}) = \eps^{-1}p \;{\rm dist}(0,\mathbb L\setminus\{0\})\geq   \eps^{-1} {\rm dist}(0,\mathbb L\setminus\{0\})\gg 4\eps^{-s_1}.
\end{align*}
Consequently,
\begin{equation}\label{eq:7.7}
    \begin{aligned}
\MoveEqLeft \sup_{k\in \Omega_q^*}\| (h_q(k)-(e_0+\eps^{\frac{m}{2}}\mu))\Psi_\eps(k)\|_{L^2_{q,\rm per}}^2\\
&=\sup_{k\in \Omega_q^*}\sum_{R_q\in \mathbb L_q}\left\|(h_q(k)-(e_0+\eps^{\frac{m}{2}}\mu))\Big(\Phi_\eps(\cdot+R_q)e^{-ik\cdot(x+R_q)}\Big)\right\|_{L^2_{q,\rm per}}^2\\
 &=\sup_{k\in \Omega_q^*}\sum_{R_q\in \mathbb L_q}\|(H_\eps-(e_0+\eps^{\frac{m}{2}}\mu))\Phi_\eps(\cdot+R_q)\|_{L^2_{q,\rm per}}^2\\
 &=\|(H_\eps-(e_0+\eps^{\frac{m}{2}}\mu))\Phi_\eps\|_{L^2(\R^d;\C^n)}^2=\cO(\eps^{m+\frac{1}{2}}).
\end{aligned}
\end{equation}
This proves \eqref{eq:eigenfunction-flatband}.

It remains to prove \eqref{eq:flatband}. Using Theorem \ref{th:spectrum}, it suffices to show that there exists $C>0$ such that for $\eps$ small enough,
\begin{align*}
    \inf_{k\in \Omega_q^*}\|\Psi_\eps(k)\|_{L^2_{q,\rm per}}\geq C.
\end{align*}
Arguing as for \eqref{eq:7.7} and by \eqref{eq:7.6}, this follows from Theorem \ref{th:m} (for $m=1$) or Theorem \ref{th:m} (for $m=2$) that
\begin{align*}
    \inf_{k\in \Omega^*_q}\|\Psi_\eps(k)\|_{L^2_{q,\rm per}}=\|\Phi_\eps\|_{L^2(\R^d;\C^n)}=\frac{1}{|\Omega|^{1/2}}+\cO(\sqrt{\eps}).
\end{align*}
Hence \eqref{eq:flatband}. This completes the proof.
\end{proof}

\section*{Acknowledgements} 
This project is supported by the National Key Research and Development Program of China (2025YFA1016800). The author would like to thank Guo Chuan Thiang and Domenico Monaco for insightful discussions regarding the quantum Hall effect, as well as for noting the distinctions between this effect and the present work.

\appendix

\section{Estimate on some Bloch wavefunctions}\label{sec:estimate-Bloch}

We need an estimate on the Bloch transformed functions.
\begin{lemma}\label{lem:Uf-cH}
Let $n\geq d+1$ and let $f(x,y)\in W^{n,1}(\R^d_y;L^2_{{\rm per},x})$ with $x\mapsto f(x,\cdot)$ being $\mathbb L$-periodic. Let $g(x)=f(x,x)$. Then  for any $k\in \Omega^*$,
\begin{align*}
     \|(\cU g)_k-|\Omega|^{-1} \mathcal{F}_y(f)(\cdot,k)\|_{L^2_{\rm per}}\lesssim  \int_{\R^d} \left\|(-\Delta_y)^{n/2} f(\cdot,y)\right\|_{L^2_{\rm per}} dy
\end{align*}
and
\begin{align*}
     \|(\cU g)_k\|_{L^2_{\rm per}}\lesssim \|f\|_{W^{n,1}(\R^d;L^2_{\rm per})}
\end{align*}
where $\mathcal{F}_y(f)$ is the Fourier transform of $f$, i.e.,
\begin{align*}
  \mathcal{F}_y(f)(x,\xi):= \int_{\R^d} e^{-i\xi\cdot y}f(x,y)dy.
\end{align*}
\end{lemma}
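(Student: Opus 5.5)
The proof will go through the Poisson summation formula applied in the slow variable. For fixed $x$, consider $F_x(y):=f(x,y)e^{-ik\cdot y}$ as a function of $y$. By the periodicity of $f$ in its first argument, $f(x+R,x+R)=f(x,x+R)$, so
\begin{align*}
(\cU g)_k(x)=\sum_{R\in\mathbb L} f(x,x+R)e^{-ik\cdot(x+R)}=\sum_{R\in\mathbb L}F_x(x+R).
\end{align*}
Poisson summation over $\mathbb L$ then gives
\begin{align*}
(\cU g)_k(x)=|\Omega|^{-1}\sum_{G\in\mathbb L^*}e^{iG\cdot x}\,\mathcal F_y(f)(x,k+G).
\end{align*}
This is the same identity as \eqref{poisson}, now with $x$-dependent amplitudes. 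Justifying it pointwise needs some regularity in $y$. I will first prove the estimates for $f$ smooth in $y$ with compact support, valued in $L^2_{\rm per}$, and then pass to $W^{n,1}$ by density. The density step works because both sides of each inequality are continuous in the $W^{n,1}(\R^d;L^2_{\rm per})$ norm, once the bounds below are in place.

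The $G=0$ term is exactly $|\Omega|^{-1}\mathcal F_y(f)(\cdot,k)$. For the remaining terms I integrate by parts in $y$:
\begin{align*}
\mathcal F_y(f)(x,\xi)=|\xi|^{-n}\mathcal F_y\big((-\Delta_y)^{n/2}f\big)(x,\xi),\qquad \xi\neq 0.
\end{align*}
For non-integer $n/2$ this identity is read on the Fourier side. Minkowski's inequality then gives
\begin{align*}
\|\mathcal F_y((-\Delta_y)^{n/2}f)(\cdot,\xi)\|_{L^2_{\rm per}}\le\int_{\R^d}\|(-\Delta_y)^{n/2}f(\cdot,y)\|_{L^2_{\rm per}}\,dy .
\end{align*}
Since $k\in\Omega^*$ and $G\neq0$, we have $|k+G|\ge c|G|$, as in \eqref{eq:distance-Omega-G}. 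Multiplication by $e^{iG\cdot x}$ is unitary on $L^2_{\rm per}$. So the $L^2_{\rm per}$ norm of the remainder is bounded by
\begin{align*}
\sum_{G\neq0}(c|G|)^{-n}\int_{\R^d}\|(-\Delta_y)^{n/2}f(\cdot,y)\|_{L^2_{\rm per}}\,dy,
\end{align*}
and the lattice sum converges because $n\ge d+1>d$. This proves the first inequality.

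For the second inequality I combine the first one with the trivial bound $\|\mathcal F_y(f)(\cdot,k)\|_{L^2_{\rm per}}\le\|f\|_{L^1(\R^d;L^2_{\rm per})}$. It remains to control $(-\Delta_y)^{n/2}f$ in $L^1$ by the $W^{n,1}$ norm. For even $n$ this is immediate. For odd $n$ the fractional Laplacian is not bounded on $L^1$, which is the one technical point of the proof. I will avoid it by choosing an integer-order substitute in the integration by parts: bound $|\xi|^n\le C\sum_{|\alpha|_1=n}|\xi^\alpha|$ and write
\begin{align*}
\mathcal F_y(f)(x,\xi)=\frac{\sum_{|\alpha|_1=n}\overline{(i\xi)^\alpha}\,\mathcal F_y(\partial_y^\alpha f)(x,\xi)}{\sum_{|\alpha|_1=n}|\xi^{\alpha}|^2}.
\end{align*}
The denominator is comparable to $|\xi|^{2n}$, which yields the same decay in $\xi$ with only $W^{n,1}$ norms of $f$ on the right. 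With this substitute, the first estimate as stated should be read as the paper uses it (see Lemma~\ref{lem:3.2}), with $(-\Delta_y)^{n/2}$ standing for derivatives of order $n$ in $y$.
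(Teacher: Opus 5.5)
Your proposal is correct and follows the paper's proof essentially step for step. The common steps are: Poisson summation in $y$; isolating the $G=0$ term; integrating by parts to gain $|G+k|^{-n}$, with $|G+k|\gtrsim|G|$ for $k\in\Omega^*$; summing over $\mathbb L^*\setminus\{0\}$ using $n\ge d+1$; and deducing the second bound from the first together with $\|\mathcal F_y(f)(\cdot,k)\|_{L^2_{\rm per}}\le\|f\|_{L^1(\R^d;L^2_{\rm per})}$.

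Your replacement of $(-\Delta_y)^{n/2}$ by the order-$n$ derivatives $\partial_y^\alpha$, via $\sum_{|\alpha|_1=n}|\xi^\alpha|^2\gtrsim|\xi|^{2n}$, is a worthwhile refinement. The inequality $\|(-\Delta_y)^{n/2}f\|_{L^1}\lesssim\|f\|_{W^{n,1}}$ used in the paper's last line fails for odd $n$, since $(-\Delta_y)^{1/2}$ involves Riesz transforms, which are unbounded on $L^1$. This matters in practice, e.g.\ $n=d+1=3$ when $d=2$. Your density argument also supplies the justification of the Poisson formula that the paper leaves implicit.
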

\begin{proof}
According to Poisson summation formula,
\begin{align*}
  \MoveEqLeft  (\cU g)_k=\sum_{R\in\mathbb L} f(x,x+R)e^{-ik\cdot(x+R)}=|\Omega|^{-1}\sum_{G\in \mathbb{L}^*}e^{iG\cdot x}\mathcal{F}_y(f)(x,G+k)
\end{align*}
Thus 
\begin{align*}
    \|(\cU g)_k-|\Omega|^{-1}\mathcal{F}_y(f)(\cdot,k)\|_{L^2_{\rm per}}\lesssim  \sup_{k\in \Omega^*}\sum_{G\in \mathbb L^*\setminus\{0\}}\| \mathcal{F}_y(f)(\cdot,G+k)\|_{L^2_{\rm per}}.
\end{align*}
As $|G+k|^{n} e^{-i(G+k)\cdot y}= (-\Delta_y)^{n/2}e^{-i(G+k)\cdot y}$, by integration by parts, for $G\not\in \overline{\Omega^*}$, 
\begin{align*}
    \| \mathcal{F}_y(f)(\cdot,G+k)\|_{L^2_{\rm per}}&= \frac{1}{|G+k|^n}\left\|\mathcal{F}_y\big((-\Delta_y)^{n/2}f\big)(\cdot,G+k)\right\|_{L^2_{\rm per}}\\
    &\lesssim \frac{1}{|G+k|^n}\int_{\R^d} \left\|(-\Delta_y)^{n/2}f(\cdot,y)\right\|_{L^2_{\rm per}} dy.
\end{align*}
By \eqref{eq:Omega*}, 
\begin{align*}
   \inf_{ G\in \mathbb L^*\setminus\{0\}} {\rm dist}(G, \Omega^*)>0.
\end{align*}
Thus, for $n\geq d+1$
\begin{align*}
 \MoveEqLeft  \|(\cU g)_k-|\Omega|^{-1}\mathcal{F}_y(f)(\cdot,k)\|_{L^2_{\rm per}}\\
 &\lesssim \sup_{k\in \Omega^*} \sum_{G\in \mathbb L^*\setminus\{0\}} \frac{1}{|G+k|^{n}}  \int_{\R^d} \left\|(-\Delta_y)^{n/2}f(\cdot,y)\right\|_{L^2_{\rm per}} dy\\
    &\lesssim  \int_{\R^d} \left\|(-\Delta_y)^{n/2} f(\cdot,y)\right\|_{L^2_{\rm per}} dy.
\end{align*}
Furthermore, 
\begin{align*}
    \|(\cU g)_k\|_{L^2_{\rm per}}&\lesssim  \|\mathcal{F}_y(f)(\cdot,k)\|_{L^2_{\rm per}}+\int_{\R^d} \left\|(-\Delta_y)^{n/2} f(\cdot,y)\right\|_{L^2_{\rm per}} dy\\
    &\lesssim \|f\|_{L^1(\R^d;L^2_{\rm per})}+\|(-\Delta_y)^{n/2}  f\|_{L^1(\R^d;L^2_{\rm per})}\lesssim \|f\|_{W^{n,1}(\R^d;L^2_{\rm per})}.
\end{align*}
This ends the proof.
\end{proof}

\section{Scaling of the effective Hamiltonian}\label{sec:scaling}
Concerning our effective Hamiltonian, the part of $\vec{v}_\eps$ is of the form of
\begin{align*}
    \mathcal{F}^{-1}\Op(f)\mathcal{F}
\end{align*}
with some polynomial function $f$ of the form
\begin{align*}
    (k-k_0)^\beta(X-X_0)^\gamma
\end{align*}
for some $\beta,\gamma\in \N^d$. 

In this section, we consider the scaling and translation of this operator $ \mathcal{F}^{-1}\Op(f)\mathcal{F}$. This scaling and translation rely on the homogeneity of the polynomial function.
\begin{lemma}[Scaling and translation of the effective Hamiltonian]\label{lem:scaling}
Let $(k,X)\mapsto f(k,X)$ be a polynomial function such that 
\begin{align*}
    f(k,X):=f_m^{\rm hom}(k-k_0,X-X_0)
\end{align*}
with some homogeneous function $f_m^{\rm hom}(k,X)$ w.r.t $(k,X)$ of degree $m\in\N^+$. Then,
\begin{align}\label{eq:scaling}
      \big[\mathcal{F}^{-1}\Op(f)\mathcal{F} T_\eps(\vec{v})\big](x)= \eps^{\frac m2}T_\eps\big(\mathcal{F}^{-1}{\rm Op}_1(f_m^{\rm hom})\mathcal{F} \vec{v}\big)(x)
\end{align}
where ${\rm Op}_1(f_m^{\rm hom})$ is the Weyl quantization of the symbol $f_m^{\rm hom}$ with $\eps=1$ and $T_\eps$ is defined by \eqref{eq:scaling-translation}.
\end{lemma}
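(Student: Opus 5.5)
We have to prove Lemma \ref{lem:scaling}: $\mathcal{F}^{-1}\Op(f)\mathcal{F}\,T_\eps = \eps^{m/2}\,T_\eps\,\mathcal{F}^{-1}{\rm Op}_1(f_m^{\rm hom})\mathcal{F}$.

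The plan is to write $T_\eps$ as a product of three elementary unitaries and check how each acts on the phase-space variables. Concretely,
\[
T_\eps=\mathcal{E}_{k_0}\,\mathcal{T}_{\eps^{-1}X_0}\,D_\eps,
\]
where $D_\eps v(x)=\eps^{d/4}v(\sqrt\eps x)$, $\mathcal{T}_a v(x)=v(x-a)$, and $\mathcal{E}_{k_0}v(x)=e^{ik_0\cdot x}v(x)$, up to a constant unimodular phase $e^{-ik_0\cdot\eps^{-1}X_0}$ that commutes with everything. On the Fourier side:
\begin{itemize}
\item $\mathcal{E}_{k_0}$ becomes translation $k\mapsto k-k_0$;
\item $\mathcal{T}_a$ becomes multiplication by $e^{-ik\cdot a}$, which shifts the operator $i\eps\nabla_k$ by $\eps a=X_0$;
\item $D_\eps$ becomes the unitary dilation $k\mapsto k/\sqrt\eps$.
\end{itemize}

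Next I use the exact metaplectic covariance of Weyl quantization under affine symplectic maps. Conjugation by a phase-space translation gives
\[
\Op(f)\bigl(k+k_0,\;i\eps\nabla_k+X_0\bigr)=\Op\bigl(f(\cdot+k_0,\cdot+X_0)\bigr),
\]
and conjugation by a dilation gives the symbol $f(\lambda k,\lambda X)$. Both identities can be checked directly from \eqref{eq:Weyl_quantization} by a change of variables $k,k'\mapsto$ translated or scaled ones. Alternatively, since $f$ is a polynomial, it suffices to verify them on the generators $k_j$ and $i\eps\partial_{k_j}$, because the Weyl quantization of a polynomial is the symmetrized product of these.

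Applying these in order, the translations remove $k_0$ and $X_0$, leaving $\Op(f_m^{\rm hom})$. The dilation by $\sqrt\eps$ then maps $(k,\,i\eps\nabla_k)$ to $(\sqrt\eps\,k,\;\sqrt\eps\,i\nabla_k)$, i.e.\ $\Op$ with $\eps$ turns into $\sqrt\eps$ times the generators of ${\rm Op}_1$. Homogeneity of degree $m$ then gives $f_m^{\rm hom}(\sqrt\eps k,\sqrt\eps X)=\eps^{m/2}f_m^{\rm hom}(k,X)$, and this factor passes through the quantization because it is a constant.

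The main obstacle is careful bookkeeping of signs and conventions. The paper uses $\mathcal{F}(g)(k)=\int e^{-ik\cdot x}g$, so multiplication by $x$ corresponds to $i\nabla_k$, consistent with the symbol $X\leftrightarrow i\eps\nabla_k$. I need to verify that the $\eps^{-1}X_0$ shift in $x$ produces exactly $X-X_0$ after multiplication by $\eps$, and that the direction of the $k_0$ shift matches $k-k_0$ in $f$. I will check both on the linear symbols $k_j$ and $X_j$, and then extend to $f$ via the symmetrized-product representation of polynomial Weyl symbols.
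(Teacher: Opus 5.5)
Your proposal is correct, but it takes a different route from the paper.

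\textbf{The paper's route.} The paper gets the result by a single direct computation. It first computes
$\mathcal{F}(T_\eps\vec v)(k)=\eps^{-d/4}e^{-ik\cdot X_0/\eps}\mathcal{F}(\vec v)\bigl((k-k_0)/\sqrt\eps\bigr)$ and inserts this into the Weyl integral \eqref{eq:Weyl_quantization}. It then shifts $k'\mapsto k'+k_0$, $X\mapsto X+X_0$, evaluates at $\sqrt\eps k+k_0$, and rescales $k'\mapsto\sqrt\eps k'$, $X\mapsto\sqrt\eps X$. The Jacobian $\eps^{d}$ cancels $(2\pi\eps)^{-d}$, and homogeneity yields the factor $\eps^{m/2}$.

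\textbf{Your route.} You factor $T_\eps$ into modulation, translation and dilation and track the conjugated generators. Net effect: $T_\eps^{-1}(-i\nabla_x)T_\eps=k_0+\sqrt\eps(-i\nabla)$ and $T_\eps^{-1}(\eps x)T_\eps=X_0+\sqrt\eps\,x$. The conjugated generators are therefore affine in the ${\rm Op}_1$ generators. Since $(a\cdot k+b\cdot X)^n$ Weyl-quantizes to $(a\cdot\hat k+b\cdot\hat X)^n$, and such powers span all polynomials, affine substitution preserves Weyl ordering. This applies entrywise if $f$ is matrix-valued.

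\textbf{What each buys.} Your argument explains why the identity holds: it is affine covariance of Weyl quantization. It also gives the identity exactly on $\mathcal S$ without having to interpret oscillatory integrals with polynomial symbols. It does rely on $f$ being a polynomial, which the lemma assumes. The paper's computation uses only homogeneity, so it would extend formally to non-polynomial homogeneous symbols, at the cost of heavier phase bookkeeping.

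\textbf{One imprecision.} At fixed $\eps$, conjugating $\Op(f)$ by the unitary $k$-dilation gives the symbol $f(\lambda k,\lambda^{-1}X)$, not $f(\lambda k,\lambda X)$. You reach $f_m^{\rm hom}(\sqrt\eps k,\sqrt\eps X)$ under ${\rm Op}_1$ only after rewriting $\sqrt\eps\, i\nabla_k$ as $\sqrt\eps$ times the ${\rm Op}_1$ generator. Your following paragraph does exactly this, so the argument stands.
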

\begin{proof}
For any $k_0\in\R^d$,
\begin{align}\label{eq:fourier-v_epsilon-k0}
  \mathcal{F}(T_\eps(\vec{v}))\left(k\right)=\eps^{-\frac{d }{4}} e^{-i\frac{k\cdot X_0}{\eps}} \mathcal{F}(\vec{v})\left(\frac{k-k_0}{\sqrt\eps}\right).
\end{align}
Using this identity,
\begin{align*}
  \MoveEqLeft  \Op(f)\mathcal{F}(T_\eps(\vec v))(k)\\
  &= \frac{1}{(2\pi \eps)^d}\int_{\R^d\times \R^d}e^{-i\frac{(k-k')\cdot X}{\eps}}f^{\rm hom}_m\left(\frac{k+k'}{2}-k_0, X-X_0\right) \mathcal{F}(T_\eps(\vec v))(k')dk'dX\\
  &=\frac{\eps^{-\frac{d}{4}}}{(2\pi \eps)^d } \int_{\R^d\times \R^d}e^{-i\frac{(k-k')\cdot X +k'\cdot X_0}{\eps}}f_m^{\rm hom}\left(\frac{k+k'}{2}-k_0, X-X_0\right) \mathcal{F}(\vec{v})\left(\frac{k'-k_0}{\sqrt\eps}\right)dk'dX\\
   &=\frac{\eps^{-\frac{d}{4}}}{(2\pi \eps)^d }e^{-i\frac{k\cdot X_0}{\eps}}\int_{\R^d\times \R^d}e^{-i\frac{((k-k_0)-k')\cdot X }{\eps}}f^{\rm hom}_m\left(\frac{(k-k_0)+k'}{2}, X\right) \mathcal{F}(\vec{v})\left(\frac{k'}{\sqrt\eps}\right)dk'dX
\end{align*}
Thus by the homogeneity of $f_m^{\rm hom}$,
\begin{align*}
     \MoveEqLeft  \Op(f)\mathcal{F}(T_\eps(\vec v))(\sqrt\eps k+k_0)\\
     &=\frac{\eps^{-\frac{d}{4}}}{(2\pi \eps)^d }e^{-i\frac{(\sqrt{\eps}k+k_0)\cdot X_0}{\eps}}\int_{\R^d\times \R^d}e^{-i\frac{(\sqrt\eps k-k')\cdot X }{\eps}}f^{\rm hom}_m\left(\frac{\sqrt{\eps} k+k'}{2}, X\right) \mathcal{F}(\vec{v})\left(\frac{k'}{\sqrt{\eps}}\right)dk'dX\\
     &=\frac{\eps^{-\frac{d}{4}}}{(2\pi )^d }e^{-i\frac{(\sqrt{\eps}k+k_0)\cdot X_0}{\eps}}\int_{\R^d\times \R^d}e^{-i\frac{(\sqrt{\eps}k-\sqrt{\eps} k')\cdot \sqrt{\eps}X }{\eps}}f^{\rm hom}_m\left(\frac{\sqrt{\eps}k+\sqrt{\eps} k'}{2}, \sqrt{\eps}X\right) \mathcal{F}(\vec{v})\left(k'\right)dk'dX\\
      &= \eps^{\frac{m}{2}} \frac{\eps^{-\frac{d}{4}}}{(2\pi )^d }e^{-i\frac{(\sqrt{\eps}k+k_0)\cdot X_0}{\eps}}\int_{\R^d\times \R^d}e^{-i(k-k')\cdot X }f^{\rm hom}_m\left(\frac{k+ k'}{2}, X\right) \mathcal{F}(\vec{v})\left(k'\right)dk'dX\\
      &=\eps^{\frac{m}{2}}\eps^{-\frac{d}{4}} e^{-i\frac{(\sqrt{\eps}k+k_0)\cdot X_0}{\eps}} {\rm Op}_1(f^{\rm hom}_m)\mathcal{F}(\vec{v})(k).
\end{align*}
Thus,
\begin{align*}
    \Op(f)\mathcal{F}(T_\eps(\vec v))( k)=\eps^{\frac{m}{2}}\eps^{-\frac{d}{4}}e^{-i\frac{k \cdot X_0}{\eps}} {\rm Op}_1(f^{\rm hom}_m)\mathcal{F}(\vec{v})\left(\frac{k-k_0}{\sqrt{\eps}}\right).
\end{align*}
Then analogous to \eqref{eq:fourier-v_epsilon-k0}, we get \eqref{eq:scaling}.

\end{proof}

\medskip
\begin{refcontext}[sorting=nyt]
\printbibliography[heading=bibintoc, title={Bibliography}]
\end{refcontext}

\end{document}